\newif\iftr\trtrue    %%% Comment next line for the TR
\iftr \usepackage[appendix=inline]{apxproof}
\else \usepackage[appendix=strip]{apxproof}
\theoremstyle{plain}
\newtheorem{theorem}[thm]{Theorem} % Theorems must not create an own counter but use the [thm]-counter
\newtheorem{lemma}[thm]{Lemma}%[section]
\newtheorem{proposition}[thm]{Proposition}%[section]
\newtheorem{corollary}[thm]{Corollary}%[section]
\theoremstyle{definition}
\newtheorem{example}[thm]{Example}%[section]
\newtheorem{definition}[thm]{Definition}%[section]
\newtheorem{remark}[thm]{Remark}%[section] 
\newtheorem{myfact}[thm]{Fact}%[section] 
\crefname{myfact}{Fact}{fact}
\crefname{lstlisting}{Algorithm}{Algorithm}
\crefname{definition}{Definition}{Definitions}
\crefname{theorem}{Theorem}{Theorems}
\crefname{corollary}{Corollary}{Corollaires}
\crefname{proposition}{Proposition}{Propositions}
\crefname{figure}{Figure}{Figures}
\crefname{example}{Example}{Examples}
\crefname{table}{Table}{Tables}
\numberwithin{figure}{section}
\newcommand{\tnxbehapi}[1][partly]{
  Research {#1} supported by the EU H2020 RISE programme under the
  Marie Sk{\l}odowska-Curie grant agreement No 778233}
\newcommand{\tnxitmatters}[1][Work partially funded]{
   #1 by MIUR project PRIN 2017FTXR7S \emph{IT MATTERS}
  (Methods and Tools for Trustworthy Smart Systems)}
\newcommand{\ifempty}[3]{%
  \ifthenelse{\isempty{#1}}{#2}{#3}%
}
\newcommand{\sclang}[1][]{sc-language{#1}\xspace}
\newif\ifemi
\newtcolorbox{markbox}{
  enhanced,
  breakable,
  size=minimal,
  parbox=false,
  after upper={},
  before upper={},
  % after={\par},
  % before upper={\indent},
  colback=white,
  overlay = {
	 \draw[line width=2pt]
	 (frame.north east) -| ([xshift=3mm]frame.east) |-(frame.south east);
  },
  overlay first={\draw[line width=2pt] (frame.north east) -| ([xshift=3mm]frame.south east);},
  overlay middle={\draw[line width=2pt] ([xshift=3mm]frame.north east) -- ([xshift=3mm]frame.south east);},
  overlay last={\draw[line width=2pt] ([xshift=3mm]frame.north east) |- (frame.south east);},
}
\newcommand{\eMcomm}[2][check]{
  \ifthenelse{\equal{#1}{new}}{{\color{magenta}#2}}{
    \todo[color=orange!20]{\tiny eM: \color{NavyBlue}#1}
    % \begin{markbox}
    %   \fixme{}
    {\color{OliveGreen}{#2}}
    % \end{markbox}
  }
}
\definecolor{ao(english)}{rgb}{0.0, 0.5, 0.0}
\newcommand{\hidden}[1]{}
\newcommand{\hide}[1]{}
\newcommand{\cf}[2]{
  \fontsize{#1}{#1}{\selectfont{#2}}
}
\newcommand{\emi}[2]{
  \marginpar{\fcolorbox{red}{shadecolor}{\cf{#1}{{#2}}}}
}
\newcommand{\emic}[2]{\par
  \fcolorbox{red}{shadecolor}{\parbox{\linewidth}{ 
      \color{gray}
      \begin{description}
      \item[{\color{blue} #2}]{\sf #1}
      \end{description}}}
}
\newcommand{\emi}[2]{}
\newcommand{\emic}[2]{}{}
\newcommand{\Set}[1]{\{\,#1\,\}}
\def\colorFun{\color{orange}}
\newcommand{\mkfun}[4][\colorFun]{
  \newcommand{#2}[1][#4]{
    {#1\textsf{#3}}
    \ifempty{##1}{}{
      ({##1})}
  }
}
\newcommand{\proofcase}[1]{\ \newline\noindent\fbox{\scriptsize #1}}
\mkfun{\head}{hd}{}
\mkfun{\tail}{tl}{}
\newcommand{\mkuop}[4][\colorFun]{
  \newcommand{#2}[1][#4]{
    {#1\textsf{#3}}
    \ifempty{##1}{}{
      \, {##1}}
  }
}
\newcommand{\sst}{\;\big|\;}
\newcommand{\conf}[1]{\ensuremath{\langle {#1} \rangle}}
\newcommand{\emptyword}{\varepsilon}
\newcommand{\acword}{w}
\newcommand{\aaword}{v}
\newcommand{\aword}{z}
\newcommand{\qqand}[1][and]{\qquad\text{#1}\qquad}
\newcommand{\qand}[1][and]{\quad\text{#1}\quad}
\newcommand{\bnfdef}{\ ::=\ }
\newcommand{\bnfmid}{\;\ \big|\ \;}
\mkfun{\pref}{pref}{}
\newcommand{\glanguages}{\mathbb{G}}
\newcommand{\csystems}{\mathbb{S}}
\mkfun{\cl}{cl}{}
\newcommand{\aka}{\text{a.k.a.}\xspace}
\newcommand{\squo}[1]{\lq {#1}\rq}
\newcommand{\quo}[1]{\lq\lq {#1}\rq\rq}
\def\finex{{\unskip\nobreak\hfil
\penalty50\hskip1em\null\nobreak\hfil$\diamond$
\parfillskip=0pt\finalhyphendemerits=0\endgraf}}
\newcommand{\rmkend}{\finex}
\definecolor{shadecolor}{rgb}{1,0.99,0.9}
\definecolor{bg}{rgb}{0.95,0.95,0.95}
\NewDocumentCommand{\ucgreek}{m}
 {
  \str_case:nn { #1 }
   {
    {A}{\mathrm{A}}
    {B}{\mathrm{B}}
    {C}{\Sigma}
    {D}{\Delta}
    {E}{\mathrm{E}}
    {F}{\Phi}
    {G}{\Gamma}
    {H}{\mathrm{H}}
    {I}{\mathrm{I}}
    {J}{\Theta}
    {K}{\mathrm{K}}
    {L}{\Lambda}
    {M}{\mathrm{M}}
    {N}{\mathrm{N}}
    {O}{\mathrm{O}}
    {P}{\Pi}
    {Q}{\mathrm{X}}
    {R}{\mathrm{P}}
    {S}{\Sigma}
    {T}{\mathrm{T}}
    {U}{\Upsilon}
    %{V}{}
    {W}{\Omega}
    {X}{\Xi}
    {Y}{\Psi}
    {Z}{\mathrm{Z}}
   }
 }
\NewDocumentCommand{\lcgreek}{m}
 {
  \str_case:nn { #1 }
   {
    {a}{\alpha}
    {b}{\beta}
    {c}{\varsigma}
    {d}{\delta}
    {e}{\varepsilon}
    {f}{\varphi}
    {g}{\gamma}
    {h}{\eta}
    {i}{\iota}
    {j}{\vartheta}
    {k}{\kappa}
    {l}{\lambda}
    {m}{\mu}
    {n}{\nu}
    {o}{o}
    {p}{\pi}
    {q}{\chi}
    {r}{\rho}
    {s}{\sigma}
    {t}{\tau}
    {u}{\upsilon}
    %{v}{}
    {w}{\omega}
    {x}{\xi}
    {y}{\psi}
    {z}{\zeta}
   }
 }
\newcommand{\sset}{\mathcal{S}}
\newcommandx{\intpar}[2][1={\aint},2={\aint[b]},usedefault=@]{{#1}\parallel{#2}}
\def \bfr {\begin{color}{magenta}} 
\def \efr {\end{color}}
\def \beM {\begin{color}{orange}} 
\def \eeM {\end{color}}
\newcommand{\restrict}[2]{
  \protect\ifempty{#1}{\_}{
	 \protect\StrCount{#1,}{,}[\l]
	 \protect\ifthenelse{\l = 1}{
		#1
	 }{
		\left(\protect\dolist{\cat}{#1}\right)
	 }
  }{\left|_{#2}\!\right.}
}
\renewcommand{\epsilon}{\varepsilon}
\newcommand{\Nat}{\mathbb{N}}
\newcommand{\chora}[1][A]{\mathbb{C}\mathrm{#1}}
\newcommand{\proj}[2]{#1{\downarrow}_{{\ptp[#2]}}}
\newcommand{\arro}[1]{\xrightarrow{#1}}
\newcommand{\ssem}[1]{{\llbracket #1 \rrbracket}}
\newcommand{\comm}{\sim}
\newif\ifcp
\newcommand{\gname}[1][i]{\ifcp{\colorNode{\scriptstyle\textsf{#1}}}\else{}\fi}
\newif\ifguard
\newcommand{\aguard}{\ifguard{\colorGuard \phi}\else{}\fi}
\def\colorGuard{\color{cyan}}
\def\colorPtp{\color{blue}}
\def\colorFun{\color{NavyBlue}}
\def\colorOp{\color{OliveGreen}}
\def\colorNode{\color{LightCoral}}
\def\colorR{\color{OliveGreen}}
\def\colorE{\color{orange}}
\def\colorMsg{\color{BrickRed}}
\def\colorLang{\color{Green}}
\newcommand{\fillcolor}{orange!5}
\newcommand{\msg}[1][m]{\mathsf{\colorMsg{#1}}}
\newcommand{\lset}{\mathcal{L}}
\newcommand{\ptp}[1][A]{\ensuremath{\mathsf{\colorPtp{\capitalisewords{#1}}}}}
\newcommand{\p}{\ptp}
\newcommand{\q}{{\ptp[B]}}
\newcommand{\aint}[1][a]{\lcgreek{#1}}
\newcommand{\aact}[1][a]{\mathsf{#1}}
\newcommand{\sndint}[1][{\gint[]}]{\mathrm{sdr}(#1)}
\newcommand{\rcvint}[1][{\gint[]}]{\mathrm{rcv}(#1)}
\mkfun{\msgof}{msg}{{\gint[]}}
\mkfun{\ptpof}{ptp}{{\gint[]}}
\newcommandx{\ggcommon}[3][1=\ptp,2={\aH},3={\aH'},usedefault=@]{f_{#1}}
\newcommandx{\opair}[2][1={\ae},2={\ae'},usedefault=@]{\conf{{#1},{#2}}}
\newcommandx{\hopair}[2][1={\aE},2={\aE'},usedefault=@]{\llparenthesis\, {#1},{#2}\, \rrparenthesis}
\newcommandx{\wf}[2][1={\aG},2={\aG'},usedefault=@]{wf({#1}, {#2})}
\newcommandx{\wb}[2][1={\aG},2={\aG'},usedefault=@]{wb({#1}, {#2})}
\newcommandx{\ws}[2][1={\aG},2={\aG'},usedefault=@]{ws({#1}, {#2})}
\newcommandx{\widx}[2][1={\aW},2={i},usedefault=@]{{#1}[{#2}]}
\newcommandx{\outop}[2][1=\gname,2={}]{{\colorOp{!}}^{{#1}{#2}}}
\newcommandx{\inop}[2][1=\gname,2={}]{{\colorOp{?}}^{{#1}{#2}}}
\newcommandx{\aout}[5][1=a,2=b,3={},4=m,5={},usedefault=@]{
  \scalebox{.8}{$\achan[#1][#2] \outop[{#3}] {\msg[#4]} {#5}$}
}
\newcommandx{\ain}[5][1={\p},2={\q},3=\gname,4=m,5={},usedefault=@]{
  \scalebox{.8}{$\achan[#1][#2] \inop[{#3}] {\msg[{#4}]}{#5}$}
}
\newcommandx{\adep}[1][1={}]{
  \conf{ \aout[@][@][@][@][{#1}], \ain[@][@][@][@][{#1}]}
}
\newcommandx{\hproj}[2][1=\aH, 2=\ptp, usedefault=@]{
  \ifempty{#1}{}{{#1}}\ifempty{#2}{}{{^{\scriptscriptstyle @{#2}}}}
}
\newcommandx{\eproj}[2][1=\aE,2=A, usedefault=@]{
  {{#1}}\ifempty{#2}{}{{^{\scriptscriptstyle @{{\ptp[#2]}}}}}
}
\tikzset{
  component/.style={
    draw,
    fill = white,
    minimum width = 1.5cm,
    minimum height = .5cm,
    drop shadow
  }
}
\tikzset{
  file/.style={
    thin,
    fill = blue!5,
    font = \tt\scriptsize,
    % label = center:{\includegraphics[width=0.2cm]{file-icon}},
    text width = .8cm,
    minimum width = 1.0cm,
    minimum height = .5cm,
    drop shadow
  }
}
\tikzset{
  dataflow/.style={
    thick,
    draw, ->, >=latex,
    dashed,
    OliveGreen
  }
}
\tikzset{
  pipeline/.style={
    thick,
    draw, ->, >=latex,
    double,
    red
  }
}
\tikzset{
  CA/.style={
    transform shape,
	 node distance = 1.9cm,
    every state/.style = {cnode},
	 every edge/.style = {carrow}
  }
}
\tikzset{
  pomsetcloud/.style={
    cloud,
	 cloud puffs=10,
	 cloud ignores aspect,
	 minimum height=.1cm,
	 minimum width=2cm,
	 fill=blue!10,
	 opacity=.5,
	 draw
  }
}
\newcommand{\apom}{r}
\newcommand{\alfab}{\Sigma}
\newcommand{\alfint}{\alfab_{\text{int}}}
\newcommand{\alfact}{\alfab_{\text{act}}}
\newcommand{\aR}[1][R]{{\colorR{#1}}}
\newcommand{\aConf}{s}
\newcommandx{\detM}[1][1=\aCM,usedefault=@]{\Delta({#1})}
\newcommand{\cssem}[1]{(\hspace{-2.2pt}|#1|\hspace{-2.2pt})}
\renewcommand{\lset}{\mathbf{\Lambda}}
\newcommand{\mlang}{\mathbf{L}}
\tikzset{
  cnode/.style={
    shape=circle,
    minimum size = 0mm,
    inner sep = 1pt,
    font=\tiny,
    draw
  },
  carrow/.style={
    ->,
    shorten >=1pt,
    >=stealth',
    auto,
    font=\scriptsize,
    draw,
    sloped
  }
}
\newcommandx{\choranimation}[2][1=1,2=1,usedefault=@]{
  \begin{overlayarea}{#1\linewidth}{#2\textheight}
    \begin{tikzpicture}[
      node distance=1cm and 2cm,
      scale=0.35,
      every node/.style={transform shape},
      font=\large
      ]
      \node [choreo, align=center] (global){Choreography $\aG$ \\ global viewpoint};
      \node [below= of global] (fake) {};
      \node<2-> (synctxt) at (9,0)  {\textcolor{NavyBlue}{\bf Synchrony}};
      \node [choreo, align=center, local, below =of fake] (typei) {$\aCM_1$ \\ Local viewpoint$_1$};
      \node [choreo, align=center, local, left =of typei] (type1) {$\aCM_i$ \\ Local viewpoint$_1$};
      \node [choreo, align=center, local, right=of typei] (typen) {$\aCM_n$ \\ Local viewpoint$_n$};
      \node<2-> (asynctxt) at (9,-4) {\textcolor{NavyBlue}{\bf Asynchrony}};
      \node<2-> [below=of type1] (fake1) {};
      \node<2-> [below=of typei] (fakei) {};
      \node<2-> [below=of typen] (faken) {};
      \node<3-> [process, below=of fake1] (proc1) {Component$_1$};
      \node<3-> [process, below=of fakei] (proci) {Component$_1$};
      \node<3-> [process, below=of faken] (procn) {Component$_n$};
      \node<4-> [process, right=of procn,xshift=4cm] (evolve1) {Component'$_1$};
      \node<4-> [process, right=of evolve1] (evolvei) {Component'$_i$};
      \node<4-> [process, right=of evolvei] (evolven) {Component'$_n$};
      \node<5-> [choreo, align=center, local, above=of evolve1, yshift=1.5cm] (t11) {New $\aCM'_1$ \\ Local viewpoint$_1$};
      \node<5-> [choreo, align=center, local, above=of evolvei, yshift=1.5cm] (t1i) {New $\aCM'_i$ \\ Local viewpoint$_i$};
      \node<5-> [choreo, align=center, local, above=of evolven, yshift=1.5cm] (t1n) {New $\aCM'_n$ \\ Local viewpoint$_n$};
      \node<6> [above=of t1i, yshift=1.5cm] (qm) {\Huge \textcolor{red}{ ??? }};
      \node<6-> [above=of t1i] (dummy) {};
      \node<7-> [choreo,align=center,above=of dummy,scale=.85] (global') {New choreography $\aG'$ \\ global viewpoint};
      \path<3-> [bigar,->,dashed,gray] (type1) edge[sloped,above] node {Validate} (proc1);
      \path<3-> [bigar,->,dashed,gray] (typei) edge[sloped,above] node {Validate} (proci);
      \path<3-> [bigar,->,dashed,gray] (typen) edge[sloped,above] node {Validate} (procn);
      \path<2-> [bigar] (global) edge[sloped,above] node {\color{OliveGreen}Project} (type1);
      \path<2-> [bigar] (global) edge[sloped,above] node {\color{OliveGreen}Project} (typei);
      \path<2-> [bigar] (global) edge[sloped,above] node {\color{OliveGreen}Project} (typen);
      \path<2-> [elli] (type1) -- (typei);
      \path<2-> [elli] (typei) -- (typen);
      \path<3-> [elli] (proc1) -- (proci);
      \path<3-> [elli] (proci) -- (procn);
      \path<4-> [bigar,blue,dotted] (procn) edge node [above] {evolve/replace/compose} (evolve1);      
      \path<4-> [elli] (evolve1) -- (evolvei);
      \path<4-> [elli] (evolvei) -- (evolven);
      \path<5-> [elli] (t11) -- (t1i);
      \path<5-> [elli] (t1i) -- (t1n);
      \path<5-> [bigar,->,dashed,gray] (evolve1) edge[sloped,above] node {Extract} (t11);
      \path<5-> [bigar,->,dashed,gray] (evolvei) edge[sloped,above] node {Extract} (t1i);
      \path<5-> [bigar,->,dashed,gray] (evolven) edge[sloped,above] node {Extract} (t1n);
      \path<7-> [bigar,-] (t11) -- (dummy);
      \path<7-> [bigar,->] (t1i) edge node[right,xshift=1em] {\color{OliveGreen}Synthesise} (global');
      \path<7-> [bigar,-] (t1n) -- (dummy);
    \end{tikzpicture}
  \end{overlayarea}
}%%%end choranimation
\newcommandx{\cm}[2][1=\ptp, 2=\aM]{{#2}_{#1}}
\newcommandx{\achan}[2][1=A,2=B,usedefault=@]{{\ptp[#1]}{\,}{\ptp[#2]}}
\newcommand{\ptpset}{\mathcal{\colorPtp{P}}}
\newcommand{\oact}{\outop[]}
\newcommand{\iact}{\inop[]}
\newcommand{\tset}{\to}
\newcommand{\RS}[1][]{\mathsf{R}({#1})}
\newcommand{\trans}[2][{}]{\,\xrightarrow{#2}_{#1}\,}
\newcommandx{\acfsmout}[3][1=A,2=B,3=m,usedefault=@]{\achan[{#1}][{#2}] \oact {\msg[{#3}]}}
\newcommandx{\acfsmin}[3][1=A,2=B,3=m,usedefault=@]{\achan[{#1}][{#2}] \iact {\msg[{#3}]}}
\newcommandx{\fsaout}[2][1={\p},2={},usedefault=@]{
  \ptp[#1] \ \outop[]\ \msg[{#2}]
}
\newcommandx{\fsain}[2][1={\p},2={},usedefault=@]{
  \ptp[#1] \ \inop[]\ \msg[{#2}]
}
\newcommand{\linenumfontsize}{\@setfontsize{\linenumfontsize}{3pt}{3pt}}
\lstdefinelanguage{sys}{
	commentstyle=\color{Gray},
%	morecomment=[l]{..},
	morecomment=[s]{[}{]},
   % keywordsprefix={receive},
	keywords=[0]{system,of,do,end},	keywordstyle=\color{orange}\bfseries,
%	morekeywords=[1]{\*,\+,\!,\?,\|,\=},	keywordstyle=[1]\color{blue}\bfseries,
}
\lstdefinelanguage{sgg}{
  commentstyle=\color{Gray},
  morecomment=[l]{..},
  morecomment=[s]{[}{]},
  keywords=[0]{repeat,branch,sel},
  keywordstyle=\color{orange}\bfseries,
  morekeywords=[1]{*,\+,|,->},
  literate={->}{$\colorOp \xrightarrow$}1 {|}{$\gparop$}1 {;}{$\gseqop$}1 {+}{$\gchoop$}1 {\{}{{\textcolor{NavyBlue}{\{}}}1 {\}}{{\textcolor{NavyBlue}{\}}}}1
}
\newcommand{\aG}{\mathsf{G}}
\newcommand{\gseqop}{{\colorOp ;}\,}
\newcommand{\gparop}{{\colorOp \ |\ }}
\newcommand{\gchoop}{{\colorOp \ +\ }}
\newcommand{\grecop}{{\colorOp *}}
\newcommand{\grecopp}{{\colorOp{@}}}
\newcommandx{\nmerge}[2][1={i},2={},usedefault=@]{
  \ifempty{#2}{
    \ifempty{#1}{\mu}{\gname[-{#1}]}
  }{-{#2}}
}
\mkfun{\cuui}{cui}{\clang}
\mkfun{\esbj}{sbj}{\ae}
\newcommandx{\gnode}[2][1=i,2=\gint,usedefault=@]{
  \ifcp{
    \ifempty{#1}{#2}{\gname[#1].\big({#2}\big)}
  }
  \else
  {#2}
  \fi
}
\newcommandx{\gint}[4][1=i,2=A,3=m,4=B,usedefault=@]{
  \scalebox{.8}{$
	 \ptp[#2] {\colorOp \xrightarrow{\scriptstyle \gname[#1]}} \ptp[#4] \! {\colorOp \colon} \! \!{\msg[{#3}]}
  $}
}
\newcommandx{\gout}[4][1=\gname,2=\ptp,3=m,4={\ptp[C]},usedefault=@]{
  \achan[{#2}][{#4}] {\colorOp {\colorOp{!}}} {\msg[{#3}]}
}
\newcommandx{\gin}[4][1=\gname,2=\ptp,3=m,4={\ptp[C]},usedefault=@]{
  \achan[{#2}][{#4}] {\colorOp {\colorOp{?}}} {\msg[{#3}]}
}
\newcommandx{\gseq}[3][1=\gname,2={\aG},3={\aG'},usedefault=@]{
  \def\ggraph{{#2} \gseqop {#3}}
  \ggraph
}
\newcommand{\ginfix}[4]{
  \def\ggraph{{#2} {#4} {#3}}
  \gnode[#1][\ggraph]
}
\newcommandx{\gpar}[3][1=i,2={\aG},3={\aG'},usedefault=@]{
  \ginfix{#1}{#2}{#3}{\gparop}
}
\newcommandx{\gcho}[3][1=i,2={\aG},3={\aG'},usedefault=@]{
  \ginfix{#1}{#2}{#3}{\gchoop}
}
\newcommandx{\gchov}[3][1=\gname,2={\aG},3={\aG'},usedefault=@]{
  \def\ggraph{\left(
  \begin{array}l
    \ifempty{#1}{{#2} \\ \gchoop \\ {#3}}{\!\!{#2} \\ \gchoop \\ {#3}}
  \end{array}\right)}
  \ifcp\gnode[{$#1$}][\ggraph] \else \ggraph \fi
}
\newcommandx{\grec}[3][1=i,2={\aG},3={\p},usedefault=@]{
  \def\ggraph{\grecop {#2} \ifempty{#3}{}{\grecopp {#3}}}
  \ifempty{#1}{\ggraph}{\gname[{$#1$}][\ggraph]}
}	
\newcommand{\getcentroid}[2]{
    \coordinate (tmpgatecoord) at (0,0);
    \foreach \n [count=\i] in {#1}{
      \path (\n);
      \coordinate (tmpgatecoord) at ($(tmpgatecoord) + (\n)$);
      \coordinate (#2) at ($1/\i*(tmpgatecoord)$);
      %\node at (#2) {#2};
    }
}
\tikzset{
  hgsem/.style={
    draw,
    node distance=2cm and 1cm,
    transform shape,
    smooth,
    every node/.style = {font=\sffamily\bfseries}
  }
}
\tikzset{
  hgstyle/.style={
    src color={#1},
    tgt color={#1},
    centroid color={#1},
    centroid label={#1},
    centroid name={#1},
    centroid radius={#1},
    centroid ratio={#1},
    xoffset={#1},
    yoffset={#1},
    xsrcoffset={#1},
    ysrcoffset={#1},
    xtgtoffset={#1},
    ytgtoffset={#1},
    font={#1},
    centroid angle={#1},
    centroid tolerance={#1}
  },
  src color/.store in = \hgsrccol,
  tgt color/.store in = \hgtgtcol,
  centroid color/.store in =\hgfillcolor,
  centroid label/.store in =\hglabel,
  centroid name/.store in =\hgname,
  centroid radius/.store in = \hgradius,
  centroid ratio/.store in = \hgratio,
  xoffset/.store in =\hgxoffset,
  yoffset/.store in =\hgyoffset,
  xsrcoffset/.store in =\hgxsrcoffset,
  ysrcoffset/.store in =\hgysrcoffset,
  xtgtoffset/.store in =\hgxtgtoffset,
  ytgtoffset/.store in =\hgytgtoffset,
  centroid angle/.store in =\hgangle,
  centroid tolerance/.store in =\hgtolerance,
  src color = black,
  tgt color = black,
  centroid color = orange!40,
  centroid label={},
  centroid name={dummycentroid},
  centroid radius = .7pt,
  centroid ratio = .35,
  xoffset = 0,
  yoffset = 0,
  xsrcoffset = 0,
  ysrcoffset = 0,
  xtgtoffset = 0,
  ytgtoffset = 0,
  font=\sffamily\scriptsize,
  centroid angle=0,
  centroid tolerance=10pt
}
\newcommandx{\mkhg}[5][1={},4={},5={},usedefault=@]{
  \begingroup
  \tikzset{#1}
  \StrCount{#2,}{,}[\l] % from package xxstring
  \StrCount{#3,}{,}[\m] % from package xxstring
  \ifthenelse{\l = 1 \AND \m = 1}{
    \ifempty{#4}{
      \ifempty{#5}{
        \path[hgsem, ->, >=stealth', shorten >=1pt] (#2) -- (#3);
      }{
        \path[hgsem, ->, >=stealth', shorten >=1pt] (#2) #5 (#3);
      }
    }{
      \ifempty{#5}{
        \path[hgsem, ->, >=stealth', shorten >=1pt, #4] (#2) -- (#3);
      }{
        \path[hgsem, ->, >=stealth', shorten >=1pt, #4] (#2) #5 (#3);
      }
    }
  }{
    \coordinate (srcoffset) at (\hgxsrcoffset,\hgysrcoffset);
    \coordinate (tgtoffset) at (\hgxtgtoffset,\hgytgtoffset);
    \getcentroid{#2}{srccentroid};
    \getcentroid{#3}{tgtcentroid};
    \node[label={left:\hglabel}] (\hgname) at ($(srccentroid)!{1-\hgratio}!\hgangle:(tgtcentroid) + (\hgxoffset,\hgyoffset)$) {};
    \pgfgetlastxy \xc \yc;
    \pgfmathtruncatemacro{\xcontrol}{\xc};
    \pgfmathtruncatemacro{\ycontrol}{\yc};
    \foreach \n in {#2}{
      \path (\n);
      \pgfgetlastxy \xntmp \yntmp;
      \pgfmathtruncatemacro{\xn}{\xntmp};
      \pgfmathtruncatemacro{\yn}{\yntmp};
      \pgfmathsetmacro\xtmpdiff{abs(\xn - \xcontrol + \hgxsrcoffset)};
      \pgfmathsetmacro\ytmpdiff{abs(\yn - \ycontrol + \hgytgtoffset)};
      \ifdim \xtmpdiff pt > \hgtolerance
      \ifempty{#4}{
        \path[hgsem, \hgsrccol] (\n) .. controls ($(srccentroid.center) + (srcoffset)$) .. (\hgname.center);
      }{
        \path[hgsem, \hgsrccol] (\n) .. controls ($(srccentroid.center) + (srcoffset)$) .. (\hgname.center);
      }
      \else
      \ifempty{#4}{
        \path[hgsem, \hgsrccol] (\n) -- (\hgname.center);
      }{
        \path[hgsem, \hgsrccol, #4] (\n) -- (\hgname.center);
      }
      \fi
    }
    \foreach \n in {#3}{
      \path (\n);
      \pgfgetlastxy \xntmp \yntmp;
      \pgfmathtruncatemacro{\xn}{\xntmp};
      \pgfmathtruncatemacro{\yn}{\yntmp};
      \pgfmathsetmacro\xtmpdiff{abs(\xn - \xcontrol)};
      \pgfmathsetmacro\ytmpdiff{abs(\yn - \ycontrol)};
      \ifdim \xtmpdiff pt > \hgtolerance
      \ifempty{#4}{
        \path[hgsem, ->, >=stealth', shorten >=1pt, \hgtgtcol] (\hgname.center) .. controls (tgtcentroid.center) and ($(tgtcentroid.center) + (tgtoffset)$) .. (\n);
      }{
        \path[hgsem, ->, >=stealth', shorten >=1pt, \hgtgtcol,#4] (\hgname.center) .. controls (tgtcentroid.center) and ($(tgtcentroid.center) + (tgtoffset)$) .. (\n);
      }
      \else
      \ifempty{#4}{
        \path[hgsem, ->, >=stealth', shorten >=1pt, \hgtgtcol] (\hgname.center) --  (\n);
      }{
        \path[hgsem, ->, >=stealth', shorten >=1pt, \hgtgtcol] (\hgname.center) --  (\n);
      }
      \fi
    }
    \fill[\hgfillcolor] (\hgname) circle [radius=\hgradius];
  }
  \endgroup
}
\newcommandx{\hgordeq}[1][1={\aH},usedefault=@]{\sqsubseteq_{#1}}
\newcommandx{\gintsem}[4][4=.5]{
  \tikz[hgsem,scale=#4,every node/.style={font=\scriptsize}]{
    \node (out) {$\aout[{#1}][{#2}][][{#3}]$};
    \node[below = 20pt of out] (in) {$\ain[{#1}][{#2}][][{#3}]$};
    \mkhg{out}{in};
  }
}
\newcommandx{\gsem}[2][1={\aG},2={},usedefault=@]{\left\llbracket {#1} \right\rrbracket_{#2}}
\newcommandx{\rbot}{\text{undef}}
\newcommandx{\rtrs}[1][1={\aH},usedefault=@]{{#1}^{\star}}
\newcommandx{\gord}[1][1={\aG},usedefault=@]{\leq_{#1}}
\newcommandx{\gordeq}[1][1={\aG},usedefault=@]{\leq_{#1}}
\mkfun{\cause}{cs}{}
\mkfun{\effect}{ef}{}
\newcommandx{\aW}{w}
\newcommand{\clang}{\mathcal{\colorLang L}}
\newcommand{\alang}{\mathbb{\colorLang L}}
\newcommand{\gfun}[1]{\ensuremath{\mathsf{\colorFun #1}}}
\mkfun{\eact}{\gfun{act}}{}
\mkfun{\enode}{\gfun{cp}}{}
\mkuop{\rmax}{\gfun{max}}{\aH}
\mkuop{\rmin}{\gfun{min}}{\aH}
\mkuop{\rMAX}{\gfun{lst}}{\aH}
\mkuop{\rMIN}{\gfun{fst}}{\aH}
\newcommandx{\rseq}[2][1=\aG,2={\aG'},usedefault=@]{\gfun{seq}({#1},{#2})}
\newcommandx{\rpar}[2][1=\aG,2={\aG'},usedefault=@]{\gfun{par}({#1},{#2})}
\newcommandx{\gproj}[2][1=\aG,2=\ptp]{{#1}\downarrow_{#2}}
\newcommandx{\cinit}[1][1={\aQzero},usedefault=@]{{#1}}
\newcommandx{\cfinal}[1][1={q_e},usedefault=@]{{#1}}
\newcommandx{\geproj}[4][1=\aG,2=\ptp,3=\cinit,4=\cfinal,usedefault=@]{
  {#1}\downarrow_{#2}^{{#3},{#4}}
}
\newcommand*{\StrikeThruDistance}{0.15cm}%
\tikzset{strike thru arrow/.style={
    decoration={markings, mark=at position 0.5 with {
        \draw [blue, thick,-] 
            ++ (-\StrikeThruDistance,-\StrikeThruDistance) 
            -- ( \StrikeThruDistance, \StrikeThruDistance);}
    },
    postaction={decorate},
}}
\newcommandx{\ich}[1][1={\aG},usedefault=@]{{#1}^{\oplus}}
\newcommandx{\ichedges}[2][1={\aG},2={\gname},usedefault=@]{{#1}^{\oplus}({#2})}
\newcommandx{\parts}[1]{2^{#1}}
\newcommandx{\actch}{c}
\newcommandx{\soundactch}[2][1={\aG},2={\actch},usedefault=@]{{#1} \,\circledR\, {#2}}
\newcommandx{\rOnActch}[2][1={\aG},2={\actch},usedefault=@]{{#1} \setminus {#2}}
\newcommandx{\rOnActchClean}[2][1={\aG},2={\actch},usedefault=@]{{#1} \circledR {#2}}
\newcommandx{\rAllEvents}[1][1={\aG},usedefault=@]{\mathit{dom}(#1)}
\newcommand{\AV}{\mathcal{V}}
\newcommand{\aH}{H}
\newcommandx{\hgvertex}[2][1=\al,2=\gname,usedefault=@]{{#1}_{\textcolor{red}{[{#2}]}}}
\newcommand{\aE}{{\colorE E}}
\renewcommand{\ae}[1][e]{{\colorE{#1}}}
\newcommand{\al}[1][l]{{\colorE{#1}}}
\newcommandx{\hyedge}[1]{\{#1\}}
\newcommandx{\rdiv}[2][1=\gcho,2=\ptp,usedefault=@]{
  \gfun{div}_{#2}(#1)
}
\newcommandx{\rrdiv}[5][1={\aG},2={\aG'},3={\AV},4={,\AV'},5=\ptp,usedefault=@]{
  \gfun{div}^{#3#4}_{#5}(#1,#2)
%#1 \xhookleftarrow{#3} \!\!{_{#5}} \!\!\xhookrightarrow{#4} #2
}
\newcommandx{\pdiv}[3][1={\apom_1},2={\apom_2},3={\apom},usedefault=@]{
  \gfun{div}_{#3}(#1,#2)
%#1 \xhookleftarrow{#3} \!\!{_{#5}} \!\!\xhookrightarrow{#4} #2
}
\newcommandx{\pfork}[3][1={\apom_1},2={\apom_2},3={\apom},usedefault=@]{
  \gfun{fork}_{#3}(#1,#2)
%#1 \xhookleftarrow{#3} \!\!{_{#5}} \!\!\xhookrightarrow{#4} #2
}
\tikzset{
  pomset/.style={
    node distance = .6cm and .6cm,
    scale = .7,
    transform shape,
    smooth
  }
}
\newcommandx{\mkint}[6][3=i,4=\p,5=\msg,6=\q,usedefault=@]{
%  \node[bblock\ifempty{#1}{}{,{#1}}] \ifempty{#1}{at (0,0)}{} (#2) {$\gint[#3][#4][#5][#6]$};
  \node[bblock, #1] (#2) {$\gint[#3][#4][#5][#6]$};
}
\newcommand{\mkseq}[2]{\path[line] (#1) -- (#2);}
\newcommand{\mknseq}[1]{
  \StrCount{#1}{,}[\l] % from package xxstring
  \StrBefore{#1}{,}[\myhead]
  \StrBehind{#1}{,}[\mytail]
  \StrBefore{\mytail}{,}[\sndel]
  \ifnum \l > 1 {
    \mkseq{\myhead}{\sndel};
    \mknseq{\mytail}
  }
  \else{\ifnum \l > 0{
      \mkseq{\myhead}{\mytail};
    }
    \else{}
    \fi
  }
  \fi
}
\newcommandx{\mkgateblock}[6][6=yellow!10,usedefault=@]{
  % #1: gate
  % #2: position of opening gate
  % #3: comma-separated list of children of opening gate
  % #4: comma-separated list of parents of closing gate
  % #5: position of closing gate
  % #6: background color
  \path(#2);
  \pgfgetlastxy{\xgate}{\ygate};
  \pgfmathtruncatemacro{\xgateround}{\xgate};
  \StrCount{#3,}{,}[\l] % from package xstring
  \ifnum \l < 2 {\errmessage{#3 argument should be a comma-separated list of lenght >= 2}}
  \else{
    \foreach \n in {#3}{
      \path (\n);
      \pgfgetlastxy{\xnode}{\ynode};
      \pgfmathtruncatemacro{\xnround}{\xnode};
      \pgfmathsetmacro\tmpdiff{abs(\xnround - \xgateround)}
      \ifdim \tmpdiff pt > 1 pt \path[line] (#2) -| (\n);
      \else
        \path[line] (#2) -- (\n);
      \fi
    }
  }
  \fi
  \StrCount{#4,}{,}[\l] % from package xstring
  \ifnum \l < 2 {\errmessage{#4 argument should be a comma-separated list of lenght >= 2}}
  \else{
    \foreach \n in {#4}{
      \path (\n);
      \pgfgetlastxy{\xnode}{\ynode};
      \pgfmathtruncatemacro{\xnround}{\xnode};
      \pgfmathsetmacro\tmpdiff{abs(\xnround - \xgateround)}
      \ifdim \tmpdiff pt > 1 pt \path[line] (\n) |- (#5);
      \else
        \path[line] (\n) -- (#5);
      \fi
    }
  }
  \fi
  \node[#1] at (#2) {};
  \node[#1] at (#5) {};
  {
    \begin{pgfonlayer}{background}
      \path[fill=#6,rounded corners]
      (current bounding box.south west) rectangle
      (current bounding box.north east);
    \end{pgfonlayer}
  }
}
\newcommandx{\mkbranchblock}[5][5=@]{
  \mkgateblock{ogate}{#1}{#2}{#3}{#4}[#5]
}
\newcommandx{\mkforkblock}[5][5=@]{
  \mkgateblock{agate}{#1}{#2}{#3}{#4}[#5]
}
\newcommandx{\mkgraph}[3][1=.5cm, usedefault=@]{
  \node[source,above = #1 of {#2}] (src#2) {};
  \node[sink,below  = #1 of {#3}] (sink#3) {};
  \path[line] (src#2) -- (#2);
  \path[line] (#3) -- (sink#3);
}
\newcommandx{\mkloop}[5][1=.5, 2=1.5, 5=\aguard, usedefault=@]{
  % 1: gate position
  % 2: offset back arrow
  % 3: loop start reference
  % 4: loop end reference
  % 6: monitoring guards
  %
  \node[lgate,above = #1 of {#3}] (entry#3) {};
  \pgfgetlastxy \xentry \yentry;
  \pgfmathtruncatemacro{\xentryrounded}{\xentry};
  \node[below = #1 of {#4}, label = {above right:{$#5$}},yshift=-1em] (dummy) {};
  \node[lgate,below  = #1 of {#4}] (exit#4) {};
  \pgfgetlastxy \xexit \yexit;
  \pgfmathtruncatemacro{\xexitrounded}{\xexit};
  \path[line] (entry#3) -- (#3);
  \path[line] (#4) -- (exit#4);
  \pgfmathsetmacro\tmpdiff{abs(\xentryrounded - \xexitrounded)}
  \path[line, color=teal] (exit#4) -| ($(exit#4)+(\tmpdiff,0)+(#2,0)$) |- (entry#3);
}
\newcommandx{\mkfork}[4][2=gatenode,3=i,4=.6,usedefault=@]{
  \mkgatebegin{#1}[{\gname[{#3}]}][agate][#4]{#2}
}
\newcommandx{\mkbranch}[4][2=gatenode,3=i,4=.6,usedefault=@]{
  \mkgatebegin{#1}[{\gname[{#3}]}][ogate][#4]{#2}
}
\newcommandx{\mkgatebegin}[5][2={},3=ogate,4=.5,usedefault=@]{
  % #1: list of nodes
  % #2: control point
  % #3: gate type
  % #4: vertical position offset
  % #5: name of the gate node
  %
  \coordinate (gatecord) at (0,0);
  \coordinate (xmax) at (0,0);
  \coordinate (xmin) at (0,0);
  \pgfgetlastxy \xmin \xmax;
  \foreach \n [count=\i] in {#1}{
    \pgfgetlastxy \xc \yc;
    \path (\n);
    \pgfgetlastxy \xn \yn;
    \ifnum \i = 1
      \coordinate (xmin) at (\xn,0);
      \coordinate (xmax) at (\xn,0);
      \coordinate (max) at (0,\yn);
    \else
      \ifdim \xn < \xmin
        \coordinate (xmin) at (\xn,0);
      \fi
      \ifdim \xn > \xmax
        \coordinate (xmax) at (\xn,0);
      \fi
      \ifdim \yn < \yc
        \coordinate (max) at (0,\yc);
      \else
        \coordinate (max) at (0,\yn);
      \fi
    \fi
  }
  \coordinate (gatecord) at ($(xmin)!.5!(xmax) + (max) + (0,#4) + (max)$);
  \node[#3,label={below:$#2$}] (#5) at (gatecord) {};
  \pgfgetlastxy{\xgate}{\ygate};
  \pgfmathtruncatemacro{\xgateround}{\xgate};
  \StrCount{#1,}{,}[\l] % from package xxstring
  \ifnum \l < 2 {\errmessage{#1 argument should be a comma-separated list of lenght >= 2}}
  \else{
    \foreach \n in {#1}{
      \path (\n);
      \pgfgetlastxy{\xnode}{\ynode};
      \pgfmathtruncatemacro{\xnround}{\xnode};
      \pgfmathsetmacro\tmpdiff{abs(\xnround - \xgateround)}
      \ifdim \tmpdiff pt > 1 pt \path[line] (#5) -| (\n);
      \else
        \path[line] (#5) -- (\n);
      \fi
    }
  }
  \fi
}
\newcommandx{\mkgatebeginold}[5][2={},3=ogate,4=.5,usedefault=@]{
  % #1: list of nodes
  % #2: control point
  % #3: gate type
  % #4: vertical position offset
  % #5: name of the gate node
  %
  \coordinate (gatecord) at (0,0);
  \foreach \n [count=\i] in {#1}{
    \pgfgetlastxy \xc \yc;
    \path (\n);
    \pgfgetlastxy \xn \yn;
    \coordinate (gatecord) at ($(gatecord) + (\xn,0)$);
    \coordinate (gatecord) at ($1/\i*(gatecord)$);
    \ifdim \yn < \yc
    \node (max) at (0,\yc) {};
    \else
    \node (max) at (0,\yn) {};
    \fi
  }
  \coordinate (gatecord) at ($(gatecord) + (0,#4) + (max)$);
  \node[#3,label={below:$#2$}] (#5) at (gatecord) {};
  \pgfgetlastxy{\xgate}{\ygate};
  \pgfmathtruncatemacro{\xgateround}{\xgate};
  \StrCount{#1,}{,}[\l] % from package xxstring
  \ifnum \l < 2 {\errmessage{#1 argument should be a comma-separated list of lenght >= 2}}
  \else{
    \foreach \n in {#1}{
      \path (\n);
      \pgfgetlastxy{\xnode}{\ynode};
      \pgfmathtruncatemacro{\xnround}{\xnode};
      \pgfmathsetmacro\tmpdiff{abs(\xnround - \xgateround)}
      \ifdim \tmpdiff pt > 1 pt \path[line] (#5) -| (\n);
      \else
        \path[line] (#5) -- (\n);
      \fi
    }
  }
  \fi
}
\newcommandx{\mkmerge}[4][2=gatenode,3=i,4=.5,usedefault=@]{
  \mkgateend{#1}[{\ifempty{#3}{}{\nmerge[#3]}}][ogate][#4]{#2}
}
\newcommandx{\mkjoin}[4][2=gatenode,3=i,4=.5,usedefault=@]{\mkgateend{#1}[{\ifempty{#3}{}{\nmerge[#3]}}][agate][#4]{#2}}
\newcommandx{\mkgateend}[5][2={},3=ogate,4=.5,usedefault=@]{
  % #1 list of nodes
  % #2 control point
  % #3 gate type
  % #4 vertical position offset
  % #5 name of the gate node
  %
  \coordinate (gatecord) at (0,0);
  \coordinate (xmax) at (0,0);
  \coordinate (xmin) at (0,0);
  \pgfgetlastxy \xmin \xmax;
  \foreach \n [count=\i] in {#1}{
    \pgfgetlastxy \xc \yc;
    \path (\n);
    \pgfgetlastxy \xn \yn;
    \ifnum \i = 1
      \coordinate (xmin) at (\xn,0);
      \coordinate (xmax) at (\xn,0);
      \coordinate (ymin) at (0,\yn);
    \else
      \ifdim \xn < \xmin
        \coordinate (xmin) at (\xn,0);
      \fi
      \ifdim \xn > \xmax
        \coordinate (xmax) at (\xn,0);
      \fi
      \ifdim \yn > \yc
        \coordinate (ymin) at (0,\yc);
      \else
        \coordinate (ymin) at (0,\yn);
      \fi
    \fi
  }
  \coordinate (gatecord) at ($(xmin)!.5!(xmax) + (ymin)$);
  \node[#3,label={above:$#2$}] (#5) at ($(gatecord) - (0,{#4})$) {};
  \pgfgetlastxy{\xgate}{\ygate};
  \pgfmathtruncatemacro{\xgateround}{\xgate};
  \StrCount{#1,}{,}[\l] % from package xstring
  \ifnum \l < 2 {\errmessage{#1 argument should be a comma-separated list of lenght >= 2}}
  \else{
    \foreach \n in {#1}{
      \path (\n);
      \pgfgetlastxy{\xnode}{\ynode};
      \pgfmathtruncatemacro{\xnround}{\xnode};
      \pgfmathsetmacro\tmpdiff{abs(\xnround - \xgateround)}
      \ifdim \tmpdiff pt > 1 pt \path[line] (\n) |- (#5);
      \else
        \path[line] (\n) -- (#5);
      \fi
    }
  }
  \fi
}
\newcommandx{\mkgateendold}[5][2={},3=ogate,4=.5,usedefault=@]{
  % #1 list of nodes
  % #2 control point
  % #3 gate type
  % #4 vertical position offset
  % #5 name of the gate node
  %
  \coordinate (gatecord) at (0,0);
  \coordinate (xmax) at (0,0);
  \coordinate (xmin) at (0,0);
  \pgfgetlastxy \xmin \xmax;
  \foreach \n [count=\i] in {#1}{
    \pgfgetlastxy \xc \yc;
    \path (\n);
    \pgfgetlastxy \xn \yn;
    \ifdim \xn < \xmin
    \coordinate (xmin) at (\xn,0);
    \fi
    \ifdim \xn > \xmax
    \coordinate (xmax) at (\xn,0);
    \fi
    \ifdim \yn > \yc
    \coordinate (ymin) at (0,\yc);
    \else
    \coordinate (ymin) at (0,\yn);
    \fi
    \coordinate (gatecord) at ($(xmin)!.5!(xmax) + (ymin)$);
  }
  \node[#3,label={above:$#2$}] (#5) at ($(gatecord) - (0,{#4})$) {};
  \pgfgetlastxy{\xgate}{\ygate};
  \pgfmathtruncatemacro{\xgateround}{\xgate};
  \StrCount{#1,}{,}[\l] % from package xstring
  \ifnum \l < 2 {\errmessage{#1 argument should be a comma-separated list of lenght >= 2}}
  \else{
    \foreach \n in {#1}{
      \path (\n);
      \pgfgetlastxy{\xnode}{\ynode};
      \pgfmathtruncatemacro{\xnround}{\xnode};
      \pgfmathsetmacro\tmpdiff{abs(\xnround - \xgateround)}
      \ifdim \tmpdiff pt > 1 pt \path[line] (\n) |- (#5);
      \else
        \path[line] (\n) -- (#5);
      \fi
    }
  }
  \fi
}
\newcommand{\gatedistancein}{3pt}
\newcommand{\gatedistanceinand}{2pt}
\tikzset{
  src/.style={draw,circle,fill=white,
    minimum size=2mm,
    inner sep=0pt
  },
  sink/.style={draw,circle,double,fill=white,
    minimum size=1.5mm,
    inner sep=0pt
  },
  node/.style={draw,circle,fill=black,
    minimum size=2mm,
    inner sep=0pt
  },
  source/.style={draw,circle,fill=white,
    minimum size=3mm,
    inner sep=0pt
  },
  sink/.style={draw,circle,double,fill=white,
    minimum size=3mm,
    inner sep=0pt
  },
  % ACTION
  block/.style = {rectangle, draw=gray, align=center, fill=orange!25, rounded corners=0.1cm,
    minimum size=5mm, inner sep=2pt},
  prenode/.style = {minimum size=9pt,inner sep=2pt, font=\Large},
  bblock/.style = {rectangle, draw=blue!50, opacity=.7, line width=.5pt, align=center, fill=white, rounded corners=0.1cm,
    minimum size=4mm, inner sep=1pt},
  prenode/.style = {minimum size=9pt,inner sep=2pt, font=\Large},
  % AND GATE
  agate/.style={draw, rectangle,
    minimum size=3mm,
    inner sep=0pt,
    fill=orange!25,
    label={[red]center:$\mid$}
  },
  % ORGATE
  ogate/.style = {
    diamond, draw, fill=orange!25,
    minimum size=4mm,
    inner sep=0pt,
    label={[red]center:$+$}
  },
  % LOOP GATE
  lgate/.style = {
    diamond, draw, fill=orange!25,
    minimum size=4mm,
    inner sep=0pt,
    label={[red]center:$\circlearrowleft$}
    },
  altogate/.style = {
    diamond, draw,
    minimum size=4mm,
    inner sep=0pt,
    postaction={path picture={% 
        \draw
        ([yshift=\gatedistancein]path picture bounding box.south) -- ([yshift=-\gatedistancein]path picture bounding box.north)
        ([xshift=-\gatedistancein]path picture bounding box.east) -- ([xshift=\gatedistancein]path picture bounding box.west)
        ;}}},
  altgate/.style={draw, rectangle,
    minimum size=3mm,
    inner sep=0pt,
    postaction={path picture={% 
        \draw
        ([yshift=\gatedistanceinand]path picture bounding box.south) --
        ([yshift=-\gatedistanceinand]path picture bounding box.north) ;}}},
  % ogate or agate
  anygate/.style = {circle, draw, fill=white,
    minimum size=4mm,
    inner sep=0pt,
    postaction={path picture={% 
        \draw[black]
        ([xshift=-\gatedistancein,yshift=\gatedistancein]path picture bounding box.south east) --
        ([xshift=\gatedistancein,yshift=-\gatedistancein]path picture bounding box.north west)
        ([xshift=-\gatedistancein,yshift=-\gatedistancein]path picture bounding box.north east) --
        ([xshift=\gatedistancein,yshift=\gatedistancein]path picture bounding box.south west)
        ;}}
  },
  smallglobal/.style={
        node distance=1cm and 0.8cm, semithick, scale=0.8, every node/.style={transform shape}
  },
  % DOTS
  elli/.style = {draw,densely dotted,-},
  % 
  % LINES
  line/.style = {draw,->, rounded corners=0.07cm,>=latex},
  nline/.style = {draw,semithick, ->},
  pline/.style = {draw,->,>=latex},
  node distance=1cm and 0.7cm,
  baseline=(current  bounding  box.center),
  local/.style={rectangle, draw, fill=\fillcolor, drop shadow,
    text centered, rounded corners, minimum height=5em
  },
  bigar/.style={
    draw,very thick, ->
  },
  process/.style={rectangle, draw=gray, fill=\fillcolor, drop shadow,
    text centered, minimum height=5em,text=gray
  },
  choreo/.style={rectangle, draw, fill=\fillcolor, drop shadow,
    text centered, rounded corners, minimum height=5em
  },
  % CFSM
  mycfsm/.style={
        font=\footnotesize,
        initial where=above,
        ->,>=stealth,auto, node distance=1cm and 1cm,
        scale=1, every node/.style={transform shape},
        every state/.style=inner sep=2pt,
        baseline=(current  bounding  box.center),
        initial text={}
  },
  machinecloud/.style={
    cloud, cloud puffs=10, cloud ignores aspect, minimum height=.1cm, minimum width=2cm, draw
  },
  fitting node/.style={
    inner sep=0pt,
    fill=none,
    draw=none,
    reset transform,
    fit={(\pgf@pathminx,\pgf@pathminy) (\pgf@pathmaxx,\pgf@pathmaxy)}
  },
  mypetri/.style={
    font=\footnotesize,
    baseline=(current  bounding  box.center)
  },
  silentrans/.style = {rectangle, draw=black, align=center, fill=black,
    minimum height=1pt,
    minimum width=15pt,
    inner sep=1.5pt
  },
  reset transform/.code={\pgftransformreset},
  tmtape/.style={draw,minimum size=1.2cm}
}
\newcommand{\gunlessop}{\mbox{\colorOp\tiny\tt unless}}
\newcommandx{\gtry}[5][1=\gname,2={\aG_1 \gchoop \cdots \gchoop \aG_n},3=\gin,4=\gout,5={j},usedefault=@]{
  \def\foo{\gtryop\ {#2} \ \gcatchop\ {#3} {\colorOp \Rightarrow} {#4} {\colorOp \bullet} {\gname[{#5}]}}
  \gnode[{#1}][{\ifempty{#1} {\foo } {(\foo)}}]
}
\newcommandx{\gtrycatch}[4][1=\gname,2={\aG},3=\gin,4={\aG'},usedefault=@]{
  \def\foo{\gtryop\ {#2} \ \gcatchop\ {#3} \gdoop\ {#4}}
  \gnode[{#1}][{\ifempty{#1} {\foo} {(\foo)}}]
}
\newcommandx{\agG}[2][1={\aG},2=\aguard]{{#1} \ifempty{#2}{}{\ \gunlessop\ {#2}}}
\newcommandx{\grcho}[5][1=\gname,2={\agG},3={\agG[\aG'][\aguard']},4={\cdots},5=A,usedefault=@]{
  \def\foo{{#2} {\ \ifempty{#4}{\gchoop}{\gchoop \ifempty{#4}{}{\ {#4}\  \gchoop}}\ } {#3}}
  \ifempty{#1}{\ifempty{#5}{\foo}{\gselop\ \cpt[{#1}][{\ptp[#5]}]\big\{ \foo \big\}}}{\gselop\ \cpt[{#1}][{\ptp[#5]}]\big\{ \foo \big\}}
}
\newcommandx{\ggprefix}[3][1=\ptp,2={\aR},3={\aR'},usedefault=@]{f_{#1}} % it was \newcommandx{\common}{...}
\newcommand{\aconfigfn}{\chi}
\newcommand{\aconfig}{\ell}
\newcommand{\lstates}{\statemap}
\newcommandx{\sysconfig}[3][1=\lstates,2=\aconfigfn,3={},usedefault=@]{
  \conf{ {#1},{#2} \ifempty{#3}{}{, #3} }
}
\newcommand{\sysctxfn}[1][]{\gamma_{#1}}
\newcommandx{\sysctx}[2][1=\aQ,2={},usedefault=@]{({#1},\sysctxfn[{#2}])}
\newcommandx{\alog}[4][1=\msg,2=q,3=\gname,4=t,usedefault=@]{({#1},{#2},{#3},{#4})}
\newcommand{\aCM}{M}\newcommand{\aM}{\aCM}
\newcommand{\aQ}{Q}
\newcommandx{\aQzero}[1][1=,usedefault=@]{
  {\ifempty{#1}{q_0}{q_{0#1}}}
}
\newcommand{\badbranches}[1][]{\beta\ifempty{#1}{}{({#1})}}
\newcommand{\aTrs}{\tset}
\newcommandx{\guardedaction}[2][1=\al,2=\aguard,usedefault=@]{
  {#1} \ifempty{#2}{}{/} {#2}
}
\newcommandx{\atrM}[4][1=q,2=\al,3={\hat q,\hat \al, \aguard},4=q',usedefault=@]{
  {#1} \xrightarrow[{#3}]{\guardedaction[{#2}][]} {{#4}}
}
\newcommandx{\atrS}[5][
  1={\sysconfig[@][@][\badbranches]},
  2=\al,
  3=\aguard,
  4={\sysconfig[\lstates'][\aconfigfn'][\badbranches]},
  5=\sysctx,usedefault=@
]{
  {#1} \xRightarrow{\qquad} {{#4}}
}
\newcommandx{\arevtrS}[2][
  1={\sysconfig[@][@][\badbranches]},
  2={\sysconfig[\lstates'][\aconfigfn'][\badbranches']},
  usedefault=@
]{
  {#1} \rightsquigarrow {#2}
}
\newcommand{\aCS}{S}
\newcommandx{\enables}[2][1=\aconfigfn,2=\aguard,usedefault=@]{{#1} \vdash {#2}}
\newcommandx{\gprojfn}[5][1=\aG,2=\ptp,3=\cinit,4=\cfinal,5={},usedefault=@]{
  \mathbf{proj}_{#2}({#1},{#3},{#4}\ifempty{#5}{}{,{#5}})
}
\newcommandx{\rbp}[3][1=\aG,2=\aconfigfn,3=\achan,usedefault=@]{\mathtt{RBP}_{{#1},{#2}}\ifempty{#3}{}{({#3})}}
\newcommand{\apseudoCFSM}{\mathtt{M}}
\newcommandx{\pseudoseq}[2][1=\apseudoCFSM,2=\apseudoCFSM',usedefault=@]{{#1}  ; {#2}}
\newcommandx{\pseudoCFSM}[4][1=\aQ,2=\aQzero,3=\cfinal,4=\aTrs,usedefault=@]{(#1 \ ; #2 \ ; #3 \ ; #4)}
\newcommandx{\markt}[3][1=\hat{\al},2=\hat{q},3=\aguard,usedefault=@]{\%\big({#1} , {#2}, {#3}\big)}
\newcommandx{\borderfn}[2][1=\aconfig,2=\aloop,usedefault=@]{
  \mathsf{border}_{{#2}}\ifempty{#1}{}{({#1})}
}
  \tikzset{
    mycallout/.style={
        fill=gray!10, opacity=.5, overlay, align=center,
        cloud callout, cloud puffs=10, aspect=1.9, cloud ignores aspect, cloud puff arc=100
      }
  }
\newcommandx{\ggvisually}[6][1=5pt,2=15pt,3=5pt,4=5pt,5=1.0cm,6=\scriptsize,usedefault=@]{
  \def\dist{\hspace{#5}}
  $\begin{array}{c@{\dist}c@{\dist}c@{\dist}c@{\dist}c@{\dist}c}
      % gempty
      \begin{tikzpicture}[node distance=0.9cm and 0.4cm, every node/.style={scale=.7,transform shape}]
        \node[source] (srcint) {};
        \node[sink,below=of srcint] (sinkint) {};
        \node[mycallout, above = .3cm of srcint, xshift=1cm, callout absolute pointer={(srcint.east)}] {source node};
        \node[mycallout, below = .3cm of sinkint, xshift=1cm, callout absolute pointer={(sinkint.west)}] {sink node};
        \path[line] (srcint) -- (sinkint);
      \end{tikzpicture}
       &
      % gint
      \begin{tikzpicture}[node distance=0.9cm and 0.4cm, every node/.style={scale=.7,transform shape}]
        \mkint{}{int}[]
        \mkgraph{int}{int};
        %       \node[mycallout, above = .3cm of srcint, xshift=1cm, callout absolute pointer={(srcint.east)}] {source node};
        %       \node[mycallout, below = .3cm of sinkint, xshift=-1cm, callout absolute pointer={(sinkint.west)}] {sink node};
      \end{tikzpicture}
       &
      % gseq
      \begin{tikzpicture}[node distance=.9cm and 0.4cm, every node/.style={scale=.7,transform shape}]
        \node[bblock] at (0,0) (g) {$\aG$};
        \node[node, below=of g] (s1) {};
        \node[bblock, below=of s1] (gp) {$\aG'$};
        \path[line,dotted] (g) -- (s1);
        \path[line,dotted] (s1) -- (gp);
      \end{tikzpicture}
       &
      % gpar
      \begin{tikzpicture}[node distance=.4cm and 0.4cm, every node/.style={scale=.7,transform shape}]
        \node[bblock] at (-.7,0) (g) {$\aG$};
        \node[bblock] at (.7,0)  (gp) {$\aG'$};
        \node[node, above=of g] (f) {};
        \node[node, below=of g] (j) {};
        \node[node, above=of gp] (fp) {};
        \node[node, below=of gp] (jp) {};
        \path[line,dotted] (f) -- (g);
        \path[line,dotted] (g) -- (j);
        \path[line,dotted] (fp) -- (gp);
        \path[line,dotted] (gp) -- (jp);
        \mkfork{f,fp}[fork][][#1];
        \mkjoin{j,jp}[join][][#2];
        \mkgraph{fork}{join};
        \node[mycallout, above = .3cm of fork, xshift=-1cm, callout absolute pointer={(fork.west)}] {fork gate};
        \node[mycallout, above = -.9cm of join, xshift=-1cm, callout absolute pointer={(join.west)}] {join gate};
      \end{tikzpicture}
       &
      % gcho
      \begin{tikzpicture}[node distance=.4cm and 0.4cm, every node/.style={scale=.7,transform shape}]
        \node[bblock] at (-.7,0) (g) {$\aG$};
        \node[bblock] at (.7,0)  (gp) {$\aG'$};
        \node[node, above=of g] (f) {};
        \node[node, below=of g] (j) {};
        \node[node, above=of gp] (fp) {};
        \node[node, below=of gp] (jp) {};
        \path[line,dotted] (f) -- (g);
        \path[line,dotted] (g) -- (j);
        \path[line,dotted] (fp) -- (gp);
        \path[line,dotted] (gp) -- (jp);
        \mkbranch{f,fp}[fork][][#3];
        \mkmerge{j,jp}[join][][#4];
        \mkgraph{fork}{join};
        \node[mycallout, above = .3cm of fork, xshift=-1cm, callout absolute pointer={(fork.west)}] {branch gate};
        \node[mycallout, above = -.9cm of join, xshift=-1cm, callout absolute pointer={(join.west)}] {merge gate};
      \end{tikzpicture}
      % &
      % % grec
      % \begin{tikzpicture}[node distance=0.4cm and 0.4cm, every node/.style={scale=.7,transform shape}]
      %   \node[bblock] (g) {$\aG$};
      %   \node[node, above=.5cm of g] (f) {};
      %   \node[node, below=.5cm of g] (j) {};
      %   \path[line,dotted] (f) -- (g);
      %   \path[line,dotted] (g) -- (j);
      %   \mkloop[.5][@]{f}{j}[];
      %   \mkgraph[.4cm]{entryf}{exitj};
      %   \node[mycallout, above = .2cm of entryf, xshift=1.3cm, callout absolute pointer={(entryf.east)}] {loop entry};
      %   \node[mycallout, above = -.7cm of exitj, xshift=-1.3cm, callout absolute pointer={(exitj.west)}] {loop exit};
      % \end{tikzpicture}
      \\
      \text{#6 empty}
       &
      \text{#6 interaction}
       &
      \text{#6 sequential}
       &
      \text{#6 parallel}
       &
      \text{#6 branching}
      % &
      % \text{#6 iteration}
    \end{array}$
}
\newcommandx{\wwwcquote}[1][1=quo:w3c,usedefault=@]{
  \ifempty{#1}{}{\begin{quote}\label{#1}}
	 \lq\lq Using the Web Services Choreography specification, a
	 \textcolor{orange}{contract} containing a global definition of the
	 common \textcolor{orange}{ordering} conditions and constraints
	 under which \textcolor{orange}{messages} are exchanged, is
	 produced that describes, from a \textcolor{orange}{global
		viewpoint} [...]  observable behaviour of all the parties
	 involved.
    \textcolor{OliveGreen}{Each party} can then use the global definition to
    \textcolor{OliveGreen}{build and test solutions that conform to it}.
    The global specification is in turn \textcolor{OliveGreen}{realised by combination of} the
    resulting \textcolor{OliveGreen}{local systems} [...]\rq\rq
  \ifempty{#1}{}{\else\end{quote}}
}
\newcommand{\px}{{\ptp[x]}}
\DeclareMathOperator{\cat}{\!\cdot\!}
\mkfun{\ba}{ba}{\clang}
\newcommand{\intaut}[2]{\ifempty{#1}{\mathbf{A}_{#2}}{[\mathbf{A}_{#2}]^{#1}}}
\keywords{%
  Choreographies,
  formal languages,
  message passing,
  (dead)lock freedom,
  liveness}
\begin{document}
\title{A Theory of Formal Choreographic Languages}
\titlecomment{{\lsuper*}This paper is a revised and extended version of~\cite{blt22}.}
\thanks{
  \tnxbehapi. \tnxitmatters. The first and second authors have
	 also been partially supported by INdAM as members of GNCS
	 (Gruppo Nazionale per il Calcolo Scientifico).  The first author
	 has also been partially supported by Project "National Center for "HPC, Big Data e Quantum Computing",  Programma M4C2 – dalla ricerca all’impresa – Investimento 1.3: Creazione di “Partenariati estesi alle università, ai centri di ricerca, alle aziende per il finanziamento di progetti di ricerca di base”   – Next Generation EU. The authors thank the anonymous reviewers for their
	 helpful comments, in particular one reviewer of a previous
	 submission for suggesting the relation with Galois
	 connections.
	 The last author acknowledges the support of the PRO3 MUR project
	 Software Quality, and PNRR MUR project VITALITY (ECS00000041),
	 Spoke 2 ASTRA - Advanced Space Technologies and Research Alliance.
	 \\
	 We are grateful to Rolf Hennicker for flagging an error in an example (also spotted by a reviewer).
	 The authors also thank Mariangiola Dezani-Ciancaglini for her
	 support.  }

\author[F.~Barbanera]{Franco Barbanera\lmcsorcid{0000-0002-8039-1085}}[a]
\author[I.~Lanese]{Ivan Lanese\lmcsorcid{0000-0003-2527-9995}}[b]
\author[E.~Tuosto]{Emilio Tuosto\lmcsorcid{0000-0002-7032-3281}}[c]

\address{
  Dept. of Mathematics and Computer Science, University of Catania (Italy)
}
\email{barba@dmi.unict.it}
\address{
  Focus Team, University of Bologna/INRIA (Italy)
}
\email{ivan.lanese@gmail.com}
\address{
  Gran Sasso Science Institute (Italy)
}
\email{emilio.tuosto@gssi.it}

\begin{abstract}
  We introduce a meta-model based on formal languages, dubbed
  \emph{formal choreographic languages}, to study message-passing
  systems.
  Our framework allows us to generalise standard constructions from
  the literature and to compare them.
  In particular, we consider notions such as global view, local view,
  and projections from the former to the latter.
  The correctness of local views projected from global views is
  characterised in terms of a closure property.
  We consider a number of communication properties --such as
  (dead)lock-freedom-- and give conditions on formal choreographic
  languages to guarantee them.
  Finally, we show how formal choreographic languages can capture
  existing formalisms; specifically we consider communicating
  finite-state machines, choreography automata, and multiparty session
  types.
  Notably, formal choreographic languages, differently from most
  approaches in the literature, can naturally model systems exhibiting
  non-regular behaviour.
\end{abstract}

\maketitle

\section{Introduction}\label{sec:intro}
Choreographic models of message-passing systems are becoming
increasingly popular both in
academia~\cite{BasuB11,BravettiZ07,CarboneHY12} and
industry~\cite{WS-CDL,BPMN,boner18}.
These models advocate two complementary views of communicating
systems.
The \emph{global} view can be thought of as an holistic description
of
the interactions carried out by a number of participants.
Dually, the \emph{local} views describe the expected communication
behaviour of each participant in isolation.

As discussed in \cref{sec:related}, the literature offers various
choreographic models.
Here, we introduce \emph{formal choreographic languages} (FCL) as a
meta-model to formalise message-passing systems; existing
choreographic models can be conceived as specifications of FCLs.
Essentially, this results in the definition of \emph{global} and
\emph{local} languages.
The words of a global language (g-language for short) consist of
(possibly infinite) sequences of \emph{interactions}; an interaction
has the form $\gint[]$ and represents the fact that participant \p\
sends message $\msg$ to participant \q, and participant \q\ receives
it.
The words of a local language (l-language for short) consist of
(possibly infinite) sequences of \emph{actions}; actions can take two
forms, $\ain$ and $\aout$, respectively representing that participant
\q\ receives message $\msg$ from \p\ and that participant \p\ sends
message $\msg$ to \q.

Such languages provide an abstract description of the possible runs of
a system in terms of sequences of interactions at the global level, 
executed through synchronous message-passing at the local level.
A word $\acword$ in a global language represents then a possible run
expected of a communicating system.
For instance, we can model a continuation of $\acword$ as another word
$\acword'$ such that $\acword \cat \acword'$ is also in the language.
Also, $\acword$ induces an expected \quo{local} behaviour on each
participant \p; in fact, the behaviour of \p\ is obtained by
\emph{projecting} $\acword$, that is by ignoring interactions on the
run $\acword$ not involving \p\ while retaining only the \emph{output}
and \emph{input} actions performed by \p.

Our language-theoretic treatment is motivated mainly by the need for a
general setting immune to syntactic restrictions.
This naturally leads us to consider e.g., choreographies represented
by context-free languages (cf. \cref{ex:parenthesis}).
In fact, we strive for generality; basically \emph{prefix-closure} is
the only requirement we impose on FCLs.
(We discuss some implications of relaxing prefix-closure in
\cref{sec:conc}.)
The gist is that, if a sequence of interactions or of communications
is an observable behaviour of a system, any prefix of the sequence
should be observable as well.
This allows us to consider partial executions as well as
\quo{complete} ones, which can be formalised in terms of maximal words
(namely words without continuations).
We admit infinite words to account for diverging computations,
ubiquitous in communication protocols.

Some g-languages cannot be faithfully executed by distributed
components.
This can be illustrated with a simple example; consider a
  g-language containing only the word $\gint \cat \gint[][c][n][d]$
  and its prefixes.  Such a language -- in our interpretation of words
  concatenation as a \quo{strict sequencing} operator (see
  e.g.~\cite{cm13,SeveriD19} for non-strict interpretations of
  sequencing) -- does specify that the interactions $\gint$ and
  $\gint[][c][n][d]$ must occur only in the given order.
  Clearly, this is not possible if the participants are distributed or
  act concurrently because \p[c] and \p[d] cannot be aware of when the
  interaction between \p\ and \q\ takes place.

\noindent\textbf{Contributions \& structure.}
We summarise below our main contributions.

\cref{sec:c-lang} introduces FCL (g-languages in \cref{def:chorlang},
l-languages in \cref{def:actlang}) and adapts standard constructions from
the literature.
In particular, we render communicating systems and choreographies as,
respectively, sets of l-languages (\cref{def:commSyst}) and
g-languages, while we take inspiration for projections from
choreographies and multiparty session types.
We consider synchronous interactions; as discussed in \cref{sec:conc},
the asynchronous case is scope for future work.

\cref{sec:ui} considers correctness of communicating systems with
respect to choreographies (the communicating system
\quo{executes} an interaction only if it is specified) and
completeness (the communicating system \quo{executes} at
least the specified interactions).
An immediate consequence of our constructions is the completeness of
systems projected from g-languages (\cref{th:completeness}).
Correctness is more tricky and requires to introduce \emph{closure
  under unknown information} (CUI, cf.  \cref{def:closedness}).
Intuitively, a g-language is CUI if it contains extensions of words
with a single interaction whose participants cannot distinguish the
extended word from other words of the language.
\cref{th:correctness} characterises correctness of projected systems
in terms of CUI.

\cref{sec:prop} shows how FCLs allow us to capture many relevant
communication properties in a fairly uniform way.

\cref{sec:proj-prop} proposes \emph{branch-awareness} (\cref{def:ba})
to ensure the communication properties defined in \cref{sec:prop}
(\cref{thm:baconseq}).
Intuitively, branch-awareness requires each participant to
\quo{distinguish} words where its behaviour differs.
Notably, we separate the condition for correctness (CUI) from the one
ensuring the communication properties (branch awareness). Most
approaches in the literature instead combine them into a single
condition, which takes names such as well-branchedness or
projectability~\cite{hlvlcmmprttz16}.  Thus, these single conditions
are stronger than each of CUI and branch-awareness.

We illustrate the generality of FCLs on three case studies
(cf.~\cref{sec:cfsmfcl,sec:chor-automata,sec:mpst}), respectively
taken from communicating finite-state machines~\cite{bz83},
choreography automata~\cite{blt20} and multiparty session
types~\cite{SeveriD19}.
We remark that FCLs can capture protocols that cannot be represented by
regular g-languages such as the \quo{task dispatching} protocol in
\cref{ex:parenthesis}.
To the best of our knowledge this kind of protocols cannot be
formalised in other approaches.

\cref{sec:related} compares with related work while
\cref{sec:conc} draws some conclusions and discusses
future work.

This paper is a revised and extended version of~\cite{blt22}; the main
differences are: ($i$) the results on decidability of CUI and
branch-awareness in \cref{sec:decidability} are new; ($ii$)
\cref{sec:cfsmfcl} yields a new case study; ($iii$)
\cref{sec:chor-automata} has been significantly extended; ($iv$) full
proofs have been added; ($v$) the text has been largely revised in
order to improve readability.

%%% Local Variables:
%%% mode: latex
%%% TeX-master: "main"
%%% End:

\section{Formal Choreographic Languages}\label{sec:c-lang}
We briefly recall a few basic notions used throughout the paper and
fix some notation.
Let $\Sigma$ be an alphabet (i.e., a set of symbols).

We define the sets of finite and infinite words on $\Sigma$ as
\[
  \Sigma^\infty = \Sigma^\star \cup \Sigma^\omega
\]
where $\Sigma^\star$ is the usual Kleene-star closure on $\Sigma$ and
$\Sigma^\omega$ is the set of infinite words on $\Sigma$, that is maps
from natural numbers to $\Sigma$ (aka $\omega$-words~\cite{Staiger97}).

The concatenation operator is denoted as $\_\cat\_$ and $\epsilon$ is
its neutral element.
If $w$ is infinite then $w \cat w' = w$.
We write $a_0 \cat a_1 \cat a_2 \cat \ldots$ for the word mapping $i$
to $a_i \in \Sigma$ for all natural numbers $i$.

A language $L$ on $\Sigma$ is a subset of $\Sigma^\infty$, namely
$L \subseteq \Sigma^\infty$.
The prefix-closure of $L \subseteq \Sigma^\infty$ is defined as
\[
  \pref[L] = \Set{\aword \in \Sigma^\infty \sst \text{there is } \aword' \in L
	 \text{ such that } \aword \preceq \aword'}
\]
where $\preceq$ is the prefix relation; $L$ is \emph{prefix-closed} if
$L = \pref[L]$.

A word $\aword$ is \emph{maximal} \label{page:maximal} in a language
$L \subseteq \Sigma^\infty$ if $\aword \preceq \aword'$ for
$\aword' \in L$ implies $\aword' = \aword$.
As usual we shall write $\aword\prec\aword'$ whenever
$\aword\preceq\aword'$ and $\aword\neq\aword'$.

\medskip We shall deal with languages on particular
alphabets\footnote{These alphabets may be infinite; formal languages
  over infinite alphabets have been studied, e.g., in~\cite{ajb80}.},
namely the alphabets of \emph{interactions} and \emph{actions} whose
definition (see below) we borrowed from~\cite{blt20}.

\begin{definition}[Interactions and actions alphabets]
Let $\mathfrak{P}$ be a set of \emph{participants} (ranged over by $\p$, $\ptp[B]$, $\ptp[X],\ldots)$ and $\mathfrak{M}$ (ranged over by $\msg$, $\msg[x],\ldots$)
a  set of \emph{messages}, such that $\mathfrak{P}$ and $\mathfrak{M}$ are disjoint.
We define
\begin{align*}
  \alfint & = \Set{\gint \sst \p \neq \q \in \mathfrak{P}, \msg \in \mathfrak{M}}
  & & \text{ranged over by $\aint, \aint[b], \ldots$}
  \\
  \alfact & = \Set{\aout, \ain \sst \ptp[A] \neq \ptp[B] \in \mathfrak{P}, \msg \in \mathfrak{M}}
  & & \text{ranged over by $\aact, \aact[b], \ldots$}
\end{align*}
We call $\alfint$ and  $\alfact$, respectively, \emph{alphabet of interactions} and
 \emph{alphabet of actions} (over $\mathfrak{P}$ and $\mathfrak{M}$).
\end{definition}
Our results are independent of the sets $\mathfrak{P}$ and
$\mathfrak{M}$ we consider and hence we do not further specify them.
Words on $\alfint^\infty$ (ranged over by $\acword,\acword',...$) are
called \emph{interaction words} while those on $\alfact^\infty$
(ranged over by $\aaword,\aaword',...$) are called \emph{words of
  actions}.
Hereafter $\aword,\aword',...$ range over
$\alfint^\infty \cup \alfact^\infty$ and we use $\clang$ and $\alang$ to
range over subsets of, respectively, $\alfint^\infty$ and
$\alfact^\infty$.

Function $\ptpof[\cdot]$ yields the set of \emph{participants}
involved in an interaction or in an action and it is defined on
$\alfint \cup \alfact$ as follows:
\begin{align*}
  \ptpof[\gint] = \ptpof[\aout] = \ptpof[\ain] = \Set{\p,\q}
\end{align*}
This function extends naturally to (sets of) words.
The \emph{subject} of $\aout$ is the sender \p\ and the subject of
$\ain$ is the receiver \q.

We summarise in \cref{tab:nameconventions} all the name conventions
for variables ranging over the sets defined above.
\begin{table}[t]
  \begin{center}
	 \begin{tabular}{|l|@{\quad}l@{\quad}|@{\quad}l@{\quad}|}
		\hline
		{\bf Set} & {\bf Intuitive definition}& {\bf ranged over by} \\ [0.5ex]
		\hline
		$\mathfrak{P}$  & Participants & $\p$, $\ptp[B]$, $\ptp[X]\ldots$ \\\hline
		$\mathfrak{M}$ & Messages &$\msg$, $\msg[x],\ldots$ \\\hline
		finite subsets of  $\mathfrak{P}$ & Sets of participants & $\ptpset$\\\hline
                $\alfint$ & Interactions & $\aint, \aint[b], \ldots$\\\hline
		$\alfact$ & Actions & $\aact, \aact[b], \ldots$\\\hline
		$\alfint^\infty$ & Interaction words & $\acword,\acword',\ldots$\\\hline
		$\alfact^\infty$ & Words of actions & $\aaword,\aaword',\ldots$\\\hline
		$\alfint^\infty \cup \alfact^\infty$ & Words & $\aword,\aword',\ldots$\\\hline
		subsets of $\alfint^\infty$ & Global languages & $\clang$, $\clang',\ldots$\\\hline
		subsets of $\alfact^\infty$ & Local languages & $\alang$, $\alang',\ldots$\\\hline
	 \end{tabular}
  \end{center}
  \caption{Name conventions for variables\label{tab:nameconventions}}
\end{table}

A \emph{global language} specifies the expected interactions of a
system while a \emph{local language} specifies the communication
behaviour of a participant.

\begin{definition}[Global language]\label{def:chorlang}
  A \emph{global language} (\emph{g-language} for short) is a
  prefix-closed language $\clang$ on $\alfint$ such that
  $\ptpof[\clang]$ is finite.
\end{definition}

\begin{example}\label{ex:glang}
  For any finite subset $\Sigma$ of $\alfint$, the set $\Sigma^\star$
  is a g-language; notice that, if $\mathfrak{P}$ is infinite, then
  $\alfint^\star$ is not a g-language since it encompasses infinitely
  many participants.
  
  The set
  $\clang = \pref[\Set{\gint[][c][w][a] \cat \gint[][a][g][b], \,
	 \gint[][c][w][b] \cat \gint[][a][g][b], \, \gint[][c][w][a] \cat
	 \gint[][c][w][b] }]$ is a g-language (which will be used later on
  in other examples).
  It formally describes the following interaction protocol involving
  $\ptp[A]$lice, $\ptp[B]$ob and $\ptp[C]$arol: Carol can decide to
  ask either Bob or Alice to $\msg[w]$ork and after that Alice
  $\msg[g]$ossips with Bob, unless also Bob is asked to work after
  Alice is.
  \finex
\end{example}

\begin{definition}[Local language]\label{def:actlang}
  A \emph{local language} (\emph{l-language} for short) is a
  prefix-closed language $\alang$ on $\alfact$ such that 
  $\ptpof[\alang]$ is finite.
  An l-language is \emph{$\p$-local} if its words have all actions
  with subject \p.
\end{definition}
\begin{example}\label{ex:llang}
	 The set
	 $\alang =\Set{\emptyword,\, \aout[C][A][][w],\,
		\aout[C][B][][w],\, \aout[C][A][][w] \cat \aout[C][B][][w] }$ is
	 a $\ptp[C]$-local language; in particular, it specifies the local
	 behaviour of $\p[c]$arol with respect to the g-language $\clang$ of
	 \cref{ex:glang}.
  \finex
\end{example}

As discussed in the Introduction,
l-languages give rise to \emph{communicating systems}.

\begin{definition}[Communicating system]\label{def:commSyst}
  Let $\ptpset \subseteq \mathfrak{P}$ be a finite set of
  participants.
  A \emph{(communicating) system over $\ptpset$} is a map
  $\aCS = (\alang_{\p})_{\p \in \ptpset}$ assigning,  to each participant
  $\p \in \ptpset$, an $\p$-local
  language $\alang_{\p}\neq \Set{\epsilon}$ such that
  $\ptpof[\alang_{\p}] \subseteq \ptpset$.
\end{definition}

By projecting a g-language $\clang$ on a participant $\p$ we obtain
the $\p$-local language describing the sequence of actions performed
by $\p$ in the interactions involving $\p$ in the words of $\clang$.
\cref{def:projection} below recasts in our setting the notion of
projection used in several choreographic formalisms, e.g., in~\cite{CarboneHY12,honda16jacm}.
\begin{definition}[Projection]\label{def:projection}
  The \emph{projection on \p} of an interaction $\gint[][B][@][C]$ is
  computed by the function
  $\proj{\_}{\_} : \alfint \times \mathfrak{P} \to \alfact \cup
  \Set{\epsilon}$ defined by:
  \[
	 \proj{(\gint[])}{A} = \aout \qquad\qquad \proj{(\gint[])}{B} =
	 \ain \qquad\qquad \proj{(\gint[])}{c} = \epsilon
  \]
  and extended homomorphically to interaction words and g-languages.
  The \emph{projection of a g-language $\clang$}, written
  $\proj{\clang}{}$, is the communicating system
  ${(\proj{\clang}{\p})_{\p \in \ptpof[\clang]}}$.
\end{definition}

\begin{example}\label{ex:simple}
  Let $\clang$ be the g-language in \cref{ex:glang}.
  By \cref{def:projection}, we have
  $\proj{\clang}{} = (\proj{\clang}{X})_{\ptp[X] \in
	 \Set{\p,\q,\ptp[c]}}$ where
  $\proj \clang a = \Set{\emptyword,\,\ain[C][A][][w]$,
	 $\ain[C][A][][w]\cat\aout[A][B][][g],\, \aout[A][B][][g]}$,
  $\proj \clang B = \Set{\emptyword,\,\ain[A][B][][g],\,
	 \ain[C][B][][w]$, $\ain[C][B][][w]\cat\ain[A][B][][g]}$, and
  $\proj \clang C= \Set{\emptyword,\, \aout[C][A][][w],\,
	 \aout[C][B][][w],\, \aout[C][A][][w] \cat \aout[C][B][][w] }$.
  The latter is the l-language in \cref{ex:llang}.
  \finex
\end{example}

We consider a \emph{synchronous} semantics of communicating systems,
similarly to other choreographic approaches such
as~\cite{BravettiZ07,CarboneHY12,Dezani-Ciancaglini16,SeveriD19}.
Intuitively, a choreographic word is in the semantics of a system
$\aCS$ iff its projection on each participant \p\ yields a word in
the local language assigned by $\aCS$ to \p.

\begin{definition}[Semantics]\label{def:syncSem}
  Given a system $\aCS$ over $\ptpset$, the set
  \[
	 \ssem{\aCS} = \Set{\acword \in \alfint^\infty \sst  \ptpof[\acword] \subseteq \ptpset \text{ and for all } \p \in
		\ptpset, \ \proj{\acword}{\p} \in \aCS(\p)}
  \]
  is the \emph{(synchronous) semantics} of $\aCS$.
\end{definition}
Notice that the above definition coincides with the \emph{join}
operation in~\cite{fbs04}, used in realisability conditions for an
asynchronous setting.

By the finiteness condition on the number of participants in a
g-language (\cref{def:chorlang}) we immediately get the following.
\begin{myfact}\label{fact:finfin}
  Let $\clang$ be a g-language.  Then $\clang\subseteq \alfint^\star$
  implies $\ssem{\proj{\clang}{}}\subseteq \alfint^\star$.
\end{myfact}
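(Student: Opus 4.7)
The plan is to argue by contradiction. Suppose, towards a contradiction, that some $\acword \in \ssem{\proj{\clang}{}}$ is infinite, i.e., $\acword \in \alfint^{\omega}$. By \cref{def:syncSem}, $\ptpof[\acword] \subseteq \ptpset$, where $\ptpset = \ptpof[\clang]$ is finite by \cref{def:chorlang}. The first step is then a pigeonhole argument: each position of $\acword$ is an interaction $\gint[][B][@][C]$ involving exactly two participants drawn from the finite set $\ptpset$, and $\acword$ has infinitely many positions, so there must exist at least one participant $\p^{\star} \in \ptpset$ occurring in the participant set of infinitely many interactions of $\acword$.

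Next, I would use this participant to derive a contradiction through the projection. Since $\p^{\star}$ occurs in infinitely many interactions of $\acword$, each such interaction contributes a non-$\epsilon$ symbol to $\proj{\acword}{\p^{\star}}$ by \cref{def:projection}; hence $\proj{\acword}{\p^{\star}} \in \alfact^{\omega}$ is an infinite word of actions. On the other hand, $\acword \in \ssem{\proj{\clang}{}}$ forces $\proj{\acword}{\p^{\star}} \in (\proj{\clang}{})(\p^{\star}) = \proj{\clang}{\p^{\star}}$.

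Finally, I would observe that $\proj{\clang}{\p^{\star}} \subseteq \alfact^{\star}$: indeed, by \cref{def:projection} the projection is extended homomorphically over words, so for every $\acword' \in \clang \subseteq \alfint^{\star}$ (a finite word), $\proj{\acword'}{\p^{\star}}$ is a finite word in $\alfact^{\star}$. Thus $\proj{\acword}{\p^{\star}}$ cannot be both infinite and an element of $\proj{\clang}{\p^{\star}}$, yielding the desired contradiction.

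There is essentially no real obstacle here; the proof is a direct combination of (i) finiteness of the participant set of a g-language, (ii) the pigeonhole principle, and (iii) the fact that projection preserves finiteness of words. The only small care needed is to note that projection is well-defined on infinite words (otherwise one could worry about $\proj{\acword}{\p^{\star}}$ being ill-defined), but since \cref{def:projection} extends $\proj{\_}{\p}$ homomorphically to words in $\alfint^{\infty}$, this is unproblematic.
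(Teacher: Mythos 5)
Your proof is correct and is essentially the argument the paper intends: the paper gives no explicit proof, merely noting that the fact follows "by the finiteness condition on the number of participants in a g-language," and your pigeonhole argument (an infinite word over a finite participant set must involve some participant infinitely often, whose projection would then be an infinite word, contradicting $\clang\subseteq\alfint^\star$) is precisely the fleshed-out version of that remark.
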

\begin{example}\label{ex:simple2}
  For the system $\proj \clang {}$ in \cref{ex:simple}, we have
  \begin{align*}
	 \ssem{\proj{\clang}{}} = \pref[\Set{\gint[][c][w][a] \cat \gint[][a][g][b], \ \gint[][c][w][b] \cat
		\gint[][a][g][b], \ 		\gint[][a][g][b], \
		\gint[][c][w][a] \cat \gint[][c][w][b] \cat
		\gint[][a][g][b]
	 }] \tag*{$\diamond$} % brought diamond to same line
  \end{align*}
\end{example}

Two interactions $\aint$ and $\aint[b]$ are \emph{independent} (in
symbols $\intpar$) when
$\ptpof[\aint] \cap \ptpof[{\aint[b]}] = \emptyset$.
Informally, a language $\clang$ describing the behaviour of a system,
and containing a word $\acword$, does contain also all the words where
independent interactions in $\acword$ are swapped; we say that
$\clang$ is \emph{concurrency closed}.
The notion of concurrency closure in our setting is a delicate one
because of the possible presence of infinite words.
One in fact has to allow infinitely many swaps of independent interactions while avoiding that
the interactions do disappear by pushing them infinitely far away.
Technically, we consider Mazurkiewicz's traces~\cite{maz86} on
$\alfint$ with independence relation $\intpar$:
\begin{definition}[Concurrency closure]\label{def:cclos}
  Let $\comm$ be the reflexive and transitive closure of the relation
  $\equiv$ on finite interaction words defined by
  $\acword\,\aint\,\aint[b]\,\acword' \equiv
  \acword\,\aint[b]\,\aint\,\acword'$ where $\intpar$.
  Following \emph{\cite[Def. 2.1]{Gastin90}}, $\comm$ extends to
  $\alfint^\omega$ by defining
  \[
	 \text{for all } \acword,\acword'\in\alfint^\omega:
	 \qquad
	 \acword \comm\acword' \quad \text{ iff }\quad \acword \ll\acword' \qand \acword' \ll\acword
  \]
  where
  $\acword \ll \acword'$ iff for each finite prefix $\acword_1$ of
  $\acword$ there are a finite prefix $\acword_1'$ of $\acword'$ and a
  g-word $\hat \acword \in \alfint^\star$ such that
  $\acword_1 \cat \hat \acword \comm \acword_1'$.
  A g-language $\clang$ is \emph{concurrency closed} if it coincides
  with its concurrency closure, namely
  $\clang = \Set{\acword \in \alfint^\infty \sst \text{there is }
	 \acword' \in \clang \text{ such that } \acword \comm \acword'}$.
\end{definition}

As expected from the discussion above, the semantics of systems is
naturally concurrency closed since in a distributed setting
independent events can occur in any order.
Indeed, relation $\comm$ can be characterised as follows.
\begin{lemma}\label{lemma:equalProj}
  Given a g-language $\clang$ and two words
  $\acword_1, \acword_2 \in \clang$, $\acword_1 \comm \acword_2$ iff
  $\proj{\acword_1}{\p} = \proj{\acword_2}{\p}$ for each
  $\p \in \ptpof[\clang]$.
\end{lemma}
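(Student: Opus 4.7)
The plan is to split the biconditional into its two directions and, within each, to dispatch finite and infinite words separately.

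For the forward direction ($\Rightarrow$), the base observation is that if $\aint$ and $\aint[b]$ are independent then they share no participants, so for every $\p$ at most one of them contributes to the projection; swapping therefore preserves projections. This extends by induction along $\equiv$ and transitivity to $\comm$ on finite words. For infinite words I would unfold the definition of $\comm$ via $\ll$: every finite prefix $\acword_1'$ of $\acword_1$ admits an extension $\acword_1' \cat \hat{\acword}$ that is $\comm$-equivalent to some finite prefix $\acword_2'$ of $\acword_2$; the finite case then forces $\proj{(\acword_1' \cat \hat{\acword})}{\p} = \proj{\acword_2'}{\p}$, so $\proj{\acword_1'}{\p}$ is a prefix of $\proj{\acword_2}{\p}$. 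Letting $\acword_1'$ grow arbitrarily and symmetrically exploiting $\acword_2 \ll \acword_1$ yields $\proj{\acword_1}{\p} = \proj{\acword_2}{\p}$.

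For the backward direction ($\Leftarrow$) I would first handle finite words by induction on $|\acword_1|$. The base case is immediate since empty projections force $\acword_2 = \epsilon$. In the inductive step, write $\acword_1 = \aint \cat \acword_1^\star$ and let $\Set{\p,\q} = \ptpof[\aint]$. Equality of projections forces both the first $\p$-involving and the first $\q$-involving interaction of $\acword_2$ to equal $\aint$. A short positional argument shows these two first occurrences must coincide: if the first $\p$-occurrence were strictly earlier it would itself be $\aint$ and would involve $\q$, contradicting minimality on the $\q$-side. All interactions preceding this common occurrence in $\acword_2$ are independent of $\aint$ and can be commuted past it, so $\acword_2 \equiv \aint \cat \acword_2^\star$. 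The tails $\acword_1^\star$ and $\acword_2^\star$ have equal projections and the induction hypothesis closes the argument.

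For infinite words in the backward direction I would reduce to the finite case via the $\ll$ characterisation. Given any finite prefix $\acword_1'$ of $\acword_1$, pick a finite prefix $\acword_2'$ of $\acword_2$ long enough that $|\proj{\acword_2'}{\p}| \geq |\proj{\acword_1'}{\p}|$ for every $\p \in \ptpof[\clang]$; this is possible because $\ptpof[\clang]$ is finite (by \cref{def:chorlang}) and $\proj{\acword_2}{\p}$ extends each $\proj{\acword_1'}{\p}$. Each $\proj{\acword_1'}{\p}$ is then a prefix of $\proj{\acword_2'}{\p}$, and it remains to exhibit a finite $\hat{\acword}$ such that $\acword_1' \cat \hat{\acword}$ and $\acword_2'$ share every projection; the finite case already proved then delivers $\acword_1' \cat \hat{\acword} \comm \acword_2'$, witnessing $\acword_1 \ll \acword_2$, and the converse $\acword_2 \ll \acword_1$ is symmetric. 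The main delicate step is the construction of $\hat{\acword}$: its projections must equal the pointwise suffixes $s_\p$ obtained by removing the prefix $\proj{\acword_1'}{\p}$ from $\proj{\acword_2'}{\p}$, and I would build it by a greedy scheduling that, at each step, selects an interaction whose two participants have matching heads in the current $s_\p$ sequences. Feasibility of this scheduling is witnessed by $\acword_2'$ itself, which realises $\proj{\acword_2'}{\p}$ for every $\p$ and hence provides an admissible ordering once the $\acword_1'$-portion of each projection has been stripped.
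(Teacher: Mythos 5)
Your proof is correct, but it takes a genuinely different route from the paper: the paper disposes of this lemma in one line by invoking~\cite[Proposition 2.2]{Gastin90}, i.e.\ it imports from the theory of (infinite) Mazurkiewicz traces the fact that, for an independence relation induced by disjointness of participants, a trace is determined by its family of projections. You instead reprove this from scratch, which buys a self-contained and elementary argument at the cost of length and of one delicate step. Your finite backward direction (the first $\p$-involving and first $\q$-involving interactions of $\acword_2$ must both equal $\aint$ and must coincide positionally, the preceding interactions commute past it, then induct on the tails) is sound. The only place where real work remains is the construction of $\hat{\acword}$ in the infinite backward case: the claim that your greedy scheduling of the residuals $s_\p$ never deadlocks is precisely the claim that \quo{projection-wise prefix} implies \quo{trace prefix}, and rather than arguing feasibility abstractly it is cleanest to discharge it by rerunning your finite-case induction, now on $|\acword_1'|$: repeatedly peel the first interaction of $\acword_1'$ off $\acword_2'$ (each peeling is licensed by exactly the positional argument you already gave), and what remains after $|\acword_1'|$ steps is the desired $\hat{\acword}$ with $\acword_1' \cat \hat{\acword} \comm \acword_2'$. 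Two minor points worth adding: in the backward direction one should note that equal projections force $\acword_1$ and $\acword_2$ to be both finite or both infinite (since $\ptpof[\clang]$ is finite, an infinite word has an infinite projection on some participant), as $\comm$ is not defined between a finite and an infinite word; and in the forward infinite case the two uses of $\ll$ give mutual prefixhood of the projections, which is what yields equality.
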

\begin{proof}
  This follows directly from~\cite[Proposition 2.2]{Gastin90}.
\end{proof}
Therefore we have
\begin{restatable}{proposition}{ccsem}\label{prop:par}
  Let $\aCS$ be a system. Then $\ssem{\aCS}$ is
  concurrency closed.
\end{restatable}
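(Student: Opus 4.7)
The plan is to prove both inclusions defining concurrency closure. One direction, namely $\ssem{\aCS} \subseteq \Set{\acword \in \alfint^\infty \sst \text{there is } \acword' \in \ssem{\aCS} \text{ such that } \acword \comm \acword'}$, is immediate from the reflexivity of $\comm$.

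For the opposite inclusion, I would take $\acword' \in \alfint^\infty$ with $\acword' \comm \acword$ for some $\acword \in \ssem{\aCS}$ and verify the two conditions of \cref{def:syncSem}: $\ptpof[\acword'] \subseteq \ptpset$ and $\proj{\acword'}{\p} \in \aCS(\p)$ for every $\p \in \ptpset$. The crux is the intermediate claim that $\acword' \comm \acword$ forces $\proj{\acword'}{\p} = \proj{\acword}{\p}$ for every participant $\p$. I would derive this from \cref{lemma:equalProj} applied to the g-language $\clang = \pref[\Set{\acword, \acword'}]$; its set of participants is $\ptpof[\acword] \cup \ptpof[\acword']$, which is finite because $\acword \in \ssem{\aCS}$ makes $\ptpof[\acword]$ finite, and the $\ll$ characterization in \cref{def:cclos} shows that each interaction in any finite prefix of $\acword'$ arises by permuting independent interactions in some finite extension of a prefix of $\acword$; such permutations do not create new participants, hence $\ptpof[\acword']\subseteq\ptpof[\acword]$.

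Once the projection equality is in hand, the two conditions of \cref{def:syncSem} follow at once: $\p \in \ptpof[\acword']$ iff $\proj{\acword'}{\p} \neq \epsilon$ iff $\proj{\acword}{\p} \neq \epsilon$ iff $\p \in \ptpof[\acword] \subseteq \ptpset$, and $\proj{\acword'}{\p} = \proj{\acword}{\p} \in \aCS(\p)$ by the hypothesis that $\acword \in \ssem{\aCS}$.

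The hard part will be justifying the projection equality for infinite words: the relation $\comm$ on $\alfint^\omega$ is defined via mutual $\ll$-approximation of finite prefixes, so while projection equality for finite $\comm$-related words is an elementary consequence of repeatedly swapping independent interactions (which involve disjoint participants, hence cannot affect any projection), its lifting to the $\omega$-limit hinges precisely on Gastin's Proposition 2.2 that underlies \cref{lemma:equalProj}. Modulo this, the proof is essentially a routine unpacking of definitions.
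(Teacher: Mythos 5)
Your proof is correct and follows essentially the same route as the paper: the paper's one-line argument ("closure under swap does not change the projection, by \cref{lemma:equalProj}") is exactly the key step you identify and unpack, including the appeal to Gastin's result for the $\omega$-word case. Your additional care about the finiteness of the participant set and the containment $\ptpof[\acword']\subseteq\ptpof[\acword]$ is a legitimate (and correctly handled) detail that the paper elides.
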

\begin{proof}
  Trivial, since closure under swap does not change the projection by
  \cref{lemma:equalProj}.
\end{proof}

The intuition that g-languages, equipped with the projection and
semantic functions of \cref{def:projection} and \cref{def:syncSem}, do
correspond to a natural syntax and semantics for the abstract notion
of choreography, can be strengthened by showing that these functions
form a Galois connection.

Let us define
$\glanguages = \Set{\clang \sst \clang \text{ is a g-language}}$ and
$\csystems = \Set{\aCS \sst \aCS \text{ is a system}}$.
Moreover, given $\aCS, \aCS' \in \csystems$, we define
$\aCS\subseteq\aCS'$ if $\aCS(\p)\subseteq \aCS'(\p)$ for each \p.

\begin{restatable}{proposition}{gc}\label{prop:gc}
  The functions $\proj{\_}{}$ and $\ssem{\_}$ form a (monotone)
  Galois connection between the posets $(\glanguages,\subseteq)$ and
  $(\csystems,\subseteq)$, namely, $\proj{\_}{}$ and $\ssem{\_}$ are
  monotone functions such that, given $\clang\in\glanguages$ and
  $\aCS\in\csystems$:
$$\proj{\clang}{}\subseteq\aCS  \iff  \clang\subseteq\ssem{\aCS}$$
\end{restatable}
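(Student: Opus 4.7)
The plan is to verify the two monotonicity requirements separately and then establish the biconditional, which is essentially a direct unfolding of \cref{def:projection} and \cref{def:syncSem}. The key observation is that membership in the projected system and membership in the synchronous semantics are linked by the same quantified statement about projections, just read in opposite directions.

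First I would check monotonicity of $\proj{\_}{}$: if $\clang \subseteq \clang'$, then for each $\p$, any $\aaword \in \proj{\clang}{\p}$ arises as $\proj{\acword}{\p}$ for some $\acword \in \clang \subseteq \clang'$, hence $\aaword \in \proj{\clang'}{\p}$. Monotonicity of $\ssem{\_}$ is equally routine: if $\aCS \subseteq \aCS'$ and $\acword \in \ssem{\aCS}$, then for each $\p$ we have $\proj{\acword}{\p} \in \aCS(\p) \subseteq \aCS'(\p)$, whence $\acword \in \ssem{\aCS'}$.

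For the biconditional, I would argue both directions by unfolding definitions. For the forward direction, assume $\proj{\clang}{} \subseteq \aCS$ and pick $\acword \in \clang$; for every $\p$ in the participant set, $\proj{\acword}{\p}$ lies in $\proj{\clang}{\p}$ by \cref{def:projection}, and hence in $\aCS(\p)$ by assumption, so $\acword \in \ssem{\aCS}$ by \cref{def:syncSem}. For the converse, assume $\clang \subseteq \ssem{\aCS}$ and pick $\aaword \in \proj{\clang}{\p}$; by \cref{def:projection} there exists $\acword \in \clang$ with $\aaword = \proj{\acword}{\p}$, and since $\acword \in \ssem{\aCS}$ we immediately obtain $\aaword = \proj{\acword}{\p} \in \aCS(\p)$, as required.

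The proof is almost entirely formal manipulation, so there is no real obstacle; the only point requiring a brief comment is to fix conventions about participant sets so that the pointwise inclusion $\proj{\clang}{} \subseteq \aCS$ and the condition on participants in \cref{def:syncSem} match up. I would handle this by noting that $\proj{\clang}{}$ is a system over $\ptpof[\clang]$ and that the comparison $\aCS \subseteq \aCS'$ between systems implicitly requires compatible participant sets (or, equivalently, takes the componentwise view on the shared participants), so the quantifications over $\p$ in the two directions range over the same set.
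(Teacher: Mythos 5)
Your proposal is correct and follows essentially the same route as the paper's proof: both directions are direct unfoldings of \cref{def:projection} and \cref{def:syncSem}, with the forward direction reducing to the chain $\proj{\acword}{\p} \in \proj{\clang}{\p} \subseteq \aCS(\p)$ for each $\p$ (the paper phrases this via the composite $\ssem{\proj{\clang}{}} \subseteq \ssem{\aCS}$, but the content is identical) and the converse being word-for-word the paper's argument. Your remark on aligning the participant sets is a reasonable bookkeeping point that the paper leaves implicit.
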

\begin{proof}
  Note that $\proj{(\_)}{}$ and $\ssem{\_}$ are trivially monotone by their
  definitions.
  \\
  ($\implies$)
  We first observe that
  \[\begin{array}{llll}
	 \ssem{\proj{\clang}{}}
		& =
		&
		\Set{\acword' \in \alfint^\infty \sst  \ptpof[\acword'] \subseteq \ptpof[\clang] \text{ and for all } \p \in \ptpof[\clang], \ \proj{\acword'}{\p} \in \proj \clang \p}
		\\
		& \subseteq
		&
		\Set{\acword' \in \alfint^\infty \sst  \ptpof[\acword'] \subseteq \ptpof[\clang] \text{ and for all } \p \in \ptpof[\clang], \ \proj{\acword'}{\p} \in  \aCS(\p)}
		\\		& = & \ssem{\aCS}
	 \end{array}
  \]
  where the equalities above hold by \cref{def:syncSem} and the inclusion holds by hypothesis
  ($\proj{\clang}{} \subseteq\aCS$).
  Hence, given a word $\acword \in \clang$, we get
  $\acword \in \ssem{\proj{\clang}{}}$ by construction and  $\acword \in \ssem{\aCS}$ by the above.
  \\
  ($\Longleftarrow$) We have to show that
  $\proj{\clang}{A}\subseteq\aCS(\p)$ for each $\p$.  Let hence
  $\acword_{\p} \in \proj{\clang}{\p}$. By definition there is
  $\acword \in \clang$ with $\proj{\acword}{\p}=\acword_{\p}$.
 Then $\acword \in \ssem{\aCS}$ by hypothesis and hence, by
 \cref{def:syncSem}, $\proj{\acword}{\p}(=\acword_{\p})\in\aCS(\p)$.
\end{proof}

Notice that, by \cref{prop:gc}, $\proj{\clang}{}\subseteq\aCS$ can be
understood as \quo{$\clang$ can be realised by $\aCS$} according to the
notion of realisability frequently used in the literature, namely that
all behaviours of the choreography are possible for the system.

  A \emph{closure operator} is a function $\cl$ whose domain and
  codomain are ordered sets and such that $\cl$ is monotone
  ($x \leq y \implies \cl[x] \leq \cl[y]$), extensive
  ($x \leq \cl[x]$), and idempotent ($\cl[x] = \cl[{\cl[x]}]$).
  It is well-known that, given a Galois connection
  $(f_\star,f^\star)$, $\cl = f^\star \circ f_\star$ is a closure
  operator.
  In our setting $\ssem{\proj \_ {}}$ is a closure operator,
hence the above boils down to the following corollary:
\begin{corollary}\label{fac:clgc}
  For all g-languages $\clang, \clang' \in \glanguages$,
  \begin{description}
  \item[monotonicity]
	 $\clang \subseteq \clang' \implies \ssem{\proj \clang {}}
	 \subseteq \ssem{\proj{\clang'} {}}$,
  \item[extensiveness] $\clang \subseteq \ssem{\proj{\clang}{}}$,
  \item[idempotency] $\ssem{{\proj \clang {}}} = \ssem{\proj{\ssem{\proj{\clang}{}}}{}}$.
  \end{description}
\end{corollary}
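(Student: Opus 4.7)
The plan is to observe that this corollary is an immediate instance of the standard fact --- explicitly recalled just before the statement --- that the composition of the two maps of a Galois connection is a closure operator. Since \cref{prop:gc} has just established that $(\proj{\_}{},\ssem{\_})$ is a monotone Galois connection between $(\glanguages,\subseteq)$ and $(\csystems,\subseteq)$, the map $\ssem{\proj{\_}{}}$ is such a closure operator, and the three bulleted properties are precisely monotonicity, extensiveness and idempotency.

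For the sake of clarity I would spell out each case by specialising the biimplication of \cref{prop:gc}. \textbf{Monotonicity} is trivial: both $\proj{\_}{}$ and $\ssem{\_}$ are monotone (this is already recorded inside the proof of \cref{prop:gc}), so their composition is monotone. \textbf{Extensiveness} is obtained by instantiating the Galois biimplication with $\aCS \eqdef \proj{\clang}{}$: the left-hand side $\proj{\clang}{}\subseteq \proj{\clang}{}$ is tautological, so the right-hand side $\clang\subseteq\ssem{\proj{\clang}{}}$ holds. \textbf{Idempotency} follows the standard Galois-theoretic pattern: the inclusion $\ssem{\proj{\clang}{}}\subseteq\ssem{\proj{\ssem{\proj{\clang}{}}}{}}$ is just extensiveness applied to the g-language $\ssem{\proj{\clang}{}}$; for the converse, instantiate the Galois biimplication with $\clang'\eqdef\ssem{\proj{\clang}{}}$ and $\aCS\eqdef\proj{\clang}{}$, so from the trivial $\ssem{\proj{\clang}{}}\subseteq\ssem{\proj{\clang}{}}$ we obtain $\proj{\ssem{\proj{\clang}{}}}{}\subseteq\proj{\clang}{}$, and then monotonicity of $\ssem{\_}$ yields $\ssem{\proj{\ssem{\proj{\clang}{}}}{}}\subseteq\ssem{\proj{\clang}{}}$.

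There is essentially no obstacle here; the only mild wrinkle is the implicit assumption that $\ssem{\proj{\clang}{}}$ is again a g-language (so that the outer occurrences of $\proj{\_}{}$ in the idempotency statement are well-typed), but this is guaranteed by \cref{def:syncSem} together with the fact that prefix-closure of each local $\alang_{\p}$ and the finiteness of $\ptpof[\clang]$ transfer to the semantics. Hence the proof is a one-line invocation of the Galois connection plus the three routine specialisations above.
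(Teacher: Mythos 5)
Your proposal is correct and matches the paper's treatment exactly: the paper likewise observes immediately before the statement that, since \cref{prop:gc} establishes a Galois connection, $\ssem{\proj{\_}{}}$ is a closure operator, and derives the corollary from this standard fact without further argument. Your explicit specialisations of the biimplication for each of the three properties are routine and sound, and merely spell out what the paper leaves implicit.
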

As we shall see, extensiveness coincides with completeness
(\cref{def:cc}) and, together with monotonicity, implies
\emph{harmonicity} (\cref{def:llive}).

%%% Local Variables:
%%% mode: latex
%%% TeX-master: "main"
%%% TeX-master: "main"
%%% TeX-master: "main"
%%% End:

\section{Correctness and Completeness}\label{sec:ui}
A g-language specifies the expected communication behaviour of a
system made of several components.
\cref{def:commSyst} formalises such systems in terms of l-languages.
In the present section we deal with properties relating a
communicating system with a specification (i.e., a g-language).  In
particular, we first introduce correctness and completeness of a
communicating system with respect a g-language. The latter property
follows by the Galois connection discussed in the previous section.
Instead in order to prove correctness conditions are introduced in
terms of a closure property, CUI; this requires to handle
continuity. We highlight the expressiveness of CUI g-languages by
showing that there exist non-regular CUI g-languages.

\begin{definition}[Correctness and completeness]\label{def:cc}
  Let $\clang$ be a g-language.
  A system $\aCS$ is \emph{correct with respect to $\clang$} if
  $\ssem{\aCS} \subseteq \clang$ and it is \emph{complete with respect
	 to $\clang$} if $\ssem{\aCS} \supseteq \clang$.
\end{definition}
Correctness and completeness are related to existing notions.
For instance, in the literature on multiparty session types (see,
e.g., the survey~\cite{hlvlcmmprttz16}) correctness is analogous to
\emph{subject reduction} and completeness to \emph{session fidelity}.

Notice that, by \cref{prop:gc}, we can interpret
$\proj{\clang}{} \subseteq \aCS$ as a characterisation for
completeness of $\aCS$ with respect to $\clang$.
Hence, an immediate result of the Galois connection defined in
\cref{sec:c-lang} is that any system projected from a g-language is
complete.
In fact, completeness coincides with the extensiveness property of the
closure operator associated to our Galois connection.
\begin{restatable}{corollary}{completeness}\label{th:completeness}
  For any g-language $\clang$, $\proj{\clang}{}$ is complete with
  respect to $\clang$.
\end{restatable}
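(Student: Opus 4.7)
The plan is to derive this as an immediate consequence of the Galois connection established in \cref{prop:gc}. Recall that \cref{prop:gc} gives us, for any $\clang \in \glanguages$ and $\aCS \in \csystems$, the equivalence $\proj{\clang}{}\subseteq\aCS \iff \clang\subseteq\ssem{\aCS}$. Instantiating this with $\aCS = \proj{\clang}{}$, the left-hand side $\proj{\clang}{}\subseteq\proj{\clang}{}$ holds trivially by reflexivity of $\subseteq$, so the right-hand side $\clang \subseteq \ssem{\proj{\clang}{}}$ must hold as well, and this is exactly the definition of completeness of $\proj{\clang}{}$ with respect to $\clang$ (cf. \cref{def:cc}). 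Equivalently, the statement is precisely the extensiveness clause of \cref{fac:clgc}, which itself follows from the general fact that $f^\star \circ f_\star$ is a closure operator whenever $(f_\star, f^\star)$ is a Galois connection.

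For readers preferring a direct argument bypassing the Galois-theoretic packaging, one can also unfold the definitions: given $\acword \in \clang$, the goal is to show $\acword \in \ssem{\proj{\clang}{}}$. By \cref{def:syncSem} this amounts to checking two things. First, $\ptpof[\acword] \subseteq \ptpof[\clang]$, which is immediate since $\acword$ itself is a word of $\clang$ and therefore its participants are among those of $\clang$. Second, for every $\p \in \ptpof[\clang]$, the projection $\proj{\acword}{\p}$ must lie in $\proj{\clang}{\p}$; this is immediate from \cref{def:projection}, which defines $\proj{\clang}{\p}$ as the set of all projections on \p\ of words of $\clang$, so $\proj{\acword}{\p}$ belongs to it by construction.

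There is no real obstacle: the result is essentially a bookkeeping consequence of the definitions, and it is exactly the half of the Galois connection that was already spelled out in the proof of \cref{prop:gc}. The only point worth a remark is that the finiteness condition $\ptpof[\clang]$ finite (\cref{def:chorlang}) ensures $\proj{\clang}{}$ is a bona fide system in the sense of \cref{def:commSyst}, so that writing $\ssem{\proj{\clang}{}}$ is meaningful; this is handled implicitly by the statements of \cref{def:projection} and \cref{prop:gc}.
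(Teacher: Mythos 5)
Your proposal is correct and matches the paper's own treatment: the paper derives \cref{th:completeness} exactly as you do, by instantiating the Galois connection of \cref{prop:gc} with $\aCS = \proj{\clang}{}$ (equivalently, by reading it off as the extensiveness clause of \cref{fac:clgc}), and gives no further proof. Your additional direct unfolding via \cref{def:syncSem} and \cref{def:projection} is a sound, if redundant, sanity check.
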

It is easy to check that a similar result does not hold for correctness. 
If we consider the g-language $\clang$ of \cref{ex:glang}, we have that, 
as shown in \cref{ex:simple2},  $\gint[][c][w][a] \cat \gint[][c][w][b]\cat\gint[][a][g][b]\in\ssem{\proj{\clang}{}}$ but $\gint[][c][w][a] \cat \gint[][c][w][b]\cat\gint[][a][g][b]\not\in\clang$.
That is $\ssem{\proj{\clang}{}} \not\subseteq \clang$.

\paragraph{Characterising correctness for projected systems.}
Can we identify conditions on g-languages to ensure correctness of
their projections?
The answer to this question is positive since correctness can be
characterised as a closure property.

\begin{definition}[CUI]\label{def:closedness}
  A g-language $\clang$ is \emph{closed under unknown information}, in
  symbols $\cuui$, if for all finite words
  $\acword_1 \cat \aint, \acword_2 \cat \aint \in \clang$ with the
  same final interaction $\aint = \gint \in \alfint$ we have
  $\acword \cat \aint \in \clang$ for all $\acword \in \clang$ such
  that $\gproj[\acword][\p]= \gproj[\acword_1][\p]$ and
  $\gproj[\acword][\q] = \gproj[\acword_2][\q] $.
\end{definition}
Intuitively, participants cannot distinguish words with the same
projection on their role.
Hence, if two participants \p\ and \q\ find words $\acword_1$ and
$\acword_2$ compatible with another word $\acword$, and interaction
$\gint$ can occur after both $\acword_1$ and $\acword_2$,
then it should be enabled also after $\acword$.
Indeed, \p\ cannot know whether the current word is $\acword$ or
$\acword_1$ and, likewise, \q\ cannot tell apart $\acword$ and
$\acword_2$.
Hence, $\clang$ should encompass $\acword$ because, after the
execution of $\acword$, \p\ and \q\ are willing to take $\gint$, which
can thus happen at the system level.
Closure under unknown information (CUI for short) lifts this
requirement at the level of g-language.
\begin{example}\label{ex:notCUI}
  The language $\clang$ in \cref{ex:glang} is not CUI because it
  contains the words
  \[
	 \acword_1 \cat \aint = \gint[][c][w][a] \cat \gint[][a][g][b] \qquad
	 \acword_2 \cat \aint = \gint[][c][w][b] \cat \gint[][a][g][b] \qqand \acword =
	 \gint[][c][w][a] \cat \gint[][c][w][b]
  \]
  and \p\ cannot distinguish between $\acword_1$ and $\acword$ while
  \q\ cannot distinguish between $\acword_2$ and $\acword$;
  nonetheless
  $\acword \cat \gint[][a][g][b] = \gint[][c][w][a] \cat \gint[][c][w][b] \cat
  \gint[][a][g][b]\not \in \clang$.
  Notice that, as shown after \cref{th:completeness}
  $\clang \not\supseteq \ssem{\proj{\clang}{}}$.
  \finex
\end{example}
The language in \cref{ex:notCUI} is not the semantics of any system,
in fact languages obtained as semantics of a communicating system are
always CUI.

\begin{restatable}[Semantics is CUI]{proposition}{semiscui}\label{prop:semiscui}
  For all systems $\aCS$, $\ssem{\aCS}$ is CUI.
\end{restatable}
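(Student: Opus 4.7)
The plan is to unfold both definitions and argue componentwise on the projections, exploiting the fact that projections onto participants other than $\p$ and $\q$ ignore the final interaction $\aint = \gint$ altogether. Concretely, I would fix a system $\aCS$ over $\ptpset$ and assume the hypotheses of \cref{def:closedness} applied to $\clang = \ssem{\aCS}$: finite words $\acword_1 \cat \aint, \acword_2 \cat \aint \in \ssem{\aCS}$ and some $\acword \in \ssem{\aCS}$ with $\proj{\acword}{\p} = \proj{\acword_1}{\p}$ and $\proj{\acword}{\q} = \proj{\acword_2}{\q}$. The goal is to show $\acword \cat \aint \in \ssem{\aCS}$.

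First I would check the side condition on participants: $\ptpof[\acword] \subseteq \ptpset$ holds because $\acword \in \ssem{\aCS}$, and $\ptpof[\aint] = \Set{\p,\q} \subseteq \ptpset$ because $\acword_1 \cat \aint \in \ssem{\aCS}$; hence $\ptpof[\acword \cat \aint] \subseteq \ptpset$. Then, by \cref{def:syncSem}, I must show $\proj{(\acword \cat \aint)}{X} \in \aCS(\ptp[X])$ for every $\ptp[X] \in \ptpset$. I would split into three cases according to whether $\ptp[X]$ equals $\p$, $\q$, or neither.

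For $\ptp[X] = \p$, using that projection is a homomorphism and $\proj{\aint}{\p} = \aout$, we get $\proj{(\acword \cat \aint)}{\p} = \proj{\acword}{\p} \cat \aout = \proj{\acword_1}{\p} \cat \aout = \proj{(\acword_1 \cat \aint)}{\p}$, which belongs to $\aCS(\p)$ since $\acword_1 \cat \aint \in \ssem{\aCS}$. The case $\ptp[X] = \q$ is symmetric, using $\proj{\acword}{\q} = \proj{\acword_2}{\q}$ and $\acword_2 \cat \aint \in \ssem{\aCS}$. For $\ptp[X] \notin \Set{\p,\q}$, we have $\proj{\aint}{X} = \epsilon$, so $\proj{(\acword \cat \aint)}{X} = \proj{\acword}{X}$, which lies in $\aCS(\ptp[X])$ because $\acword \in \ssem{\aCS}$.

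There is no real obstacle here: the proof is essentially a case analysis that follows directly from the homomorphic extension of projection and the pointwise definition of $\ssem{\_}$. The only mild subtlety is making sure the side condition on participants in \cref{def:syncSem} is preserved, which is immediate as noted above.
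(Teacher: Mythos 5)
Your proof is correct and follows essentially the same route as the paper's: unfold \cref{def:syncSem} and verify the projection condition participant by participant, splitting on whether the participant is $\p$, $\q$, or neither. The only difference is that you also explicitly check the side condition $\ptpof[\acword \cat \aint] \subseteq \ptpset$, which the paper leaves implicit; this is a harmless (indeed slightly more careful) addition.
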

\begin{proof}
  Let $\aCS$ be a system over $\ptpset$.
  In order to show closure of $\ssem{\aCS}$ under unknown information, let
  us take words
  $\acword_{\p} \cat \gint,\
  \acword_{\q} \cat \gint,\ \acword' \in \ssem{\aCS}$, such that
  \begin{eqnarray*}
	 \proj{\acword_{\p}}{\p}=\proj{\acword'}{\p}
	 \qqand
	 \proj{\acword_{\q}}{\q}=\proj{\acword'}{\q}
  \end{eqnarray*}
  By \cref{def:closedness}, we have to show that
  $\acword'\cat\gint \in \ssem{\aCS}$.
  This in turn, by definition of synchronous semantics, amounts to
  show that $\proj{(\acword'\cat\gint)}{x} \in \aCS(\px)$ for each
  $\p[x] \in \ptpset$.
  If $\p[x] \neq \p,\q$ then
  $\proj{(\acword'\cat\gint)} x = \proj{\acword'} x \in \aCS(\px)$
  since $\acword' \in \ssem \aCS$.
  Otherwise $\p[x] \in \Set{\p,\q}$; hence, by hypothesis and
  definition of projection we have
  \[
	 \proj{(\acword'\cat\gint)}{\p} = \proj{\acword_{\p}}{\p} \cat \aout \in \aCS(\p)
	 \qand
	 \proj{(\acword'\cat\gint)}{\q} =
	 \proj{\acword_{\q}}{\q} \cat \ain \in \aCS(\q)
  \]
  where the last two equalities hold by hypothesis.
\end{proof}

The next property connects finite and infinite words in a language; it
corresponds to the closure under the limit operation used in
$\omega$-languages~\cite{Eilenberg76,Staiger97}.
\begin{definition}[Continuity]\label{def:continuity}
  A language $L$ on an alphabet $\Sigma$ is \emph{continuous} if
  $z \in L$ for all $z \in \Sigma^\omega$ such that $\pref[z] \cap L$
  is infinite.
\end{definition}
This notion of continuity, besides being quite natural, is the most
suitable for our purposes among the possible
ones~\cite{Redziejowski86}.
Intuitively, a language $L$ is continuous if an $\omega$-word is in
$L$ when infinitely many of its approximants (i.e., finite prefixes)
are in $L$.
A g-language $\clang$ is \emph{standard or continuous} (\sclang, for
short) if either $\clang \subseteq \alfint^\star$ or $\clang$ is
continuous.
Notice that, for prefix-closed languages, for all
$z \in \Sigma^\omega$ we have that $\pref[z] \cap L$ is infinite
iff $\pref[z] \subseteq L$.

Closure under unknown information characterises correct
projected systems.
\begin{restatable}[Characterisation of correctness]{theorem}{correctness}
  \label{th:correctness}
  If $\proj{\clang}{}$ is correct with respect to~$\clang$ then $\cuui$ holds.
  If $\clang$ is an \sclang and $\cuui$ then $\proj{\clang}{}$
  is correct with respect to $\clang$.
\end{restatable}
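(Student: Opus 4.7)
The statement has two directions; both are handled separately.

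For the forward direction (correctness $\Rightarrow$ CUI), I would fix finite words $\acword_1\cat\aint,\acword_2\cat\aint\in\clang$ sharing the same final interaction $\aint=\gint$, together with some $\acword\in\clang$ satisfying $\proj{\acword}{\p}=\proj{\acword_1}{\p}$ and $\proj{\acword}{\q}=\proj{\acword_2}{\q}$. The plan is to show $\acword\cat\aint\in\ssem{\proj{\clang}{}}$ and then invoke correctness to land in $\clang$. For this, I check the projection on every participant: on $\p$ it equals $\proj{\acword_1}{\p}\cat\aout=\proj{(\acword_1\cat\aint)}{\p}\in\proj{\clang}{\p}$, on $\q$ it equals $\proj{(\acword_2\cat\aint)}{\q}\in\proj{\clang}{\q}$, and on any other $\p[x]$ it equals $\proj{\acword}{\p[x]}\in\proj{\clang}{\p[x]}$ because $\acword\in\clang$. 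Hence $\acword\cat\aint\in\ssem{\proj{\clang}{}}\subseteq\clang$, which is exactly CUI.

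For the converse direction, assuming $\clang$ is an \sclang\ and CUI, I aim to show $\ssem{\proj{\clang}{}}\subseteq\clang$. First I treat finite words by induction on their length. The base $\acword=\emptyword$ is in $\clang$ by prefix-closure (assuming $\clang$ non-empty; the empty case is trivial). For the step, let $\acword=\acword'\cat\aint\in\ssem{\proj{\clang}{}}$ with $\aint=\gint$. From $\proj{\acword}{\p}\in\proj{\clang}{\p}$ I select a witness $\acword_1'\in\clang$ whose $\p$-projection ends in $\aout$; truncating $\acword_1'$ at the (necessarily $\aint$-labelled) interaction contributing this last $\aout$ and using prefix-closure of $\clang$ yields some $\acword_1\cat\aint\in\clang$ with $\proj{\acword_1}{\p}=\proj{\acword'}{\p}$. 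Symmetrically I extract $\acword_2\cat\aint\in\clang$ with $\proj{\acword_2}{\q}=\proj{\acword'}{\q}$. Since $\ssem{\proj{\clang}{}}$ is prefix-closed, $\acword'\in\ssem{\proj{\clang}{}}$, so by the induction hypothesis $\acword'\in\clang$. Applying CUI with $\acword_1,\acword_2,\acword'$ gives $\acword=\acword'\cat\aint\in\clang$, as required.

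It remains to handle infinite words in $\ssem{\proj{\clang}{}}$. If $\clang\subseteq\alfint^{\star}$, \cref{fact:finfin} rules out infinite words entirely. Otherwise $\clang$ is continuous: for any infinite $\acword\in\ssem{\proj{\clang}{}}$, every finite prefix of $\acword$ is again in $\ssem{\proj{\clang}{}}$ by prefix-closure, hence lies in $\clang$ by the finite case. Thus $\pref[\acword]\cap\clang$ is infinite, and continuity (\cref{def:continuity}) delivers $\acword\in\clang$.

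The main obstacle, and the only place requiring care, is the extraction of the two witnesses $\acword_1\cat\aint$ and $\acword_2\cat\aint$ in the inductive step: one has to argue that a word in $\clang$ whose $\p$-projection ends in $\aout$ can be truncated so that the truncation ends with exactly the interaction $\aint=\gint$ (matching the same $\q$ and $\msg$), and analogously for $\q$. This is where the bijection between occurrences of $\aout$ in $\proj{\acword_1'}{\p}$ and occurrences of interactions of the form $\gint$ in $\acword_1'$ is used, together with prefix-closure of $\clang$.
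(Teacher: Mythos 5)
Your proposal is correct and follows essentially the same route as the paper's proof: the forward direction is identical (check the projection of $\acword\cat\aint$ on every participant and invoke correctness), and the converse uses the same key step of extracting witnesses $\acword_1\cat\aint,\acword_2\cat\aint\in\clang$ by truncating words of $\clang$ at the interaction producing the last action of the relevant projection, then applying CUI, with continuity covering the infinite words. The only differences are organizational: you run an induction on word length where the paper argues by contradiction from a longest prefix in $\clang$, and your treatment of infinite words (all finite prefixes lie in $\clang$ by the finite case, so continuity applies) is a slightly cleaner packaging of the paper's case split.
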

\begin{proof}
  We prove the first implication.
  In order to show closure of $\clang$ under unknown information, let
  us take words
  $\acword_{\p} = \acword'_{\p} \cat \gint,\ \acword_{\q} =
  \acword'_{\q} \cat \gint,\ \acword' \in \clang$, such that
  \begin{eqnarray}\label{eq:wawb}
	 \proj{\acword'_{\p}}{\p}=\proj{\acword'}{\p}
	 \qqand
	 \proj{\acword'_{\q}}{\q}=\proj{\acword'}{\q}
  \end{eqnarray}
  By \cref{def:closedness}, we have to show that
  $\acword'\cat\gint \in \clang$.
  Thanks to correctness, it is enough to show that
  $\acword'\cat\gint \in \ssem{\proj{\clang}{}}$.
  This in turn, by definition of synchronous semantics, amounts to
  show that
  $\aaword_{\p[x]} =\proj{(\acword'\cat\gint)}{x} \in
  \proj{\clang}{x}$ for each $\p[x] \in \ptpset$.
  If $\p[x] \neq \p,\q$ then
  $\aaword_{\p[x]} = \proj{(\acword'\cat\gint)} x = \proj{\acword'} x
  \in \proj{\clang} x$.
  Otherwise $\p[x] \in \Set{\p,\q}$; we consider only the case
  $\p[x] = \p$ since the other case is analogous.
  We have
  \[
	 \begin{array}{lcll}
		\aaword_{\p[x]} &=& \proj{(\acword'\cat\gint)}{\p} \\
							 &=  & \proj{\acword'}{\p}\cat\proj{(\gint)}{\p} & \text{(by def. of projection)}\\
							 & = &\proj{\acword'_{\p}}{\p}\cat\proj{(\gint)}{\p} & \text{(by (\ref{eq:wawb}))}\\
							 & = & \proj{(\acword'_{\p}\cat\gint)}{\p}  \in \proj \clang A & \text{(by def. of projection and  $\acword'_{\p}\cat\gint\in\clang$)}
	 \end{array}
  \]
  as required.

  We now prove the second implication.
  By \cref{def:cc}, we have to show that
  $\ssem{\proj{\clang}{}} \subseteq \clang$; we proceed by
  contradiction.
  Fix a word $\acword \in \ssem{\proj{\clang}{}} \setminus \clang$.
  The only possible cases are:
 \proofcase{ $\clang \subseteq \alfint^\star$} By \cref{fact:finfin},
	 $\ssem {\proj \clang {}}$ does not contain infinite
	 g-words.
	 Hence, $\acword$ is finite and we can take its longest prefix
	 $\acword'$ that belongs to $\clang$.
	 Let $\aint = \gint$ be the interaction immediately following
	 $\acword'$ in $\acword$.
	 We can choose $\acword_{\p}, \acword_{\q}\in\clang$ such that
	 $\proj{\acword_{\p}}{a}= \proj{\acword}{a}$ and
	 $\proj{\acword_{\q}}{b}= \proj{\acword}{b}$.
	 (Recall that by \cref{def:syncSem} for each $\p[x] \in \ptpset$
	 there is a word $\acword_{\p[x]} \in \clang$ such that
	 $\proj{\acword}{x} = \proj{\acword_{\ptp[x]}}{x}$.)
	 Take the shortest prefixes $\acword'_{\p}$ and $\acword'_{\q}$ of
	 $\acword_{\p}$ and $\acword_{\q}$ respectively such that
	 \[
		\proj{\acword'_{\p}}{a} = \proj{(\acword'\cat\gint)}{a}
		\qqand
		\proj{\acword'_{\q}}{b}= \proj{(\acword'\cat\gint)}{b}
	 \]
	 It is then easy to check that $\acword'_{\p}$ and $\acword'_{\q}$
	 have necessarily the following shapes
	 \[
		\acword'_{\p} = \tilde{\acword}_{\p}\cat\gint \qquad\qquad
		\acword'_{\q} = \tilde{\acword}_{\q}\cat\gint
	 \] where
	 \begin{itemize}
	 \item $\tilde{\acword}_{\p},\tilde{\acword}_{\q}\in\clang$, since
		$\clang$ is prefix-closed;
	 \item $\proj{\tilde{\acword}_{\p}}{a}= \proj{\acword'}{a}$ and
		$\proj{\tilde{\acword}_{\q}}{b}= \proj{\acword'}{b}$.
	 \end{itemize}
	 Now, by $\acword'\in\clang$ and the definition of CUI, we infer that $\acword' \cat\gint \in \clang$
	 against the hypothesis that $\acword'$ was the longest such prefix
	 of $\acword$.
 \proofcase{ $\clang \nsubseteq \alfint^\star$} If the set
	 $\widehat \clang = \Set{\acword' \in \alfint^\star \sst \acword'
		\preceq \acword \text{ and } \acword' \in \clang}$ is finite then we
	 take the longest g-word in $\widehat \clang$ and the proof is as in
	 the previous case. %{\bf I}.
	 Otherwise we have a contradiction because, by continuity,
	 $\acword \in \clang$.
%  \end{description}
\end{proof}
Notice that CUI is defined in terms of g-languages only, hence
checking CUI does not require to build the corresponding system.
This allows us to study CUI on specific classes of g-languages.
For instance, we can show that CUI is decidable on a class of languages
accepted by B\"uchi automata (cf. \cref{sec:chor-automata}).
An interesting observation is that strengthening the precondition of
\cref{def:closedness} with the additional requirement
$\acword_1 = \acword_2$ would invalidate \cref{th:correctness}.
Indeed, $\clang \cup \{\gint[][a][g][b]\}$ with $\clang$ the
language in \cref{ex:glang} would become CUI but not correct.
The next example shows that the continuity condition in
\cref{th:correctness} is necessary for languages containing infinite
g-words.
\begin{example}[Continuity matters] The CUI language
  \[
	 \clang = \pref[{\bigcup_{i \geq 0}\Set{
		  \gint[][@][l] \cat \gint[][b][n][c] \cat (\gint[][c][n][d])^i}} \cup
		 \Set{\gint[][@][r] \cat \gint[][b][n][c] \cat (\gint[][c][n][d])^\omega}
		]
  \]
  does contain an infinite word but it is not continuous.
  The projection of $\clang$ is not correct because its semantics
  contains the g-word
  $\gint[][@][l] \cat \gint[][b][n][c] \cat (\gint[][c][n][d])^{\omega}$.
  This word, in fact, does not belong to $ \clang$ since the
  projections of \p[c] and \p[d] can exchange infinitely many messages
  $\msg[n]$ due to the infinite g-word of $\clang$ regardless whether
  \p\ and \q\ exchange $\msg[l]$ or $\msg[r]$.
  \finex
\end{example}
Notice that, since $\clang\subseteq\ssem{\proj{\clang}{}}$ always
holds, \cref {th:correctness} implies that $\cuui$
characterises the languages $\clang$ such that
$\clang=\ssem{\proj{\clang}{}}$.
Besides, the following corollary descends from \cref {th:correctness}.
\begin{restatable}{corollary}{smallest}
  If $\clang$ is an \sclang, $\ssem{\proj \clang {}}$ is the smallest
  CUI \sclang containing $\clang$.
\end{restatable}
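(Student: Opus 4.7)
I will verify that $\ssem{\proj{\clang}{}}$ (a)~contains $\clang$, (b)~is CUI, (c)~is an \sclang, and (d)~is contained in every CUI \sclang containing $\clang$. Part (a) follows from extensiveness of the closure operator $\ssem{\proj{\_}{}}$ established in \cref{fac:clgc}, and (b) is a direct instance of \cref{prop:semiscui} applied to the system $\proj{\clang}{}$.

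For minimality (d), let $\clang'$ be any CUI \sclang with $\clang \subseteq \clang'$. Applying \cref{th:correctness} to $\clang'$ yields $\ssem{\proj{\clang'}{}} \subseteq \clang'$; combined with monotonicity of $\ssem{\proj{\_}{}}$ from \cref{fac:clgc} and the hypothesis $\clang \subseteq \clang'$, this gives the desired chain $\ssem{\proj{\clang}{}} \subseteq \ssem{\proj{\clang'}{}} \subseteq \clang'$, so any CUI \sclang above $\clang$ already lies above $\ssem{\proj{\clang}{}}$.

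It remains to prove (c). I would split on which clause of the \sclang definition $\clang$ satisfies. If $\clang \subseteq \alfint^{\star}$, then \cref{fact:finfin} directly yields $\ssem{\proj{\clang}{}} \subseteq \alfint^{\star}$, hence an \sclang by the first clause. Otherwise $\clang$ is continuous, and I would aim to establish continuity of $\ssem{\proj{\clang}{}}$: take any $z \in \alfint^{\omega}$ with every finite prefix in $\ssem{\proj{\clang}{}}$; for each participant $\p \in \ptpof[\clang]$ the projection $\proj{z}{\p}$ is either finite---and then matches the projection of a sufficiently long finite prefix of $z$, which by hypothesis is already in $\ssem{\proj{\clang}{}}$---or else it is an $\omega$-word all of whose finite prefixes lie in $\proj{\clang}{\p}$, and one must exhibit a word of $\clang$ projecting onto $\proj{z}{\p}$. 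The main obstacle I anticipate is exactly this last step: $\proj{\clang}{\p}$ is not guaranteed to be continuous even when $\clang$ is, so local continuity cannot be invoked directly; a careful diagonal construction in $\clang$, exploiting its prefix-closure together with continuity at the global level to align the finite witnesses $w_n \in \clang$ with $\proj{w_n}{\p}$ tending to $\proj{z}{\p}$ into an $\omega$-limit that still belongs to $\clang$, appears to be the heart of the argument.
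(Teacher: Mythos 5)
Your decomposition into (a)--(d) and your arguments for (a), (b) and (d) coincide with the paper's own proof: containment of $\clang$ is extensiveness of the closure operator $\ssem{\proj{\_}{}}$ (\cref{fac:clgc}), CUI of $\ssem{\proj{\clang}{}}$ is \cref{prop:semiscui}, and minimality follows from monotonicity together with \cref{th:correctness} applied to any CUI \sclang containing $\clang$, which forces that language to be a fixpoint of the closure. The only divergence is step (c), which the paper dispatches with the single sentence ``it is not difficult to check that if $\clang$ is an \sclang, so is $\ssem{\proj{\clang}{}}$'', offering no argument. Your write-up leaves exactly this step open: you reduce continuity of $\ssem{\proj{\clang}{}}$ to exhibiting, for every participant $\p$ on which an $\omega$-word $\acword$ has infinite projection, a word of $\clang$ whose projection on $\p$ equals $\proj{\acword}{\p}$, and you do not produce it. As submitted, the proof is therefore incomplete at (c).

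That said, your instinct that this is the genuinely hard point is sound, and the diagonal construction you hope for cannot exist in general: the step fails. Let $\aint=\gint[][a][m][b]$, $\aint[b]=\gint[][c][m][d]$ and $\clang=\pref[{\Set{\aint^n\cat\aint[b]^n \sst n\geq 0}\cup\Set{\aint^\omega}}]$. This $\clang$ is a continuous g-language, since the only $\omega$-word all of whose finite prefixes lie in $\clang$ is $\aint^\omega$ (any $\aint^{n}\cat\aint[b]^{\omega}$ has the prefix $\aint^{n}\cat\aint[b]^{n+1}\notin\clang$). However, every finite prefix of $\acword=(\aint\cat\aint[b])^\omega$ belongs to $\ssem{\proj{\clang}{}}$, because $\proj{((\aint\cat\aint[b])^n)}{c}=(\aout[C][D][][m])^n=\proj{(\aint^n\cat\aint[b]^n)}{c}$ and $\proj{((\aint\cat\aint[b])^n)}{a}=\proj{(\aint^n)}{a}$ (and similarly for $\ptp[B]$, $\ptp[D]$), whereas $\acword$ itself does not belong to it: $\proj{\acword}{c}=(\aout[C][D][][m])^\omega$ is the projection of no word of $\clang$. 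Hence $\ssem{\proj{\clang}{}}$ contains the infinite word $\aint^\omega$ but is not continuous, so it is not an \sclang. The gap you flagged is thus real, is not closable as stated, and is inherited from the paper's own unproved assertion; it disappears only under extra hypotheses on $\clang$ (for instance, if $\clang$ is additionally CUI, then \cref{th:correctness} yields $\ssem{\proj{\clang}{}}=\clang$ and the claim is immediate, but then the corollary says nothing new).
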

\begin{proof}
  Let $\cl[\clang] = {\ssem{\proj \clang {}}}$.
  Given an \sclang $\clang$, $\cuui[{\cl[\clang]}]$ holds by
  \cref{prop:semiscui}.  Moreover, it is not difficult to check that
  if $\clang$ is an \sclang, so is $\cl[\clang]$.
  $\clang\subseteq\cl[\clang]$ holds by extensiveness of $\cl$.  Now,
  in order to show that $\cl[\clang]$ is smaller or equal than any CUI
  \sclang containing $\clang$, let us consider any \sclang
  $\tilde\clang$ such that $\cuui[\tilde\clang]$ and
  $\clang\subseteq\tilde\clang$.  By monotonicity of $\cl$, we have
  that $\cl[\clang]\subseteq\cl[\tilde\clang]$ and, by
  \cref{th:correctness}, $\cl[\tilde\clang]=\widehat\clang$. So
  $\cl[\clang]\subseteq\tilde\clang$.
\end{proof}

CUI ensures that continuous g-languages are concurrency closed.

\begin{restatable}{proposition}{continuousiscc}
  If $\clang$ is an \sclang and $\cuui$, then $\clang$ is
  concurrency closed.
\end{restatable}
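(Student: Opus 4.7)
The plan is to chain together three results already established earlier in the paper: the Galois-connection corollary, the characterisation of correctness via CUI, and the fact that the synchronous semantics of any system is concurrency closed.

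First I would observe that, by extensiveness in \cref{fac:clgc}, we always have $\clang \subseteq \ssem{\proj{\clang}{}}$. Second, since $\clang$ is by hypothesis an \sclang\ and $\cuui$ holds, \cref{th:correctness} yields the converse inclusion $\ssem{\proj{\clang}{}} \subseteq \clang$. Combining the two gives the key identity
\[
  \clang \;=\; \ssem{\proj{\clang}{}}.
\]

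Third, I would invoke \cref{prop:par}, which states that $\ssem{\aCS}$ is concurrency closed for every system $\aCS$; applying it to the system $\aCS = \proj{\clang}{}$ yields that $\ssem{\proj{\clang}{}}$ is concurrency closed. Rewriting this set as $\clang$ via the identity above gives exactly the desired conclusion: $\clang$ is concurrency closed.

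There is no real obstacle here, since the heavy lifting is done by \cref{th:correctness} (which is the only place where the \sclang\ hypothesis plays a role) and by the characterisation of $\comm$ in terms of equal projections (\cref{lemma:equalProj}) already used to prove \cref{prop:par}. If anything, the one point worth stating explicitly in the write-up is that \cref{th:correctness} genuinely needs the \sclang\ assumption, so without it the inclusion $\ssem{\proj{\clang}{}} \subseteq \clang$ could fail and the argument would collapse; this is consistent with the \quo{continuity matters} example given just before.
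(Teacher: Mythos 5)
Your proposal is correct and follows exactly the paper's own argument: it derives $\clang=\ssem{\proj{\clang}{}}$ from \cref{th:correctness} (together with extensiveness, which the paper leaves implicit) and then concludes via \cref{prop:par}. No gaps.
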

\begin{proof}
  Thanks to \cref{th:correctness} $\clang=\ssem{\proj{\clang}{}}$,
  hence it is concurrency closed thanks to \cref{prop:par}.
\end{proof}
Hence, an \sclang cannot be CUI unless it is
  concurrency closed.\\

As recalled before, in many choreographic formalisms (such
as~\cite{BDLT20,hlvlcmmprttz16,cdyp16,bbo12,fbs04}) the correctness
and completeness of a projected system, namely
$\clang = \ssem{\proj{\clang}{}}{}$ (together with some forms of
liveness and deadlock-freedom properties), is guaranteed by
\emph{well-branchedness} conditions.
Most of such conditions guarantee, informally speaking, that
participants reach consensus on which branch to take when choices
arise.
For instance, a well-branchedness condition could be that, at each
choice, there is a unique participant deciding the branch to follow
during a computation and that such participant informs each other
participant.
%
%The above remark does show that 
Such a condition is actually not
needed to prove $\clang = \ssem{\proj{\clang}{}}{}$, as shown by the example below.
\begin{example}
  The g-language $\clang$ of \cref{ex:simple2} is CUI, without being
  well-branched in the above sense.
  Indeed, after the interaction $\gint[][c][w][a]$, there is a
  branching in the projected system, since both the interactions
  $\gint[][c][w][b]$ and $\gint[][a][g][b]$ can be performed.
  However, these interactions do not have the same sender.
\end{example}

A key merit of our model is its generality and expressiveness.
We show this with an example of a non-regular CUI g-language whose
projected system is correct and complete by \cref{th:correctness} and
\cref{th:completeness}, respectively.

\begin{example}[A non-regular CUI g-language]\label{ex:parenthesis}
  We consider a task dispatching service where, as soon as a
  $\ptp[S]$erver communicates its $\msg[a]$vailability, a
  $\ptp[d]$ispatcher sends a $\msg[t]$ask to $\p[s]$.
  The server either processes the task directly and sends back the
  resulting $\msg[d]$ata to $\p[d]$ or sends the task to participant
  $\ptp[H]$ for some pre-processing, aiming at resuming it later on.
  Indeed, after communicating a result to $\p[d]$, the server can
  $\msg[r]$esume a previous task (if any) from $\ptp[H]$, process it,
  and send the result to $\p[d]$.
  The server eventually stops by sending $\msg[s]$ to both $\ptp[d]$
  and $\ptp[H]$; this can happen only when all dispatched tasks have
  been processed.
  The task dispatching scenario above is faithfully captured by the
  g-language $\clang$ obtained by prefix-closing the (non-regular)
  language generated by the following context-free grammar.
  \begin{eqnarray*}
		L & \bnfdef & L' \cat  \gint[][S][s][D] \cat \gint[][s][s][H]
		\\
		L' & \bnfdef & \gint[][S][a][D] \cat \gint[][D][t][S] \cat \gint[][S][t][H]
							\cat L' \cat
							\gint[][S][r][H] \cat \gint[][H][r][S] \cat \gint[][S][d][D] \cat L'
							\bnfmid \gint[][S][a][D] \cat \gint[][D][t][S] \cat \gint[][S][d][D]
							\cat L' \bnfmid \emptyword
	 \end{eqnarray*}
  Since $\ptp[S]$ is involved in all the interactions of $\clang$, for
  each pair of words $\acword,\acword' \in \clang$:
  $\proj{\acword}{S}=\proj{\acword'}{S} \textit{iff } \acword =
  \acword'$.
  Now, if $\acword_1 \, \alpha, \acword_2 \, \alpha,\acword\in\clang$
  satisfy the required conditions for CUI then either
  $\proj{\acword_1}{S}=\proj{\acword}{S}$ or
  $\proj{\acword_2}{S}=\proj{\acword}{S}$, since
  $\ptp[S] \in \ptpof[\aint]$.
  Hence $\cuui$ trivially holds.
  \finex
\end{example}
The language in \cref{ex:parenthesis} is non-regular since it has the
same structure of a language of well-balanced parenthesis.
Remarkably, this implies that the g-language cannot be expressed in
most of the other choreographic models in the literature.
There are models going beyond regular languages for both the binary
(e.g., \cite{Dardha_2014,tv16}) and the multi-party case
(e.g.,\cite{jy20}).
These approaches are based on process algebraic methods.
An interesting future research direction would be to study whether the
multi-party models give rise to CUI and branch-aware languages.
The argument used to show $\cuui$ in \cref{ex:parenthesis} proves the
following.
\begin{proposition}
  If there exists a participant involved in all the interactions of a
  g-language $\clang$ then $\cuui$ holds.
\end{proposition}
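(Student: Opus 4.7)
The plan is essentially to formalise the informal argument already sketched at the end of \cref{ex:parenthesis}. The key observation is that whenever a single participant $\ptp[S]$ appears in every interaction of $\clang$, then projecting onto $\ptp[S]$ is an \emph{injection} on $\clang$: for any two words $\acword, \acword' \in \clang$, the equality $\proj{\acword}{S} = \proj{\acword'}{S}$ forces $\acword = \acword'$. This follows immediately because each interaction $\aint$ of $\acword$ contributes a non-empty letter ($\aout$ or $\ain$) to $\proj{\acword}{S}$, so the projection preserves the full sequence of interactions up to recovering which of $\p, \q$ plays the role of $\ptp[S]$ in each step; since the message and the other participant are also encoded in the projected action, $\acword$ is uniquely determined by $\proj{\acword}{S}$.

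Given that observation, verifying CUI (\cref{def:closedness}) is straightforward. I would unfold the definition: fix finite words $\acword_1 \cat \aint, \acword_2 \cat \aint \in \clang$ sharing a final interaction $\aint = \gint$, and a word $\acword \in \clang$ with $\proj{\acword}{\p} = \proj{\acword_1}{\p}$ and $\proj{\acword}{\q} = \proj{\acword_2}{\q}$. Since $\ptp[S]$ occurs in every interaction, in particular $\ptp[S] \in \ptpof[\aint] = \Set{\p,\q}$, so $\ptp[S]$ coincides with either $\p$ or $\q$. Without loss of generality suppose $\ptp[S] = \p$ (the case $\ptp[S] = \q$ is symmetric, using $\acword_2$ instead of $\acword_1$). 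Then the hypothesis gives $\proj{\acword}{S} = \proj{\acword_1}{S}$, and by the injectivity observation applied to $\acword, \acword_1 \in \clang$, we conclude $\acword = \acword_1$. Hence $\acword \cat \aint = \acword_1 \cat \aint \in \clang$, which is exactly the CUI conclusion.

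There is essentially no obstacle here: the whole content of the proof is in the injectivity of $\proj{\_}{S}$ on $\clang$, which is immediate from the hypothesis plus the definition of projection. The only mild care required is to spell out that projecting retains enough information (sender/receiver/message direction and partner) to reconstruct each interaction, and to handle the two symmetric cases $\ptp[S] = \p$ and $\ptp[S] = \q$. No use of continuity, prefix-closure beyond what is built into \cref{def:chorlang}, or of \cref{th:correctness} is needed.
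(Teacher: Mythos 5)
Your proposal is correct and is essentially the paper's own argument: the paper proves this proposition by appealing to the reasoning in Example~\ref{ex:parenthesis}, namely that projection onto the ubiquitous participant $\ptp[S]$ is injective on $\clang$, and that $\ptp[S]\in\ptpof[\aint]$ forces $\acword$ to coincide with $\acword_1$ or $\acword_2$, whence $\acword\cat\aint\in\clang$. Your write-up just spells out the same two steps (injectivity plus the case split on whether $\ptp[S]$ is the sender or the receiver of $\aint$) in more detail.
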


%%% Local Variables:
%%% mode: latex
%%% TeX-master: "main"
%%% End:

\section{Communication Properties}\label{sec:prop}
Besides correctness and completeness, other properties could be of
interest for message-passing systems.
For instance, one would like to ensure that participants eventually
interact, if they are willing to.
More generally, we are interested in some relations between the
interactions in a system and the communication actions of its
participants.
We consider a number of properties, defined as follows.

\bigskip

\emph{Harmonicity} (HA) requires that each sequence of communications
that a participant is able to perform can be executed in some
computation of the system.
\begin{definition}[Harmonicity]\label{def:llive}
  A system $\aCS$ on $\ptpset$ is \emph{harmonic} if
  $\aCS(\p) \subseteq \proj{\ssem{\aCS}}{\p}$ for each participant
  $\p \in \ptpset$.
\end{definition}

The remaining communication properties rely on the notion of maximal
computation (cf. beginning of \cref{sec:c-lang} on page
\pageref{page:maximal}).
\emph{Lock-freedom} (LF) requires that if a participant has pending
communications to make on an ongoing computation, then there is a
continuation of the computation involving that participant.
\begin{definition}[Lock-freedom]\label{def:wlive}
  A system $\aCS$ on $\ptpset$ is \emph{lock free} if, for each finite
  word $\acword \in \ssem{\aCS}$ and participant $\p \in \ptpset$, if
  $\proj{\acword}{\p}$ is not maximal in $\aCS(\p)$ then there is a
  word $\acword'$ such that $\acword\acword' \in \ssem{\aCS}$ and
  $\proj{\acword'}{\p}\neq\epsilon$.
\end{definition}

\emph{Strong lock-freedom} (SLF) requires that if a participant has
pending communications to make on an ongoing computation, then each
maximal continuation of the computation involves that participant.
\begin{definition}[Strong lock-freedom]\label{def:slf}
  A system $\aCS$ on $\ptpset$ is \emph{strongly lock free} if, for
  each finite word $\acword \in \ssem{\aCS}$ and participant
  $\p \in \ptpset$, if $\proj{\acword}{\p}$ is not maximal in
  $\aCS(\p)$ then for each word $\acword'$ such that $\acword\acword'$
  is maximal in $\ssem{\aCS}$ we have
  $\proj{\acword'}{\p}\neq\epsilon$.
\end{definition}
  
\emph{Starvation-freedom} (SF) requires that if a participant has
pending communications to make on an ongoing computation, then each
infinite continuation of the computation involves that participant.
\begin{definition}[Starvation-freedom]\label{def:lf}
  A system $\aCS$ on $\ptpset$ is \emph{starvation free} if, for each
  finite word $\acword \in \ssem{\aCS}$ and participant
  $\p \in \ptpset$, if $\proj{\acword}{\p}$ is not maximal in
  $\aCS(\p)$ then $\proj{\acword'}{\p}\neq\emptyword$ for each
  infinite word $\acword'$ such that
  $\acword \acword' \in \ssem{\aCS}$.
\end{definition}

\emph{Deadlock-freedom} (DF) requires that in all completed
computations each participant has no pending actions.
\begin{definition}[Deadlock-freedom]\label{def:deadlockfree} % \label{def:properties}
  A system $\aCS$ on $\ptpset$ is \emph{deadlock free} if, for each
  finite and maximal word $\acword \in \ssem{\aCS}$ and participant
  $\p \in \ptpset$, $\proj{\acword}{A}$ is maximal in $\aCS(\p)$.
\end{definition}

Barred for harmonicity, these properties appear in the literature
under different names in various contexts.
For instance, the notion of lock-freedom in~\cite{BDLT20} corresponds
to ours, which in turn corresponds to the notion of liveness
in~\cite{LangeNTY17,KobayashiS10} in a channel-based synchronous
communication setting.
Likewise, the notion of strong lock-freedom in~\cite{SeveriD19}
corresponds to ours and, under fair scheduling, to the notion of
lock-freedom in~\cite{Kobayashy:InfoComp02}.
As a final example, the definition of deadlock-freedom
in its (equivalent) contrapositive form, coincides with the notion
of progress as defined for synchronous processes
in~\cite{Padovani13,GhilezanJPSY19}.
Harmonicity, introduced in the present paper, assures that no
behaviour of a participant can be taken out from a system without
affecting the overall behaviour of the system itself.
Notice that the inverse of harmonicity,
$\proj{\ssem{\aCS}}{\p} \subseteq \aCS(\p)$, holds by construction.

The next proposition highlights the relations among our
properties.
\begin{restatable}{proposition}{systemproperties}\label{prop:systemproperties}
  The following relations hold among the properties in
  \cref{def:llive,def:wlive,def:slf,def:lf,def:deadlockfree}
  \[
	 \begin{tikzpicture}[node distance=1cm and 1.5cm,scale=1,
		every node/.style={transform shape}]
		\node (lf) {\textbf{LF}};
		\node[left = of lf] (slf) {\textbf{SLF}};
		\node[right = of lf,xshift=2mm] (df) {\textbf{DF}};
		\node[below = of lf,yshift=4mm] (ha) {\textbf{HA}};
		\node[below  = of df,yshift=4mm] (sf) {\textbf{SF}};
		\path (lf) edge[->,double,draw,bend left=10] (df);
		\path (df) edge[->,double,draw,bend left=10] node{/} (lf);
		\path[dashed,draw] (df) -- (sf);
		\path[dashed,draw] (ha) -- (sf);
		\path[dashed,draw] (ha) -- (df);
		\path[dashed,draw] (lf) -- (sf);
		\path[dashed,draw] (lf) -- (ha);
		\path (slf) edge[->,double,draw,bend left=10] (lf);
		\path (lf) edge[->,double,draw,bend left=10] node{/} (slf);
	 \end{tikzpicture}
  \]
  where implication does not hold in any direction between properties
  connected by dashed lines.
  Moreover, $\textsf{DF} \wedge \textsf{SF} \Leftrightarrow \textsf{SLF}$.
\end{restatable}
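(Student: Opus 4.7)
The plan is to split the statement into three parts and treat them separately: (a) the two solid-arrow implications $\textsf{SLF}\Rightarrow\textsf{LF}$ and $\textsf{LF}\Rightarrow\textsf{DF}$, (b) the equivalence $\textsf{DF}\wedge\textsf{SF}\Leftrightarrow\textsf{SLF}$, and (c) the independences marked by the dashed lines (plus the two counter-arrows with the slash) via small concrete counterexamples. Throughout, all proofs unfold the definitions of \cref{def:llive,def:wlive,def:slf,def:lf,def:deadlockfree} directly.

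For the positive chain I would first dispatch $\textsf{LF}\Rightarrow\textsf{DF}$ by contraposition: if $\acword\in\ssem{\aCS}$ is finite and maximal but some $\proj{\acword}{\p}$ is not maximal in $\aCS(\p)$, then \textsf{LF} yields $\acword'$ with $\acword\acword'\in\ssem{\aCS}$ and $\proj{\acword'}{\p}\neq\epsilon$, hence $\acword'\neq\epsilon$, contradicting maximality of $\acword$. For $\textsf{SLF}\Rightarrow\textsf{LF}$, given $\acword$ finite with $\proj{\acword}{\p}$ non-maximal, I would observe that a maximal extension of $\acword$ in $\ssem{\aCS}$ always exists (by Zorn's lemma on the prefix-tree of continuations of $\acword$, noting that any ascending chain of finite prefixes either stabilises or produces an infinite word, which is automatically maximal); then \textsf{SLF} supplies some $\acword'$ with $\proj{\acword'}{\p}\neq\epsilon$ on such an extension. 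Truncating $\acword'$ at the first letter involving \p\ gives the witness for \textsf{LF}.

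For the equivalence, direction $(\Leftarrow)$ is: $\textsf{SLF}\Rightarrow\textsf{DF}$ follows from the previous paragraph, while $\textsf{SLF}\Rightarrow\textsf{SF}$ is immediate since any infinite continuation of $\acword$ is necessarily maximal in $\ssem{\aCS}$, so \textsf{SLF} directly applies. Direction $(\Rightarrow)$: given $\acword$ finite with $\proj{\acword}{\p}$ non-maximal, take any $\acword'$ with $\acword\acword'$ maximal in $\ssem{\aCS}$; do a case split on whether $\acword\acword'$ is finite or infinite. In the finite case \textsf{DF} forces $\proj{\acword\acword'}{\p}$ to be maximal in $\aCS(\p)$, which combined with non-maximality of $\proj{\acword}{\p}$ yields $\proj{\acword'}{\p}\neq\epsilon$; in the infinite case $\acword'$ itself is infinite, so \textsf{SF} supplies the same conclusion.

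For the non-implications I would build small systems tailored to each gap. For $\textsf{DF}\not\Rightarrow\textsf{LF}$ and $\textsf{LF}\not\Rightarrow\textsf{SLF}$, the standard choreographic examples work: a three-party system where two participants can engage in an infinite side-exchange while a third is permanently ignored gives a deadlock-free but not lock-free (and lock-free but not strongly lock-free) witness. For the dashed edges involving \textsf{HA}, the main obstacle will be pinpointing systems where $\aCS(\p)$ contains words that are realisable only via cooperation that some computation refuses to provide: one direction is obtained by equipping a participant with an l-language strictly larger than its projection in $\ssem{\aCS}$ (breaking \textsf{HA} while preserving \textsf{LF}/\textsf{DF}/\textsf{SF}); the other by taking $\aCS=\proj{\clang}{}$ for a non-CUI g-language that projects to a harmonic but deadlocking system. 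The remaining dashed edges between $\textsf{LF}$, $\textsf{DF}$ and \textsf{SF} follow by analogous tweaks, keeping in mind that \textsf{SF} is about \emph{infinite} extensions only, so any purely finite-branching deadlock distinguishes \textsf{DF} from \textsf{SF} in both directions.
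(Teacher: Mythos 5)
Your treatment of the two solid implications and of the equivalence $\textsf{DF}\wedge\textsf{SF}\Leftrightarrow\textsf{SLF}$ matches the paper's proof step for step: contraposition for $\textsf{LF}\Rightarrow\textsf{DF}$, the finite/infinite case split for $\textsf{DF}\wedge\textsf{SF}\Rightarrow\textsf{SLF}$, and the observation that infinite continuations are automatically maximal for the converse. One caveat: your Zorn's-lemma justification for the existence of a maximal extension of $\acword$ in $\ssem{\aCS}$ does not go through as stated, because the limit of an ascending chain of finite words in $\ssem{\aCS}$ is an infinite word that need not belong to $\ssem{\aCS}$ --- local languages, and hence semantics, are not required to be continuous (\cref{def:continuity}) --- so the chain may have no upper bound in the poset to which you apply Zorn. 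The paper's own proof of $\textsf{SLF}\Rightarrow\textsf{LF}$ tacitly assumes the same existence of maximal extensions, so this is not a divergence in approach, but your explicit justification is the one step that is wrong as written.

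The genuine gap is in the negative half of the statement. The proposition asserts ten non-implications, and in the paper each is discharged by an explicit, individually tailored system of l-languages whose semantics is computed and checked against the relevant definitions; this is the bulk of the proof. Your proposal replaces this with two or three described ``shapes'' and the phrase ``analogous tweaks''. Moreover the one shape you describe in detail is used inconsistently: a system in which two participants loop forever while a third is ignored is deadlock-free but \emph{not} lock-free, so it witnesses $\textsf{DF}\centernot\implies\textsf{LF}$ but cannot simultaneously be the ``lock-free but not strongly lock-free'' witness for $\textsf{LF}\centernot\implies\textsf{SLF}$ (the paper uses a finite branching system for the latter). Similarly, the separations involving \textsf{HA} and the two directions between \textsf{DF} and \textsf{SF} each require distinct constructions (e.g., $\textsf{DF}\centernot\implies\textsf{SF}$ needs an infinite-word system with no finite maximal words in which some participant starves, which your remark about ``purely finite-branching deadlocks'' does not cover). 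Until these examples are written out and verified, the negative claims are not proved.
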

\begin{proof}
  \proofcase{ \textsf{SLF $\implies$ LF}} Let $\p \in \ptpset$,
  $\acword \in \ssem{\aCS}$ be finite, and $\proj{\acword}{A}$ not to
  be maximal in $\aCS(\p)$.  By \textsf{SLF} we can infer that also
  $\acword$ is not maximal in $\ssem{\aCS}$.  Otherwise, for
  $\acword'=\emptyword$, we would have $\acword\cat\acword'$ to be
  maximal in $\ssem{\aCS}$ and, by \textsf{SLF}, we would get
  $\emptyword=\proj{\acword'}{A}\neq\emptyword$. Contradiction. So, if
  $\acword$ is not maximal in $\ssem{\aCS}$, there exists
  $\acword'\neq\emptyword$ such that $\acword\cat\acword'$ is maximal
  in $\ssem{\aCS}$. We get hence immediately
  $\proj{\acword'}{A}\neq\emptyword$ by \textsf{SLF}.
\proofcase{ \textsf{LF $\centernot\implies$ SLF}}
Let us take the communicating system $\aCS = (\alang_{\ptp[X]})_{\ptp[X] \in \Set{\p,\q,\ptp[C]}}$,
where
  \[
	 \alang_{\p} = \pref[\Set{\aout[A][B]\cat\aout[A][B],\, \aout[A][B]\cat\aout[A][C]}] \quad
	 \alang_{\q} = \pref[\Set{\ain[A][B]\cat\ain[A][B]}] \quad
	 \alang_{\ptp[C]} = \pref[\Set{\ain[A][C]}]
  \]
  and whose synchronous semantics is
  \[
	 \ssem{\aCS} = \pref[\Set{\gint[][A][m][B]\cat\gint[][A][m][B],\, \gint[][A][m][B]\cat\gint[][A][m][C]}]
  \]
  $\aCS$ is lock-free, but not strongly lock-free. In fact, for
  $\gint[][A][m][B]$ and $\q$, we have that
  $\gint[][A][m][B]\cat\gint[][A][m][C]$ is maximal, but
  $\proj{(\gint[][A][m][C])}{B}=\emptyword$.
  \proofcase{ \textsf{LF $\implies$ DF}} Let us assume $\aCS$ to be
  lock-free. By contradiction let us assume $\aCS$ not to be
  deadlock-free.
  Then there is a finite and maximal word $\acword \in \ssem{\aCS}$
  and a participant \p\ such that $\proj{\acword}{A}$ is not maximal
  in $\aCS(\p)$.
  Since $\aCS$ is lock-free, by definition, there is a word $\acword'$
  such that $\acword\acword' \in \ssem{\aCS}$ and
  $\proj{\acword'}{\p}\neq\epsilon$.
  Hence $\acword'\neq\emptyword$ and therefore $\acword$ is not
  maximal in $\aCS$, contrary to our assumption.

  \proofcase{ \textsf{DF $\centernot\implies$ LF}}
  Let us take
  \[
	 \alang_{\p} = \pref[\Set{(\aout[A][B])^\omega}] \quad
	 \alang_{\q} = \pref[\Set{(\ain[A][B])^\omega}] \quad
	 \alang_{\ptp[C]} = \pref[\Set{\ain[A][C]}]
  \]
  and consider the communicating system
  \[
	 \aCS = (\alang_{\ptp[X]})_{\ptp[X] \in \Set{\p,\q,\ptp[C]}}
	 \qand[whose synchronous semantics is]
	 \ssem{\aCS} = \pref[\Set{(\gint[][A][m][B])^\omega}]
  \]
  It is immediate to check that $\aCS$ is vacuously deadlock free,
  since there is no finite maximal word in $\aCS$.
  However, $\aCS$ is not lock-free.  It is enough to take
  $\acword=\emptyword$, which is finite in $\ssem{\aCS}$ and such that
  $\proj{\acword}{C}=\emptyword$ is not maximal in $\alang_{\ptp[C]}$.
  However there is no $\acword'$ such that
  $\acword \cat \acword' \in \ssem \aCS$ and
  $\proj{\acword'}{C} \neq \emptyword$.

   \proofcase{ \textsf{DF $\centernot\implies$ SF}} Let
	\begin{eqnarray*}
	 \alang_{\p} &=& \pref[\Set{  (\aout[A][B])^\omega,\ \aout[A][B] \cat (\aout[A][C])^\omega}]
	 \\
	 \alang_{\q} &=& \pref[\Set{  (\ain[A][B])^\omega}]
	 \\
	 \alang_{\ptp[C]} &=& \pref[\Set{ (\ain[A][C])^\omega }]
  \end{eqnarray*}
  and consider the communicating system
  \[
	 \aCS = (\alang_{\ptp[X]})_{\ptp[X] \in \Set{\p,\q,\ptp[C]}}
  \]
  whose synchronous semantics is
  \[
	 \ssem{\aCS} = \pref[\Set{(\gint[][A][m][B])^\omega, \
		\gint[][A][m][B]\cat(\gint[][A][m][C])^\omega}]
  \]
  It is immediate to check that $\aCS$ is vacuously deadlock-free,
  since there is no finite maximal word in $\aCS$.
  However, $\aCS$ is not starvation-free.
  It is enough to take $\acword=\gint[][A][m][B]$, which is finite in
  $\ssem{\aCS}$ and also $\proj{\acword}{C}=\emptyword$ is not maximal
  in $\alang_{\ptp[C]}$, but for $\acword' =(\gint[][A][m][B])^\omega$
  we have that $\acword\cat\acword'\in\ssem{\aCS}$ and
  $\proj{\acword'}{C}=\emptyword$.

   \proofcase{ \textsf{SF $\centernot\implies$ DF}}
  Let
  \[
	 \alang_{\p} = \Set{\emptyword, \ \aout[A][B]}
	 \quad
	 \alang_{\q} = \Set{\emptyword,\ \ain[A][B],\ \aout[B][C]}
	 \quad
	 \alang_{\ptp[C]} = \Set{\emptyword,\ \ain[B][C]}
  \]
  and consider the communicating system
  \[
	 \aCS = (\alang_{\ptp[X]})_{\ptp[X] \in \Set{\p,\q,\ptp[C]}}
	 \qand[whose synchronous semantics is]
	 \ssem{\aCS} = \Set{\emptyword,\ \gint[][A][m][B], \ \gint[][B][m][C]}
  \]
  It is immediate to check that $\aCS$ is vacuously starvation free,
  since there is no infinite word in $\ssem{\aCS}$.
  However, $\aCS$ is not deadlock free.
  It is enough to take $\acword=\gint[][A][m][B]$, which is finite and
  maximal in $\ssem{\aCS}$ but
  $\proj{(\gint[][A][m][B])}{C} = \emptyword$ is not maximal in
  $\alang_{\ptp[C]}$.
	 
   \proofcase{ \textsf{LF $\centernot\implies$ HA}} Let
  \[
	 \alang_{\p} = \pref[\Set{\ain[B][A] \cat \ain[B][A],\ \ain[B][A] \cat \aout[A][B][][y]}]
	 \qqand
	 \alang_{\q} = \pref[\Set{ \aout[B][A] \cat \aout[B][A]}]
  \]
  and consider the communicating system
  \[
	 \aCS = (\alang_{\ptp[X]})_{\ptp[X] \in \Set{\p,\q}}
	 \qand[whose synchronous semantics is]
	 \ssem{\aCS} = \pref[\Set{\gint[][B][m][A] \cat \gint[][B][m][A]}]
  \]
  It is easy to check that $\aCS$ is lock-free.
  However, $\aCS$ is not harmonic, since
  $\ain[B][A]\, \cat\, \aout[A][B][][y] \in \aCS(\p)$ and there is no
  $\acword \in \ssem{\aCS}$ such that
  $\proj{\acword}{A} = \ain[B][A] \cat \aout[A][B][][y]$.
	 
   \proofcase{ \textsf{HA $\centernot\implies$ LF}} Let us consider the
  system $\aCS = (\alang_{\ptp[X]})_{\ptp[X] \in \Set{\p,\q,\ptp[C]}}$
  where
  \begin{align*}
	 \alang_{\p} & = \pref[\Set{ \ain[C][A] \cat \aout[A][C]}]
	 \\
	 \alang_{\q} & = \pref[\Set{\aout[B][C]}]
	 \\
	 \alang_{\ptp[C]} & = \pref[\Set{\aout[C][A],\ \ain[B][C] \cat \aout[C][A] \cat \ain[A][C]}]
  \end{align*}
  It is easy to check that
  \[
	 \ssem{\aCS} = \pref[\Set{\gint[][C][m][A],\  \gint[][B][m][C] \cat \gint[][C][m][A] \cat \gint[][A][m][C]}]
  \]
  and that $\aCS$ is harmonic.
  However, $\aCS$ is not lock-free.
  In fact, by taking $\acword = \gint[][C][m][A] \in \ssem \aCS$, we
  have that $\proj{\acword}{A}$ is not maximal in $\clang_{\p}$, since
  $\ain[C][A] \cat \aout[A][C] = \proj{\acword}{A} \cat \aout[A][C]
  \in \clang_{\p}$.
  However, there is no word $\acword'$ such that
  $\acword \cat \acword' \in \ssem{\aCS}$ and
  $\proj{\acword'}{A} \neq \epsilon$.
	 
   \proofcase{ \textsf{SF $\centernot\implies$ LF}} Immediate, since
  otherwise, by
  $\mathsf{LF} \implies \mathsf{DF}$ proved above, we would get
  $\mathsf{SF}$ to imply $\mathsf{DF}$, which we showed not to hold.
  
   \proofcase{ \textsf{LF $\centernot\implies$ SF}} Let us consider
  $\aCS = (\clang_{\ptp[X]})_{\ptp[X] \in \Set{\p,\q,\ptp[C]}}$ where
  \begin{align*}
	 \alang_{\p} & = \pref[\Set{(\aout[A][B])^n\cat \aout[A][C]\cat(\aout[A][B])^\omega \sst n\in\Nat}\cup\Set{(\aout[A][B])^\omega}]
	 \\
	 \alang_{\q} & = \pref[\Set{(\ain[A][B])^\omega}]
	 \\
	 \alang_{\ptp[C]} & = \pref[\Set{\ain[A][C]}]
  \end{align*}
  It is easy to check that
  \[
	 \ssem \aCS = \pref[\Set{(\gint[][A][m][B])^n\cat\gint[][A][m][C]\cat(\gint[][A][m][B])^\omega \sst n\in\Nat}\cup\Set{(\gint[][A][m][B])^\omega}]
  \]
  It is not difficult also to check that $\aCS$ is lock-free.
  However, it is not starvation-free.
  In fact, by taking the non maximal word $\acword=\emptyword$, we
  have that for the infinite word
  $\acword' = (\gint[][A][m][B])^\omega$ and for the participant
  $\ptp[C]$ we have that $\proj{\acword}{C}$ is non maximal and
  $\acword\acword'\in\ssem \aCS $.
  However $\proj{\acword'}{C}=\emptyword$.
	 
   \proofcase{ \textsf{SF $\centernot\implies$ HA}} Let
  \[
	 \alang_{\p} = \pref[\Set{\ain[B][A] \cat \ain[B][A],\ \ain[B][A] \cat \aout[A][B][][y]}]
	 \qqand
	 \alang_{\q} = \pref[\Set{ \aout[B][A] \cat \aout[B][A]}]
  \]
  and consider the communicating system
  \[
	 \aCS = (\alang_{\ptp[X]})_{\ptp[X] \in \Set{\p,\q}}
	 \qand[whose synchronous semantics is]
	 \ssem{\aCS} = \pref[\Set{\gint[][B][m][A] \cat \gint[][B][m][A]}]
  \]
  $\aCS$ is trivially starvation-free, since it contains no infinite
  word.
  However, $\aCS$ is not harmonic, since
  $\ain[B][A] \cat \aout[A][B][][y] \in \aCS(\p)$ and there is no
  $\acword \in \ssem{\aCS}$ such that
  $\proj{\acword}{A} = \ain[B][A] \cat \aout[A][B][][y]$.

   \proofcase{ \textsf{HA $\centernot\implies$ SF}} Let us consider
  $\aCS = (\clang_{\ptp[X]})_{\ptp[X] \in \Set{\p,\q,\ptp[C]}}$ where
  \begin{align*}
	 \alang_{\p} & = \pref[\Set{(\aout[A][B])^n\cat \aout[A][C]\cat(\aout[A][B])^\omega \sst n\in\Nat}\cup\Set{(\aout[A][B])^\omega}]
	 \\
	 \alang_{\q} & = \pref[\Set{(\ain[A][B])^\omega}]
	 \\
	 \alang_{\ptp[C]} & = \pref[\Set{\ain[A][C]}]
  \end{align*}
  It is easy to check that
  \[
	 \ssem \aCS = \pref[\Set{(\gint[][A][m][B])^n\cat\gint[][A][m][C]\cat(\gint[][A][m][B])^\omega \sst n\in\Nat}\cup\Set{(\gint[][A][m][B])^\omega}]
  \]
  It is not difficult also to check that $\aCS$ is harmonic.
  However, it is not starvation-free.
  In fact, by taking the non maximal word $\acword=\emptyword$, we
  have that for the infinite word
  $\acword' = (\gint[][A][m][B])^\omega$ and for the participant
  $\ptp[C]$ we have that $\proj{\acword}{C}$ is non maximal and
  $\acword\acword'\in\ssem \aCS $.
  However $\proj{\acword'}{C}=\emptyword$.

   \proofcase{ \textsf{DF $\centernot\implies$ HA}} Let
  \[
	 \alang_{\p} = \Set{\emptyword,\ \ain[B][A],\ \aout[A][B][][y]}
	 \qqand
	 \alang_{\q} = \Set{\emptyword,\ \aout[B][A]}
  \]
  and consider the communicating system
  \[
	 \aCS = (\alang_{\ptp[X]})_{\ptp[X] \in \Set{\p,\q}}
	 \qand[whose synchronous semantics is]
	 \ssem{\aCS} = \Set{\emptyword,\ \gint[][B][m][A]}
  \]
  $\aCS$ is deadlock-free, since the only finite maximal word is
  $\gint[][B][m][A]$ whose projection on $\p$ and $\q$ are both
  maximal.
  However, $\aCS$ is not harmonic, since
  $\aout[A][B][][y] \in \aCS(\p)$ and there is no
  $\acword \in \ssem{\aCS}$ such that
  $\proj{\acword}{A} =\aout[A][B][][y]$.

   \proofcase{ \textsf{HA $\centernot\implies$ DF}} Let
  \[
	 \alang_{\p} = \Set{\emptyword,\ \aout[A][B],\ \aout[A][C]}
	 \qqand
	 \alang_{\q} = \Set{\emptyword,\ \ain[A][B]}
	 \qqand
	 \alang_{\ptp[C]} = \Set{\emptyword,\ \ain[A][C]}
  \]
  and consider the communicating system
  \[
	 \aCS = (\alang_{\ptp[X]})_{\ptp[X] \in \Set{\p,\q,\ptp[C]}}
	 \qand[whose synchronous semantics is]
	 \ssem{\aCS} = \Set{\emptyword,\ \gint[][A][m][B],\ \gint[][A][m][C]}
  \]
  It is easy to check that $\aCS$ is harmonic.
  However, $\aCS$ is not deadlock-free, since for the maximal and
  finite word $\gint[][A][m][B]\in \aCS$ we have that
  $\proj{\acword}{C}$ ($=\emptyword$) is not maximal in
  $\alang_{\ptp[C]}$.
	 
   \proofcase{ \textsf{DF $\wedge$ SF $\implies$ SLF}} In order to show SLF, let
  $\acword \in \ssem{\aCS}$ be finite and let $\proj{\acword}{\p}$ be
  non maximal in $\aCS(\p)$ for $\p \in \ptpset$.
  Besides, let $\acword'$ be such that $\acword\acword'$ is maximal in
  $\ssem{\aCS}$.
  We have to show that $\proj{\acword'}{\p}\neq\emptyword$.
  We distinguish two cases:
  \begin{description}
  \item[$\acword'$ is finite]  Since $\aCS$ is DF we can infer that
	 $\acword'\neq\emptyword$ and $\proj{\acword'}{\p}\neq\epsilon$.
  \item[$\acword'$ is infinite]  We get
	 $\proj{\acword'}{\p}\neq\epsilon$ immediately by SF.
  \end{description}
  
   \proofcase{ \textsf{SLF $\implies$ DF $\wedge$ SF}} To show DF by contradiction, let us
  assume to have $\acword \in \ssem{\aCS}$ finite and maximal and such
  that $\proj{\acword}{\p}$ is non maximal in $\aCS(\p)$ for
  $\p \in \ptpset$.
  By SLF we have that $\proj{\acword'}{\p}\neq\emptyword$ for each
  $\acword'$ such that $\acword\acword'$ is maximal in $\ssem{\aCS}$.
  Since we assumed $\acword$ to be maximal, the only
  possible $\acword'$ is $\emptyword$, contradicting $\proj{\acword'}{\p}\neq\emptyword$.
  To show SF by contradiction, let us assume to have
  $\acword \in \ssem{\aCS}$ finite and such that $\proj{\acword}{\p}$
  is non maximal in $\aCS(\p)$ for $\p \in \ptpset$.
  Besides, let us assume that there exists an infinite $\acword'$ such
  that $\acword\acword'\in\ssem{\aCS}$ and
  $\proj{\acword'}{\p} = \emptyword$.
  Since $\acword'$ is infinite, $\acword\acword'$ is trivially maximal
  in $\ssem{\aCS}$ and we get a contradiction with SLF.
\end{proof}

%%% Local Variables:
%%% mode: latex
%%% TeX-master: "main"
%%% End:

\section{Communication Properties by Construction}\label{sec:proj-prop}
Harmonicity (cf. \cref{def:llive}) is the only property 
guaranteed on any system obtained via projection.
This is a simple consequence of \cref{th:completeness}.
\begin{restatable}{corollary}{wfliveb}\label{thm:wfliveb}
  If $\clang$ is a g-language then
  $\proj{\clang}{}$ is harmonic.
\end{restatable}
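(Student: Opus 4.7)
The plan is to derive harmonicity as an immediate consequence of completeness (\cref{th:completeness}) combined with the monotonicity of projection on each participant.

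Unfolding definitions, we must show that for every $\p \in \ptpof[\clang]$,
\[
\proj{\clang}{\p} \;\subseteq\; \proj{\ssem{\proj{\clang}{}}}{\p},
\]
since $\proj{\clang}{}$ is precisely the system assigning $\proj{\clang}{\p}$ to each participant $\p$. First I would invoke \cref{th:completeness}, which gives $\clang \subseteq \ssem{\proj{\clang}{}}$. Then I would observe that for any two interaction languages $\clang_1 \subseteq \clang_2$ we have $\proj{\clang_1}{\p} \subseteq \proj{\clang_2}{\p}$, since projection is defined homomorphically on words and then set-wise on languages (this is the same monotonicity exploited in \cref{prop:gc}). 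Applying this monotonicity to the inclusion from completeness yields exactly the required containment.

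There is essentially no obstacle: the argument is a two-line chain using results already established. The only thing worth making explicit is that the system whose harmonicity we are checking is $\proj{\clang}{}$, so the definition of harmonicity reduces to the inclusion above, matching the left-hand side of completeness after projecting.

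\begin{proof}
By \cref{th:completeness}, $\clang \subseteq \ssem{\proj{\clang}{}}$. Since projection on a participant is monotone with respect to language inclusion, for every $\p \in \ptpof[\clang]$ we obtain
\[
\proj{\clang}{\p} \;\subseteq\; \proj{\ssem{\proj{\clang}{}}}{\p}.
\]
As $\proj{\clang}{}(\p) = \proj{\clang}{\p}$ by \cref{def:projection}, this is exactly the harmonicity condition of \cref{def:llive} for the system $\proj{\clang}{}$.
\end{proof}
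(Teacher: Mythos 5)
Your proof is correct and follows exactly the paper's argument: completeness (\cref{th:completeness}) gives $\clang \subseteq \ssem{\proj{\clang}{}}$, and monotonicity of projection turns this into the harmonicity inclusion for $\proj{\clang}{}$. The only difference is that you spell the step out participant by participant, which is just a more explicit rendering of the same two-line chain.
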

\begin{proof}
  By \cref{th:completeness},
  $\clang\subseteq\ssem{\proj{\clang}{}}$. Now, by monotonicity of
  projection, we get
  $\proj{\clang}{}\subseteq\proj{\ssem{\proj{\clang}{}}}{}$, that is
  harmonicity of $\proj{\clang}{}$.
\end{proof}
To ensure the other properties on a system ${\proj{\clang}{}}$ we need
to require some conditions on the g-language $\clang$.
Basically, we will strengthen CUI which is too weak.
For instance, $\cuui$ does imply neither deadlock-freedom nor
lock-freedom for $\proj{\clang}{}$.
\begin{example}[CUI $\centernot\implies$ DF, LF]\label{ex:closnodl}
  It is easy to check that $\cuui$ holds for the g-language
  $\clang = \pref[\Set{\acword, \acword'}]$ where
  \[
	 \acword = \gint[][A][l][C] \cat \gint[][A][@][B]
	 \cat \gint[][A][m][C] \qqand \acword' = \gint[][A][r][C] \cat
	 \gint[][A][@][B] \cat \gint[][B][@][C]
  \]
  Informally, $\clang$ is CUI because $\ptp[C]$ can ascertain
  which of its last actions to execute from the first input.
  So, \cref{th:completeness,th:correctness} ensure that
  $\clang = \ssem{\proj{\clang}{}}$.
  However, $\proj{\clang}{}$ is not deadlock-free.
  In particular, $\acword \in \clang = \ssem{\proj \clang {}}$ is a
  deadlock since it is a finite maximal word whose projection on $\q$,
  namely $\proj{\acword}{B} = \ain$, is not maximal in
  $\proj{\clang}{B}$ because
  $\proj{\acword'}{B}= \ain[A][B] \cat \aout[B][C] \in
  \proj{\clang}{B}$.
  $\proj{\clang}{}$ is non lock-free either by
  \cref{prop:systemproperties}.
  \finex
\end{example}

In many models (cf.~\cite{hlvlcmmprttz16}) in order to ensure, besides
other properties, also the correctness of $\proj{\clang}{}$, a
condition called \emph{well-branchedness} is required.
We identify a notion weaker than well-branchedness, which by analogy
we dub \emph{branch-awareness} (BA for short).
\begin{definition}[Branch-awareness]\label{def:ba}
  A participant $\px$ \emph{distinguishes} two g-words
  $\acword_1, \acword_2 \in \alfint^\infty$ if
  \[
	 \proj{{\acword_1}}{X} \neq \proj{{\acword_2}}{X}
	 \quad\text{ and }\quad
	 \proj{{\acword_1}}{X} \not\prec \proj{{\acword_2}}{X}
	 \quad\text{ and }\quad
	 \proj{{\acword_2}}{X} \not\prec \proj{{\acword_1}}{X}.
  \]
  A g-language $\clang$ on $\ptpset$ is \emph{branch-aware} if each
  $\px \in \ptpset$ distinguishes all maximal words in
  $\clang$ whose projections on $\px$ differ.
\end{definition}
\begin{example}
  The language $\clang = \pref[\Set{ \acword, \acword'}]$ with
  $\acword = \gint[][A][l][C] \cat \gint[][A][@][B] \cat
  \gint[][A][m][C]$ and
  $\acword' = \gint[][A][r][C] \cat \gint[][A][@][B] \cat
  \gint[][B][@][C]$ from \cref{ex:closnodl} is not branch-aware, since
  $\proj{\acword}{B}=\ain$ and
  $\proj{\acword'}{B}=\ain[A][B] \cat \aout[B][C]$, hence
  $\proj{\acword}{B} \neq \proj{\acword'}{B}$ but
  $\proj{\acword}{B} \prec \proj{\acword'}{B}$.
  \finex
\end{example}
Condition $\proj{{\acword_1}}{{\px}} \neq \proj{{\acword_2}}{{\px}}$
in \cref{def:ba} is not strictly needed to define BA, but it makes the
notion of \squo{distinguishes} more intuitive.
Equivalently, as shown in \cref{prop:distinguish} below, a participant
\px\ distinguishes two branches if, after a common prefix, \px\ is
actively involved in both branches, performing different interactions.
\begin{restatable}{proposition}{distinguish}\label{prop:distinguish}
  Participant $\px$ distinguishes two g-words
  $\acword_1, \acword_2 \in \alfint^\infty$ iff there are
  $\acword'_1 \cat \aint_1 \preceq \acword_1$ and
  $\acword'_2 \cat \aint_2 \preceq \acword_2$ such that
  $ \proj{{\acword'_1}}{\px} = \proj{{\acword'_2}}{\px}$ and
  $\proj{{\alpha_1}}{\px} \neq \proj{{\alpha_2}}{\px}$.
\end{restatable}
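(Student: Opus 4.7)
The plan is to prove both directions separately, leveraging the observation that \emph{$\px$ distinguishes $\acword_1, \acword_2$} just means that $\proj{\acword_1}{\px}$ and $\proj{\acword_2}{\px}$ are prefix-incomparable in $\alfact^\infty$: they share some longest common prefix $v$ and then diverge at position $|v|$ into distinct letters of $\alfact$.

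For the forward direction ($\Rightarrow$), starting from incomparable projections, I would let $v$ be the longest common prefix of $\proj{\acword_1}{\px}$ and $\proj{\acword_2}{\px}$. Incomparability forces $v$ to be a strict prefix of each projection, so there exist actions $a_1, a_2 \in \alfact$ with $a_1 \neq a_2$ and $v \cat a_i \preceq \proj{\acword_i}{\px}$ for $i = 1,2$. Lifting back through projection, I would take $\acword'_i \cat \aint_i$ to be the shortest prefix of $\acword_i$ whose $\px$-projection equals $v \cat a_i$. By minimality the final interaction must itself contribute the new action, i.e., $\proj{\aint_i}{\px} = a_i$ and $\proj{\acword'_i}{\px} = v$. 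The two witnesses then agree on $v$ and carry distinct $\px$-projections on their final interaction, as required.

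For the backward direction ($\Leftarrow$), from witnesses $\acword'_i \cat \aint_i \preceq \acword_i$ with $\proj{\acword'_1}{\px} = \proj{\acword'_2}{\px} = v$ and $b_i \eqdef \proj{\aint_i}{\px}$ satisfying $b_1 \neq b_2$, the homomorphic extension of projection gives $v \cat b_i \preceq \proj{\acword_i}{\px}$. The divergence at position $|v|$ then shows that the two full projections are prefix-incomparable and distinct, which is the definition of $\px$ distinguishing $\acword_1$ and $\acword_2$.

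The delicate point is the backward direction in the edge case where one of $b_1, b_2$ equals $\epsilon$ while the other lies in $\alfact$: then $v \cat b_i = v$ for that index and the residuals of $\acword_i$ past $\aint_i$ could a priori realign the projections, breaking incomparability. I would handle this by working with \emph{minimal} witnesses in the spirit of the forward construction, i.e., choosing $\acword'_i \cat \aint_i$ so that $\aint_i$ is the first interaction extending $\proj{\acword'_i}{\px}$ beyond the longest common prefix of the two $\px$-projections; such witnesses automatically have $b_1, b_2 \in \alfact$ and $b_1 \neq b_2$, after which the branching argument above closes the proof without residual ambiguity.
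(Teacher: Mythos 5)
Your forward direction is correct and is essentially the argument the paper leaves implicit (its own proof of this proposition is just ``Trivial''): incomparability of the two projections yields a finite longest common prefix $v$ that is properly extended on both sides by distinct actions, and the shortest prefixes of $\acword_1,\acword_2$ whose projections realise $v$ plus the new action give witnesses whose final interactions both project to actions in $\alfact$.

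The gap is in your treatment of the backward direction's edge case, and the repair you propose does not work. In the ($\Leftarrow$) direction the witnesses $\acword'_i\cat\aint_i$ are \emph{given} by the hypothesis; you are not entitled to replace them by ``minimal'' witnesses whose final interactions both involve $\px$, because such witnesses need not exist even when the hypothesis holds. Concretely, take $\px=\p$, $\acword_1=\gint[][c][m][d]\cat\gint[]$ and $\acword_2=\gint[]$, with $\acword'_1=\acword'_2=\epsilon$, $\aint_1=\gint[][c][m][d]$ and $\aint_2=\gint[]$. Then $\proj{\acword'_1}{A}=\proj{\acword'_2}{A}=\epsilon$ and $\proj{{\aint_1}}{A}=\epsilon\neq\aout=\proj{{\aint_2}}{A}$, so the right-hand side of the proposition is satisfied; yet $\proj{\acword_1}{A}=\proj{\acword_2}{A}=\aout$, so $\p$ does not distinguish $\acword_1$ and $\acword_2$. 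Here the longest common prefix of the two projections is all of both of them, so there is no ``first interaction extending beyond it'' and your re-choice of witnesses produces nothing --- the move is in effect assuming the conclusion. Read literally, the ($\Leftarrow$) implication is therefore false; the statement is only correct under the implicit reading --- consistent with the paper's gloss that $\px$ is ``actively involved in both branches'' --- that $\px\in\ptpof[{\aint_1}]\cap\ptpof[{\aint_2}]$, i.e.\ that both $\proj{{\aint_1}}{\px}$ and $\proj{{\aint_2}}{\px}$ are actions rather than $\epsilon$. Under that reading your main branching argument (the projections share the prefix $\proj{{\acword'_1}}{\px}$ and then diverge into distinct letters, hence are incomparable and distinct) closes the direction with no residual case, and your proof is complete; you should state that hypothesis explicitly rather than attempt to recover it.
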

\begin{proof}
  Trivial.
\end{proof}
  
The notions of well-branchedness in the
literature~\cite{hlvlcmmprttz16} additionally impose that
$\proj{{\alpha_1}}{X}$ and $\proj{{\alpha_2}}{X}$ in the above
proposition are input actions, but for a (unique) participant (\aka\
the \emph{selector}) which is required to have different outputs.
  In our case the notion of selector corresponds to the \quo{first}
  participant that distinguishes two words.
  Also in our case a selector must be involved in each branch but, due
  to the perfect symmetry of input and output actions in synchronous
  communications, its involvement can happen through input or output
  actions.
  This is illustrated by the following example.
  \begin{example}[Selector and input actions]
	 Consider the words
	 \[
		\acword = \gint[] \cat \gint[][a][m][c]
		\qqand
		\acword' = \gint[][b][m][c] \cat \gint[][c][m][a]
	 \]
	 The participant \q\ immediately distinguishes $\acword$ and
	 $\acword'$ via its first actions, that is the input from \p\
	 and the output to $\p[c]$.
	 Hence, \q\ is the selector of the branch made of $\acword$ and
	 $\acword'$.
	 Notice that \p\ and $\p[c]$ also distinguish these two words,
	 however this happens \quo{later}.
	 Finally, observe that the same would hold if we replace
	 $\gint[][b][m][c]$ with $\gint[][c][m][b]$ in $\acword'$.
	 \finex
  \end{example}

In our theory, BA is not needed for correctness, but it is
nevertheless useful to prove the communication properties presented in
\cref{sec:prop}.
\begin{restatable}[Consequences of BA]{theorem}{baconseq}\label{thm:baconseq}
  Let $\clang$ be a branch-aware and CUI \sclang. Then
  ${\proj{\clang}{}}$ satisfies all the properties in
  \cref{def:llive,def:wlive,def:slf,def:lf,def:deadlockfree}.
  \end{restatable}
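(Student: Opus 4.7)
The plan is to reduce the five properties to just two by leveraging the structural implications already proved, and then to argue those two by contradiction using branch-awareness. Harmonicity falls out immediately from \cref{thm:wfliveb}, which holds for any g-language and requires nothing beyond what we have. By \cref{prop:systemproperties}, \textsf{SLF} is equivalent to \textsf{DF}$\,\wedge\,$\textsf{SF} and \textsf{SLF} implies \textsf{LF}, so establishing \textsf{DF} and \textsf{SF} yields the remaining three. A first simplification I would carry out is to invoke \cref{th:correctness} together with extensiveness (\cref{fac:clgc}): CUI and the \sclang hypothesis give $\ssem{\proj{\clang}{}} = \clang$, allowing me to reason entirely inside $\clang$.

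For \textsf{DF} I would take a finite maximal $\acword \in \clang$ and any participant $\px$, and assume towards a contradiction that $\proj{\acword}{x}$ is not maximal in $\proj{\clang}{x}$. Then some $\acword^{\ast} \in \clang$ satisfies $\proj{\acword}{x} \prec \proj{\acword^{\ast}}{x}$. The crucial step is to extend $\acword^{\ast}$ to a maximal word $\hat{\acword} \in \clang$, after which branch-awareness closes the argument: both $\acword$ and $\hat{\acword}$ are maximal in $\clang$, yet their projections on $\px$ satisfy $\proj{\acword}{x} \prec \proj{\hat{\acword}}{x}$, hence are comparable and distinct, violating BA.

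The \textsf{SF} case follows the same scheme. Taking a finite $\acword \in \clang$ and an infinite $\acword'$ with $\acword \cat \acword' \in \clang$, suppose $\proj{\acword}{x}$ is non-maximal in $\proj{\clang}{x}$ while $\proj{\acword'}{x} = \epsilon$. Then $\acword \cat \acword'$ is infinite, hence maximal in $\clang$, and $\proj{(\acword \cat \acword')}{x} = \proj{\acword}{x}$. Picking a witness $\acword^{\ast}$ and extending to a maximal $\hat{\acword} \in \clang$ as before, I would then compare the two maximal words $\acword \cat \acword'$ and $\hat{\acword}$, whose projections on $\px$ are again comparable and distinct, contradicting BA.

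The main obstacle I anticipate is the extension step: producing a maximal $\hat{\acword} \in \clang$ above $\acword^{\ast}$ requires a Zorn-type argument in which every ascending chain has an upper bound inside $\clang$. This is exactly where \sclang is used: in the continuous case the $\omega$-limit of any ascending chain lies in $\clang$ by continuity, while in the finite case ($\clang \subseteq \alfint^{\star}$) one argues that any such chain must eventually stabilise to a finite maximal element. Once this technicality is discharged, BA essentially does the rest, as it is precisely the property that prevents two maximal words from having comparable but distinct projections on the same participant.
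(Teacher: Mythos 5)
Your overall strategy is essentially the paper's: harmonicity from \cref{thm:wfliveb}, a reduction of the five properties to a core pair via \cref{prop:systemproperties}, and a contradiction with branch-awareness obtained by exhibiting two maximal words whose projections on some participant are distinct but prefix-comparable. (The paper proves LF and SF directly and derives DF and SLF, whereas you prove DF and SF and derive LF and SLF; both decompositions are sound, and your use of \cref{th:correctness} plus extensiveness to work inside $\clang$ matches what the paper does implicitly.) The BA contradiction itself is set up correctly in both of your cases.

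The gap is exactly where you anticipated it, and your proposed repair does not close it. To invoke branch-awareness you must extend the witness $\acword^{\ast}$ to a \emph{maximal} word of $\clang$, since BA constrains only maximal words. In the continuous case your Zorn argument is fine: an infinite strictly ascending chain of finite words has its $\omega$-limit in $\clang$ by continuity, and infinite words are automatically maximal. But in the case $\clang \subseteq \alfint^{\star}$ --- which qualifies as an \sclang with no continuity assumption --- the claim that \quo{any such chain must eventually stabilise to a finite maximal element} is false: a prefix-closed language of finite words can contain infinite strictly ascending chains with no upper bound in the language, and indeed $\Set{(\gint)^{n} \sst n \in \nat}$ has no maximal elements at all. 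So $\acword^{\ast}$ may admit no maximal extension, in which case no second maximal word is available and the BA contradiction cannot be derived. For what it is worth, the paper's own proof performs the very same step (\quo{by taking a maximal extension of $\acword$ and a maximal extension of $\acword'' \cat \hat{\acword}$}) without comment, so your proposal is faithful to the paper and more candid about the obligation; but the justification you offer for the finite-word, non-continuous case is not a proof, and it is not clear that this obligation can be discharged from the stated hypotheses alone rather than requiring an additional assumption (e.g.\ continuity, or that every word of $\clang$ has a maximal extension).
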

\begin{proof}
  Let $\clang$ be a branch-aware \sclang
  such that $\cuui$ holds.
  We prove the properties separately.
  \proofcase{Harmonicity} Immediate by \cref{thm:wfliveb}.
  \proofcase{Lock-freedom} By contradiction, let us assume
  $\proj{\clang}{}$ not to be lock-free.
  By \cref{def:wlive} and $\cuui$, it follows that there exist a
  participant $\p\in\ptpset$ and a finite g-word $\acword\in\clang$
  such that
  \begin{itemize}
  \item $\proj{\acword}{A}$ is not maximal in $\proj{\clang}{A}$;
  \item for all $\acword'$ if $\acword \cat \acword' \in \clang$ then
	 $\proj{\acword'}{A}=\emptyword$.
  \end{itemize}
  By \cref{thm:wfliveb}, $\proj{\clang}{}$ is harmonic.  Hence, by the
  above and $\cuui$, there exists $\acword''\in\clang$ such that
  \begin{itemize}
  \item $\acword''\neq\acword$;
  \item $\proj{\acword''}{A} = \proj{\acword}{A}$;
  \item there is $\hat \acword$ such that
	 $\acword'' \cat \hat \acword \in \clang$ and
	 $\proj{\hat\acword}{A}\neq\emptyword$.
  \end{itemize}
  This means that, by taking a maximal extension of $\acword$ and a
  maximal extension of $\acword''\hat \acword$, we would get two non
  branch-aware words in $\clang$, contradicting our hypothesis of
  $\clang$ being branch-aware.
  \proofcase{Deadlock-freedom} Immediate by
  \cref{prop:systemproperties}.
  \proofcase{Starvation-freedom} By contradiction, let us assume
  $\proj{\clang}{}$ not to be starvation-free.
  By \cref{def:lf} and $\cuui$, it follows that there exist a
  participant $\p\in\ptpset$ and a finite g-word $\acword \in \clang$
  such that
  \begin{itemize}
  \item $\proj{\acword}{A}$ is not maximal in $\proj{\clang}{A}$;
  \item there is an infinite word $\acword'$ such that
	 $\acword \cat \acword' \in \clang$ and
	 $\proj{\acword'}{A} = \emptyword$.
  \end{itemize}
  Now, by harmonicity of $\proj{\clang}{}$ (\cref{thm:wfliveb}), non
  maximality of $\proj{\acword}{A}$ and by $\cuui$, it follows that
  there exist $\acword''\in\clang$ and a finite word $\hat\acword$
  such that
  \begin{itemize}
  \item $\acword''\hat \acword\in\clang$;
  \item $\proj{\acword''}{A} = \proj{\acword}{A}$;
  \item $\proj{\hat\acword}{A}\neq\emptyword$.
  \end{itemize}
  The above means that by taking $\acword\acword'$ and any maximal
  extension of $\acword''\hat\acword$ we would get two maximal words
  in $\clang$ which $\p$ cannot distinguish, so contradicting our
  hypothesis of $\clang$ being branch-aware.
  \proofcase{Strong lock-freedom} Immediate since
  $\mathsf{SLF} = \mathsf{SF} \wedge \mathsf{DF}$.
\end{proof}

\begin{example}[Task dispatching and branch-awareness]\label{ex:parenthesis2}
  In order to show that the g-language $\clang$ in
  \cref{ex:parenthesis} is branch-aware, we first notice that each
  maximal word in $\clang$ ends with the interactions
  $\gint[][s][s][d] \cat \gint[][s][s][h]$.  If $\clang$ were not
  branch-aware, there should be two maximal words
  $\acword \cat \gint[][s][s][d] \cat \gint[][s][s][h]$ and
  $\acword' \cat \gint[][s][s][d] \cat \gint[][s][s][h]$ and a
  participant $\ptp[X]\in\ptpof[\clang]$ such that
  $\proj{(\acword \cat \gint[][s][s][d] \cat \gint[][s][s][h])}{X}
  \prec \proj{(\acword' \cat \gint[][s][s][d] \cat
	 \gint[][s][s][h])}{X}$. This is impossible, since $\acword$ and
  $\acword'$ are both generated by the non terminal symbol $L'$ and
  hence cannot contain the message $\msg[s]$.
  \finex
\end{example}

\cref{prop:systemproperties} refines as follows when restricting to
projections of g-languages.

\begin{restatable}{proposition}{systempropertiesbis}\label{prop:systemproperties2}
  When considering only systems which are projections of g-languages
  the following relations hold among the properties in
  \cref{def:llive,def:wlive,def:slf,def:lf,def:deadlockfree}
  \[
	 \begin{tikzpicture}[node distance=1cm and 1.5cm,scale=1,
		every node/.style={transform shape}]
		\node (lf) {\textsf{LF}};
		\node[left = of lf] (slf) {\textbf{SLF}};
		\node[right = of lf,xshift=2mm] (df) {\textsf{DF}};
		\node[below = of lf,yshift=4mm] (ha) {\textsf{HA}};
		\node[below  = of df,yshift=4mm] (sf) {\textsf{SF}};
%		\node[right = .5cm of df, yshift=-.5cm,text width=3cm] (w) {where implication does not hold in any direction between properties connected by dashed lines};
%		\node[right = of w, scale=1.2] {Moreover, $\textsf{DF} \wedge \textsf{SF} \Leftrightarrow \textsf{SLF}$.};
		\path (lf) edge[->,double,draw,bend left=10] (df);
		\path (df) edge[->,double,draw,bend left=10] node{/} (lf);
		\path[dashed,draw] (df) -- (sf);
		\path (sf) edge[->,double,draw,bend left=15] (ha);
		\path (ha) edge[->,double,draw,bend left=10] node{/} (sf);
		\path (df) edge[->,double,draw,bend left=10] (ha);
		\path (ha) edge[->,double,draw,bend left=10] node{/} (df);
		\path[dashed,draw] (lf) -- (sf);
		\path (lf) edge[->,double,draw,bend left=15] (ha);
		\path (ha) edge[->,double,draw,bend left=15] node{/} (lf);
		\path (slf) edge[->,double,draw,bend left=10] (lf);
		\path (lf) edge[->,double,draw,bend left=10] node{/} (slf);
	 \end{tikzpicture}
\]
  where implication does not hold in any direction between properties
  connected by dashed lines.
  Moreover, $\textsf{DF} \wedge \textsf{SF} \Leftrightarrow \textsf{SLF}$.
\end{restatable}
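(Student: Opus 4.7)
The plan is to leverage the fact that projected systems are automatically harmonic, so most of the new implications in \cref{prop:systemproperties2} are immediate consequences of \cref{thm:wfliveb}, while the remaining implications and non-implications can be inherited (or re-used via their existing counterexamples) from \cref{prop:systemproperties}. The only care needed is to check that each counterexample used in \cref{prop:systemproperties} happens to be the projection of a g-language; I expect this to be the genuine verification task.

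First, I would note that the implications \textsf{SLF}${}\Rightarrow{}$\textsf{LF} and \textsf{LF}${}\Rightarrow{}$\textsf{DF}, as well as the equivalence \textsf{SLF}${}\Leftrightarrow{}$\textsf{DF}${}\wedge{}$\textsf{SF}, are properties of arbitrary systems and therefore carry over verbatim from \cref{prop:systemproperties}. Next, I would dispatch the three new positive implications \textsf{LF}${}\Rightarrow{}$\textsf{HA}, \textsf{DF}${}\Rightarrow{}$\textsf{HA}, and \textsf{SF}${}\Rightarrow{}$\textsf{HA} all at once: by \cref{thm:wfliveb}, every projection $\proj{\clang}{}$ is harmonic unconditionally, so these three implications hold vacuously for the class under consideration.

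For the negative results, I would reuse the same counterexamples as in the proof of \cref{prop:systemproperties} and show that each one is in fact $\ssem{\proj{\clang}{}}$-equal to the projection of its own semantics. Concretely: for \textsf{HA}${}\centernot\Rightarrow{}$\textsf{DF}, the system with semantics $\{\emptyword,\gint[][A][m][B],\gint[][A][m][C]\}$ projects back to itself; for \textsf{HA}${}\centernot\Rightarrow{}$\textsf{LF}, the system with semantics $\pref[\{\gint[][C][m][A],\gint[][B][m][C]\cat\gint[][C][m][A]\cat\gint[][A][m][C]\}]$ does so; and for \textsf{HA}${}\centernot\Rightarrow{}$\textsf{SF}, the system with semantics $\pref[\{(\gint[][A][m][B])^n\cat\gint[][A][m][C]\cat(\gint[][A][m][B])^\omega \mid n\in\Nat\}\cup\{(\gint[][A][m][B])^\omega\}]$ does so. In each case one simply computes the projection on every participant and checks it coincides with the $\p$-local language originally declared, so the counterexample system is the projection of a g-language (namely, its own semantics). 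The dashed non-implications between \textsf{DF} and \textsf{SF}, and between \textsf{LF} and \textsf{SF}, are established by the same counterexamples from \cref{prop:systemproperties}, for which the identical check applies.

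The main obstacle is purely bookkeeping: one has to verify that each of the counterexample l-languages is really the image under $\proj{\_}{\p}$ of the corresponding g-language (i.e.\ of its synchronous semantics), rather than a strict superset. This is routine because in all three cases the semantics is CUI by \cref{prop:semiscui} and the projections are taken from the same interactions that generate the l-languages, but it must be spelled out to justify applying the counterexamples in the more restricted setting of projected systems.
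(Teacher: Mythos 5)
Your overall strategy coincides with the paper's: inherit the class-independent implications from \cref{prop:systemproperties}, get the three new positive implications towards \textsf{HA} for free from \cref{thm:wfliveb}, and recycle the old counterexamples after checking that each one lies in the restricted class. You also correctly identified the real verification task: since $\proj{\ssem{\aCS}}{\p}\subseteq\aCS(\p)$ always holds, a system is the projection of some g-language exactly when it equals $\proj{\ssem{\aCS}}{}$, i.e.\ exactly when it is harmonic. The checks you spell out for the \textsf{HA}-related counterexamples and for the dashed lines involving \textsf{SF} all go through (though note that CUI of the semantics, via \cref{prop:semiscui}, is not the relevant criterion --- harmonicity of the candidate system is).

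The gap is the one case you neither list nor check, and it is precisely the case where inheritance fails: \textsf{DF} $\centernot\implies$ \textsf{LF}. The counterexample used for this non-implication in \cref{prop:systemproperties} has $\alang_{\ptp[C]}=\pref[{\Set{\ain[A][C]}}]$ while its semantics is $\pref[{\Set{(\gint[][A][m][B])^\omega}}]$, whose projection on $\ptp[C]$ is $\Set{\emptyword}$; the system is therefore not harmonic and cannot be the projection of any g-language, so it proves nothing in the restricted setting. Since \textsf{DF} $\centernot\implies$ \textsf{LF} is still asserted in \cref{prop:systemproperties2}, a fresh counterexample must be produced. The paper takes the projection of
\[
  \clang = \pref[{\Set{ \gint[][a][n][c] \cat \gint[][a][m][b] \cat \gint[][a][m][c] \cat (\gint[][a][m][c])^\omega,\ \gint[][a][m][c] \cat \gint[][a][m][b] \cat \gint[][b][m][c] \cat (\gint[][a][m][c])^\omega}}]
\]
which is deadlock-free (all maximal words are infinite) but not lock-free: after $\gint[][A][n][C]\cat\gint[][A][m][B]$ the projection on $\q$ is not maximal, yet the only continuation, $\gint[][A][m][C]\cat(\gint[][a][m][c])^\omega$, never involves $\q$. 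Without such a construction your proof does not establish this edge of the diagram.
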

\begin{proof}
\proofcase{ SLF $\implies$ LF} The proof of \cref{prop:systemproperties} applies.
\proofcase{ LF $\centernot\implies$ SLF} The proof of \cref{prop:systemproperties} applies.
  \proofcase{ LF $\implies$ DF} The proof of \cref{prop:systemproperties} applies.
  \proofcase{DF $\centernot\implies$ LF} Let us take the
  language
  \[
	 \clang = \pref[\Set{ \gint[][a][n][c] \cat \gint[][a][m][b] \cat \gint[][a][m][c] \cat (\gint[][a][m][c])^\omega,\
		\gint[][a][m][c] \cat \gint[][a][m][b] \cat \gint[][b][m][c] \cat (\gint[][a][m][c])^\omega}]
  \]
  The system $\proj{\clang}{}$ is trivially DF since its maximal words
  are all infinite, while it is not LF since the non maximal word
  $\gint[][A][n][C] \cat \gint[][A][m][B] \in \ssem{\proj{\clang}{}}$
  is such that its projection on $\q$ is not maximal in
  $\proj \clang \q$ and no continuation in $\clang$ of such word
  contains interactions involving \q; in fact for the (only) g-word
  $\hat \acword = \gint[][A][m][C] \cat (\gint[][a][m][c])^\omega$
  such that
  $\gint[][A][n][C] \cat \gint[][A][m][B] \cat \hat\acword \in
  \ssem{\proj{\clang}{}}$, we have $\proj{\hat\acword}{B}=\emptyword$.
  \proofcase{DF $\centernot\implies$ SF} The counterexample of
  \cref{prop:systemproperties} applies.
  \proofcase{SF $\centernot\implies$ DF} The counterexample of
  \cref{prop:systemproperties} applies.
  \proofcase{LF $\implies$ HA} Trivial since HA always holds
  thanks to \cref{thm:wfliveb}.
  \proofcase{HA $\centernot\implies$ LF} Trivial since HA
  always holds thanks to \cref{thm:wfliveb}, while this is not the
  case for LF thanks to \cref{ex:closnodl}.
  \proofcase{SF $\centernot\implies$ LF} The proof of
  \cref{prop:systemproperties} applies.
  \proofcase{LF $\centernot\implies$ SF} The counterexample of
  \cref{prop:systemproperties} applies.
  \proofcase{SF $\implies$ HA} Trivial since HA always holds
  thanks to \cref{thm:wfliveb}.
  \proofcase{HA $\centernot\implies$ SF} The counterexample of
  \cref{prop:systemproperties} applies.
  \proofcase{DF $\implies$ HA} Trivial since HA always holds
  thanks to \cref{thm:wfliveb}.
  \proofcase{HA $\centernot\implies$ DF} Trivial since HA
  always holds thanks to \cref{thm:wfliveb}, while this is not the
  case for DF thanks to \cref{ex:closnodl}.
  \proofcase{DF $\wedge$ SF $\implies$ SLF} The proof of
  \cref{prop:systemproperties} applies.
  \proofcase{SLF $\implies$ DF $\wedge$ SF} The proof of
  \cref{prop:systemproperties} applies.
 \end{proof}

 It is not difficult to show that branch-awareness actually
 characterises {\sf SLF} for systems obtained by projecting CUI
 languages.
 
\begin{restatable}[Branch-awareness characterises {\sf
	 SLF}]{proposition}{bachar}
 A CUI g-language $\clang$ is branch-aware iff $\proj{\clang}{}$ is strongly lock-free.
\end{restatable}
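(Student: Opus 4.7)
The forward direction is essentially a specialisation of \cref{thm:baconseq}: a branch-aware CUI \sclang $\clang$ (we take the standard-or-continuous hypothesis as implicit, since \cref{thm:baconseq} relies on it) projects to a system enjoying all the communication properties of \cref{sec:prop}, including $\textsf{SLF}$. So the content of the proposition is really in the converse.

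For the backward direction I would argue by contradiction. Suppose $\clang$ is CUI and $\proj{\clang}{}$ is strongly lock-free, but $\clang$ is not branch-aware. Unfolding \cref{def:ba}, there exist maximal g-words $\acword_1,\acword_2 \in \clang$ and a participant $\px \in \ptpof[\clang]$ with $\proj{\acword_1}{X} \neq \proj{\acword_2}{X}$ such that neither projection is incomparable to the other; without loss of generality, $\proj{\acword_1}{X} \prec \proj{\acword_2}{X}$. In particular $\proj{\acword_1}{X}$ is not maximal in $\proj{\clang}{X}$, since $\proj{\acword_2}{X}$ belongs to $\proj{\clang}{X}$ and strictly extends it. By \cref{th:correctness} applied to the CUI language $\clang$, we have $\clang = \ssem{\proj{\clang}{}}$, so $\acword_1$ is also maximal in $\ssem{\proj{\clang}{}}$.

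I would then split on whether $\acword_1$ is finite or infinite. If $\acword_1$ is finite, taking $\acword' = \epsilon$ yields $\acword_1 \cat \acword' = \acword_1$ maximal in $\ssem{\proj{\clang}{}}$ with $\proj{\acword'}{X} = \epsilon$, in direct contradiction with $\textsf{SLF}$ at $\acword_1$ and \px. If $\acword_1$ is infinite, then $\proj{\acword_1}{X}$ must itself be finite, for otherwise it could not be a proper prefix of $\proj{\acword_2}{X}$; I can therefore pick a finite prefix $\hat\acword \prec \acword_1$ long enough that $\proj{\hat\acword}{X} = \proj{\acword_1}{X}$, and write $\acword_1 = \hat\acword \cat \acword''$. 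Now $\hat\acword$ is a finite word of $\ssem{\proj{\clang}{}}$, its projection on \px\ is still non-maximal in $\proj{\clang}{X}$, and the extension $\hat\acword \cat \acword'' = \acword_1$ is maximal in $\ssem{\proj{\clang}{}}$ while $\proj{\acword''}{X} = \epsilon$, once more contradicting $\textsf{SLF}$.

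The main obstacle is precisely the infinite subcase: one must notice that \px\ may be \emph{stuck} in the tail of an infinite run that no longer involves it, and that the appropriate $\textsf{SLF}$-witness is not $\acword_1$ itself but its finite prefix $\hat\acword$ that already captures all of \px's contributions. The role of CUI is solely to promote maximality in $\clang$ to maximality in $\ssem{\proj{\clang}{}}$ through \cref{th:correctness}; once that identification is made, branch-awareness is exactly the semantic condition preventing two maximal runs on which some participant cannot discriminate which branch has been taken, which is in turn exactly what $\textsf{SLF}$ rules out.
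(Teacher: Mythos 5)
Your proof is correct and follows essentially the same route as the paper's: the forward direction is delegated to \cref{thm:baconseq}, and the converse is proved contrapositively by locating the \textsf{SLF} violation either at the finite maximal word $\acword_1$ itself or at the finite prefix of $\acword_1$ that already contains all of \px's actions, using CUI (via \cref{th:correctness}) only to transfer maximality from $\clang$ to $\ssem{\proj{\clang}{}}$. The only cosmetic differences are that you split on finiteness of $\acword_1$ rather than on whether that prefix is proper, and that you contradict \textsf{SLF} directly in the finite case where the paper detours through \textsf{DF} and \cref{prop:systemproperties2}; your explicit flagging of the implicit \sclang hypothesis is also apt, since the paper's own argument needs it for the same reason.
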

\begin{proof}
  Necessity follows from \cref{thm:baconseq} while
  for sufficiency we reason by contraposition.
  Assume $\clang$ not to be branch-aware.
  Then, by \cref{def:ba} there exist a participant $\px$ and two
  maximal words in $\clang$ such that their projections on $\px$
  differ and
  either $\proj{{\acword_1}}{X} \prec \proj{{\acword_2}}{X}$
  or $\proj{{\acword_2}}{X} \prec \proj{{\acword_1}}{X}$.
  Let us consider the first case (the second is analogous).
  Let now $\acword'_1$ and $\acword'_2$ be the maximal prefixes (hence
  finite) of, respectively, $\acword_1$ and $\acword_2$ such that
  $\proj{{\acword'_1}}{X} = \proj{{\acword'_2}}{X}$.
  We distinguish the following two cases.

  \proofcase{$\acword'_1 = \acword_1$} In such a case we get that
  $\proj{\clang}{}$ is not deadlock-free, since, for the finite
  maximal $\acword$ we have that $\proj{{\acword_1}}{X}$ is not
  maximal.  So, by \cref{prop:systemproperties2}, $\proj{\clang}{}$ is
  not strongly lock-free as well.
  \proofcase{$\acword'_1 \prec \acword_1$}
  In such a case we get that $\acword'_1$ is a non-maximal finite word
  such that $\acword_1 = \acword'_1\cat\acword''_1$ is maximal,
  $\proj{{\acword_1}}{X}$ is non maximal and
  $\proj{{\acword''_1}}{X}=\emptyword$, that is $\proj{\clang}{}$ is
  not strongly lock-free.
\end{proof}

%%% Local Variables:
%%% mode: latex
%%% TeX-master: "main"
%%% TeX-master: "main"
%%% TeX-master: "main"
%%% TeX-master: "main"
%%% End:

\section{CFSMs and Choreography Languages}\label{sec:cfsmfcl}
\emph{Communicating finite-state machines} (CFSMs) have been
introduced in~\cite{bz83} as a convenient model to analyse
message-passing protocols. 
Basically, a CFSM is a finite-state automaton (FSA), defined below,
whose transitions are actions.

\begin{definition}[Finite state automaton (FSA)]
A \emph{finite state automaton} (FSA) is a tuple
$A = \conf{\sset, q_0, \lset, \tset}$ where
\begin{itemize}
\item $\sset$ is a finite set of states (ranged over by $q,s,\ldots$) and  $q_0 \in \sset$ is the \emph{initial state};
\item $\lset$ is a finite set of labels (ranged over by
  $\al,\lambda,\ldots$);
\item
  $\tset \subseteq \sset \times \lset \times
  \sset$ is a set of transitions.
\end{itemize}
\end{definition}
We use the usual notation $q_1 \arro{\lambda} q_2$ for the transition $(q_1,\lambda,q_2)\in\arro{}$, and
$q_1 \arro{} q_2$ when there exists $\lambda$ such that
$q_1 \arro{\lambda} q_2$, as well as $\arro{}^*$ for the reflexive and transitive
closure of $\arro{}$.

The set of \emph{reachable states} of $A$ is
$\RS[A] = \Set{q \sst q_0\arro{}^\star q}$.
\begin{remark}\label{rmk:fsa}
  Our definition of FSA omits the set of \emph{accepting} states since
  we consider only FSAs where each state is accepting.
  \rmkend
\end{remark}

Following the assumption in \cref{rmk:fsa}, we define the language
$\mlang(A)$ as the union of the finite words accepted by $A$ in the
classical sense and the infinite words accepted by $A$ considered as
B\"uchi automaton\footnote{A B\"uchi automaton accepts an infinite
	 word $\aword$ when it traverses infinitely often elements of the
	 set of accepting states while consuming the symbols of $\aword$.}
where all states are accepting.

A CFSM is an FSA labelled in $\alfact \cup \Set{\emptyword}$, where
$\emptyword$ is not a symbol in $\alfact$ and it overloads the
notation for the empty string to represent internal transitions of
CFSMs as usual in automata theory.
\begin{remark}
  FSAs, and consequently CFSMs, can be deterministic or not.
  Deterministic FSAs have no label $\emptyword$, and transitions from
  the same state are pairwise different.
  Given a non-deterministic FSA one can build a deterministic FSA
  generating the same language.
  We will assume that each CFSM is deterministic since we are
  interested in languages and non-deterministic CFSMs can be
  determinised while preserving their language.
  
  Notice that it is necessary to adopt finer notions of equivalence
  such as bisimilarity~\cite{par80} in order to tell apart
  deterministic FSAs from non-deterministic ones.
  It could be interesting to investigate the possibility of including
  the notion of nondeterminism in FCLs in the future.
  \rmkend
\end{remark}
A CFSM is \emph{local} to a participant \p\ (\p-local for short) if
all its transitions have subject \p; we will consider communicating
systems where the behaviour of each participant \p\ is specified by an
\p-local CFSM.
\renewcommand{\aCS}{\mathbf{S}}
Formally, given a finite set $\ptpset \subseteq \mathfrak{P}$ of
participants, a \emph{system of CFSMs} is a map
$(\aCM_{\p})_{\p \in \ptpset}$ assigning an \p-local CFSM $\aCM_{\p}$
to each participant $\p \in \ptpset$ such that any participant
occurring in a label of a transition of $\aCM_{\p}$ is in $\ptpset$.

The synchronous behaviour of a system of CFSMs
$(\aCM_{\p})_{\p \in \ptpset}$ has been defined in~\cite{blt20} as any
FSA where states are maps assigning a state in $\aCM_{\p}$ to each
$\p \in \ptpset$ and transitions are labelled by interactions.
Intuitively, given a configuration $\aConf$, if $\aCM_{\p}$ and
$\aCM_{\q}$ have respectively transitions
$\aConf(\p) \arro \aout q_{\p}'$ and $\aConf(\q) \arro \ain q_{\q}'$
then
$\aConf \arro \gint \aConf[\p \mapsto q_{\p}',\q \mapsto q_{\q}']$,
where $f[x \mapsto y]$ denotes the update of $f$ on $x$ with $y$.
The next definition is a slightly different version of the one
in~\cite{blt20}, where the semantics of a system of CFSMs $\aCS$ is
the FSA $\cssem \aCS$ defined below rather than its language as in our
case.
\begin{definition}[Synchronous semantics of systems of CFSMs]\label{def:syncSemaut}
  Let $\aCS = (\aCM_{\p})_{\p \in \ptpset}$ be a system of CFSMs where
  $\aCM_{\p} = \conf{\sset_{\p}, q_{0\p}, \alfact, \tset_{\p}}$ for
  each participant $\p \in \ptpset$.
  A \emph{synchronous configuration} for $\aCS$ is a map
  $\aConf = (q_{\p})_{\p \in \ptpset}$ assigning a \emph{local state}
  $q_{\p}\in \sset_{\p}$ to each $\p \in \ptpset$.

  The \emph{synchronous semantics} of $\aCS$ is the g-language
  $\mlang(\cssem \aCS)$ where
  $\cssem \aCS = \conf{\sset, \aConf_0, \alfint, \tset}$ is defined as
  follows:
  \begin{itemize}
  \item $\sset$ is the set of configurations of $\aCS$, as defined
	 above, and $\aConf_0: \p \mapsto q_{0\p}$ for each $\p \in
	 \ptpset$ is the \emph{initial} configuration of $\sset$
  \item $\tset$ is the set of transitions
	 $\aConf_1 \arro{\gint} \aConf_2$ such that
	 \begin{itemize}
	 \item $\aConf_1(\p) \arro{\aout} \aConf_2(\p)$ in $\aCM_{\p}$ and
		$\aConf_1(\q) \arro{\ain} \aConf_2(\q)$ in $\aCM_{\q}$, and
	 \item for all $\ptp[x] \in \ptpset \setminus \Set{\p,\q}$,
		$\aConf_1(\ptp[x]) = \aConf_2(\ptp[x])$.
	 \end{itemize}
  \end{itemize}
\end{definition}
  As we will see in \cref{sec:chor-automata}, the synchronous
  semantics in \cref{def:syncSemaut} is a \emph{choreography automaton}~\cite{blt20}.
  Note that this is not the case for the asynchronous semantics of
  communicating systems which is in general a transition system with
  infinitely many configurations.

An immediate relation between systems of CFSMs and our communicating
systems is that, given a system of CFSMs
$\aCS = (\aCM_{\ptp[X]})_{\ptp[X] \in \ptpset}$, we can define the
abstract system corresponding to $\aCS$ as
$\hat \aCS = (\mlang(\aCM_{\ptp[X]}))_{\ptp[X] \in \ptpset}$.
Unsurprisingly, the semantics of $\aCS$ and $\hat \aCS$ do coincide.
\begin{proposition}\label{prop:eqsems}
  For all systems $\aCS = (\aCM_{\ptp[X]})_{\ptp[X] \in \ptpset}$ of
  CFSMs, $\mlang(\cssem \aCS) = \ssem{\hat \aCS}$.
\end{proposition}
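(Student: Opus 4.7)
The plan is to prove the equality by showing the two inclusions separately, exploiting the very direct correspondence between runs in the product automaton $\cssem \aCS$ and families of runs in the component CFSMs $\aCM_{\p}$.

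For $(\subseteq)$, I would take $\acword \in \mlang(\cssem \aCS)$ and pick an accepting run $\aConf_0 \arro{\aint_1} \aConf_1 \arro{\aint_2} \cdots$ in $\cssem \aCS$ on $\acword$. By the definition of the transitions of $\cssem \aCS$, each step $\aConf_{i-1} \arro{\aint_i} \aConf_i$ with $\aint_i = \gint[][P_i][m_i][Q_i]$ corresponds to transitions $\aConf_{i-1}(\p[P_i]) \arro{\aout[P_i][Q_i][][m_i]} \aConf_i(\p[P_i])$ in $\aCM_{\p[P_i]}$ and $\aConf_{i-1}(\p[Q_i]) \arro{\ain[P_i][Q_i][][m_i]} \aConf_i(\p[Q_i])$ in $\aCM_{\p[Q_i]}$, while every other component is left unchanged. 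Extracting from the global run the sequence of local states of a given participant $\p$, and keeping only the steps where $\p$ is actually involved, yields a run of $\aCM_{\p}$ on exactly $\proj{\acword}{\p}$. Since by \cref{rmk:fsa} every state is accepting, this local run is accepting both in the finite and in the B\"uchi sense, so $\proj{\acword}{\p} \in \mlang(\aCM_{\p}) = \hat \aCS(\p)$. The side condition $\ptpof[\acword] \subseteq \ptpset$ is immediate because transitions of $\cssem \aCS$ only carry labels whose participants are in $\ptpset$.

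For $(\supseteq)$, I would take $\acword \in \ssem{\hat \aCS}$, so that for each $\p \in \ptpset$ there is an accepting run $\rho_{\p}$ of $\aCM_{\p}$ on $\proj{\acword}{\p}$; call $q^{\p}_k$ the state of $\rho_{\p}$ after consuming its first $k$ symbols. For each index $i$ along $\acword$ and each $\p \in \ptpset$, let $k_i^{\p}$ be the number of actions of $\p$ appearing in the prefix $\aint_1 \cdots \aint_i$ of $\acword$, and define the configuration $\aConf_i(\p) = q^{\p}_{k_i^{\p}}$. By construction $\aConf_0$ is the initial configuration, and each step $\aConf_{i-1} \arro{\aint_i} \aConf_i$ satisfies the two conditions of \cref{def:syncSemaut}: the sender and receiver of $\aint_i$ advance by exactly one symbol of their local run (namely the next output, respectively input, action of $\proj{\acword}{\p}$), while all other participants have $k_{i-1}^{\p} = k_i^{\p}$ and therefore keep the same local state. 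Hence the global run is well-defined in $\cssem \aCS$ and accepts $\acword$, again using that all states of $\cssem \aCS$ are accepting both for finite and B\"uchi acceptance.

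I do not expect serious obstacles: the CFSMs are deterministic and have no $\emptyword$-transitions, and the fact that every state is accepting removes the only delicate point about B\"uchi acceptance, namely tracking the infinitely-often-visited accepting set. The only bit of care lies in the bookkeeping of the indices $k_i^{\p}$ in the $(\supseteq)$ direction, which is a straightforward inductive argument on the prefixes of $\acword$.
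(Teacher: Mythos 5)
Your proposal is correct and follows essentially the same route as the paper: both directions are proved by decomposing a run of $\cssem \aCS$ into the family of local runs of the $\aCM_{\p}$ and, conversely, recomposing local runs into a global one, using that every state is accepting to dispense with acceptance conditions. The only difference is presentational: the paper phrases the run correspondence as a coinduction that peels off one interaction at a time, whereas you build the whole global run at once via the index bookkeeping $k_i^{\p}$; the two are interchangeable here (and, as the paper also observes, neither actually needs determinism of the CFSMs).
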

\begin{proof}
	 The proof shows the commutativity of the following
	 diagram
	 \[\begin{tikzpicture}[node distance = 1cm and 3cm]
		\node (S) {$\csystems$};
		\node (L) [below = of S] {$\mathfrak{L}$};
		\node (A) [right = of S] {$\mathfrak{C}$};
		\node (G) [below = of A] {$\glanguages$};
		\path[->] (S) edge[above] node{$\cssem \_$} (A)
		(S) edge[left] node{$\hat \_$} (L)
		(A) edge[right] node{$\mlang(\_)$} (G)
		(L) edge[below] node{$\ssem{\_}$} (G);
	 \end{tikzpicture}\]
  where $\csystems$ is the set of systems of CFSMs, $\mathfrak{L}$ is
  the set of communicating systems (cf. \cref{def:commSyst}),
  $\mathfrak{C}$ is the set of c-automata, and $\glanguages$ is the
  set of global languages.

  Let
  $M_{\ptp[X]}={\conf{\sset_{\ptp[X]},
		{q_0}_{\ptp[X]},\alfact,\tset_{\ptp[X]}}}$ and let
  $\cssem{\aCS} = \conf{\sset,\aConf_0,\alfint,\tset}$ where
  $\conf{\sset,\aConf_0,\alfint,\tset}$ is as in
  \cref{def:syncSemaut}.
  We prove the equality by proving separately the following two
  inclusions, for all $\aCS$.

  \proofcase{$\mlang(\cssem \aCS) \subseteq \ssem{\hat \aCS}$}
  By definition of $\hat \aCS$, it is enough to show that, for any
  $\acword\in\alfint^\infty$, if $\acword \in \mlang(\cssem \aCS)$,
  then for any $\ptp[X] \in \ptpset$, we have that
  $\proj \acword X \in \mlang(M_{\ptp[X]})$.
  Let $\acword \in \mlang(\cssem \aCS)$ and proceed by coinduction on
  the paths of machine $\cssem \aCS$.

  If $\acword = \emptyword$ the thesis follows immediately.
  Otherwise, let $\acword = (\gint[]) \cat \acword'$.
  By definition of recognised word, there is $\aConf_1\in \sset$ such
  that $\aConf_0\arro{\gint[]} \aConf_1$ and $\acword'$ is recognised
  by the automaton $\conf{\sset,\aConf_1,\alfint,\tset}$.\\
  Necessarily, by \cref{def:syncSemaut},
  \begin{enumerate}
  \item $\aConf_0(\p) \trans[\p] \aout \aConf_1(\p)$ and
	 $\aConf_0(\q) \trans[\q] \ain \aConf_1(\q)$, and
  \item for all $\ptp[x] \neq \p, \q$,
	 $\aConf_1(\ptp[x]) = \aConf_0(\ptp[x])$.
  \end{enumerate}
  It follows that $\acword' \in \mlang(\aCS')$ where
  $\aCS' = ({\conf{\sset_{\ptp[X]}, \aConf_1(\ptp[X]), \alfact,
		\tset_{\ptp[X]}}})_{\ptp[X] \in {\Set{\p,\q}}} \cup
  (M_{\ptp[X]})_{\ptp[X] \in {\ptpset\setminus\Set{\p,\q}}}$.
  The thesis hence follows by coinduction, since
  $\proj \acword \p = (\aout) \cat \proj {\acword'} \p$,
  $\proj \acword \q = (\ain) \cat \proj {\acword'} \q$ and, for each
  ${\ptp[X] \in {\ptpset \setminus \Set{\p,\q}}}$,
  $\proj \acword x = \proj {\acword'} x$.

  \proofcase{$\ssem{\hat \aCS} \subseteq \mlang(\cssem \aCS$)}
  For this case we have to prove that for all
  $\acword \in \alfint^\infty$, if $\acword\in \ssem{\hat \aCS}$ then
  $\acword\in \mlang(\cssem \aCS)$.
  Let $\acword \in \ssem{\hat \aCS}$ and proceed by coinduction.
  If $\acword = \emptyword$ the thesis follows immediately.
  Otherwise, let $\acword = (\gint[]) \cat \acword'$.
  Since $\proj \acword \p = (\aout) \cat \proj{\acword'}\p$,
  $\proj \acword \q = (\ain) \cat \proj{\acword'} \q$ and, for each
  ${\ptp[X] \in {\ptpset\setminus\Set{\p,\q}}}$,
  $\proj \acword x = \proj{\acword'} x$, and since, for each
  $\ptp[X] \in \ptpset$, we have
  $\proj \acword x \in \mlang(\aCM_{\ptp[X]})$, it follows, by
  definition of recognised word and by \cref{def:syncSem}, that
  $\acword' \in \ssem{\hat \aCS'}$, with
  $\hat\aCS' = (\mlang({\conf{\sset_{\ptp[X]}, \aConf_1(\ptp[X]),
		\alfact, \tset_{\ptp[X]}}}))_{\ptp[X] \in {\Set{\p,\q}}} \cup
  (\mlang(M_{\ptp[X]}))_{\ptp[X] \in {\ptpset\setminus\Set{\p,\q}}}$,
  where
  \begin{enumerate}
  \item $\aConf_0(\p) \trans[\p]{\aout} \aConf_1(\p)$ and
	 $\aConf_0(\q) \trans[\q]{\ain} \aConf_1(\q)$, and
  \item for all $\ptp[x] \neq \p,\q$,
	 $\aConf_1(\ptp[x]) = \aConf_0(\ptp[x])$.
  \end{enumerate}
  The thesis hence follows by coinduction.
\end{proof}
Notice that the proof of \cref{prop:eqsems} does not require CFSMs to
be deterministic; indeed, the result above holds for non-deterministic
CFSMs too.

The communication properties of a system of CFSMs $\aCS$ on $\ptpset$
considered in~\cite{blt20} are liveness, lock-freedom, and
deadlock-freedom.
Intuitively
\begin{itemize}
\item $\aCS$ is \emph{live} when each reachable configuration where a
  participant $\p \in \ptpset$ can execute a communication has a
  continuation where \p\ is involved;
\item $\aCS$ is \emph{lock-free} when in all computations starting
  from a reachable configuration where a participant $\p \in \ptpset$
  can execute, \p\ is involved;
\item $\aCS$ is \emph{deadlock-free} if in none of its reachable
  configurations without outgoing transitions there exists
  $\p \in \ptpset$ willing to communicate.
\end{itemize}

The next definition formalises properties of systems of CFSMs.
\begin{definition}[Communication properties of systems of CFSMs~\cite{blt20}]\label{def:commpropaut}
  Let $\aCS = (\aCM_{\px})_{\px \in \ptpset}$ be a system of CFSMs.
  \begin{itemize}
  \item {\bf Liveness}: $\aCS$ is \emph{live} if for each
	 configuration $\aConf \in \RS[\cssem \aCS]$ and each
	 $\p \in \ptpset$ such that $\aConf(\p)$ has some outgoing
	 transition in $\aCM_{\p}$, there exists a run of $\cssem \aCS$
	 from $\aConf$ including a transition involving \p.
  \item {\bf Lock freedom}: a configuration
	 $\aConf \in \RS[\cssem \aCS]$ is a \emph{lock} if there is
	 $\p \in \ptpset$ with an outgoing transition from $\aConf(\p)$ in
	 $\aCM_{\p}$ and there exists a run of $\cssem \aCS$ starting
	 from $\aConf$, maximal with respect to prefix order and containing
	 no transition involving \p.
	 System $\aCS$ is \emph{lock-free} if for each
	 $\aConf \in \RS[\cssem \aCS]$, $\aConf$ is not a lock.
  \item {\bf Deadlock freedom}: a configuration
	 $\aConf \in \RS[\cssem \aCS]$ is a \emph{deadlock} if
	 $\aConf$ has no outgoing transitions in $\cssem \aCS$, yet there exists
	 $\p \in \ptpset$ such that $\aConf(\p)$ has an outgoing transition
	 in $\aCM_{\p}$.
	 System $\aCS$ is \emph{deadlock-free} if for each
	 $\aConf \in \RS[\cssem \aCS]$, $\aConf$ is not a
	 deadlock.
  \end{itemize}
\end{definition}
  
It is the case that lock-freedom, strong lock-freedom, and
deadlock-freedom of $\hat \aCS$ (in the sense of
\cref{def:lf,def:slf,def:deadlockfree}) respectively imply liveness,
lock-freedom, and deadlock-freedom of $\aCS$ as stated in
\cref{prop:livelive} below, which relies on the following lemma.

\begin{lemma}\label{lem:auxlivelive}
  Let $\aCS = (\aCM_{\ptp[X]})_{\ptp[X] \in \ptpset}$ be a system of
  CFSMs.
  If $\acword \in \mlang(\cssem \aCS)$ is recognised by a run of
  $\cssem \aCS$ from $\aConf_0$ to $\aConf$ then, for each
  $\p\in\ptpset$, $\proj \acword \p \in \mlang(\aCM_{\p})$ and
  $\proj \acword \p$ is recognised by a run of $\aCM_{\p}$ from
  $\aConf_0(\p)$ to $\aConf(\p)$.
\end{lemma}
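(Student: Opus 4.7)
The plan is to trace each transition of the run witnessing $\acword$ and read off its effect on the local CFSM of $\p$, exploiting the transition rule of \cref{def:syncSemaut}. The argument splits naturally according to whether $\acword$ is finite or infinite.

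First I would fix $\p \in \ptpset$ and consider the finite case: let $\aConf_0 \arro{\aint_1} \aConf_1 \arro{\aint_2} \cdots \arro{\aint_n} \aConf_n = \aConf$ be the run of $\cssem \aCS$ recognising $\acword = \aint_1 \cat \cdots \cat \aint_n$. I would prove by induction on $n$ that $\proj{\acword}{\p}$ is recognised by a run of $\aCM_{\p}$ from $\aConf_0(\p)$ to $\aConf_n(\p)$. The base case $n = 0$ is trivial since $\proj{\emptyword}{\p} = \emptyword$. For the inductive step, inspect the shape of the transition $\aConf_{n-1} \arro{\aint_n} \aConf_n$ with $\aint_n = \gint[][B][m][C]$: by \cref{def:syncSemaut}, if $\p \notin \Set{\ptp[B],\ptp[C]}$ then $\aConf_{n-1}(\p) = \aConf_n(\p)$ and $\proj{\aint_n}{\p} = \emptyword$, so the inductive run suffices; otherwise $\aConf_{n-1}(\p) \arro{\proj{\aint_n}{\p}} \aConf_n(\p)$ in $\aCM_{\p}$, which extends the run given by the induction hypothesis.

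For the infinite case I would proceed analogously but appeal to the B\"uchi acceptance condition implicit in \cref{rmk:fsa}: given an infinite accepting run $\aConf_0 \arro{\aint_1} \aConf_1 \arro{\aint_2} \cdots$ in $\cssem \aCS$, applying the finite argument to every prefix yields a sequence of compatible finite runs of $\aCM_{\p}$; these glue together into an infinite run of $\aCM_{\p}$ whose label is $\proj{\acword}{\p}$. Since every state of $\aCM_{\p}$ is accepting, this infinite run witnesses $\proj{\acword}{\p} \in \mlang(\aCM_{\p})$ regardless of whether $\proj{\acword}{\p}$ is finite (if $\p$ is involved only finitely often) or infinite.

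The only mildly delicate point is the finite-projection-of-infinite-run situation just mentioned: one must check that even when $\p$ stops being active after some step $k$, the stabilised local state $\aConf_k(\p) = \aConf_{k+1}(\p) = \cdots$ is accepting, which is immediate from \cref{rmk:fsa}. All other steps are direct unfoldings of \cref{def:syncSemaut}.
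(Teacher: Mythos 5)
Your proof is correct and takes essentially the same route as the paper, whose entire proof reads ``By induction on the length of $\acword$ using \cref{def:syncSemaut}''; your finite-case induction with the case split on whether $\p$ occurs in the last interaction is exactly the intended argument. Your additional treatment of infinite words goes beyond what the lemma's phrasing (``a run from $\aConf_0$ to $\aConf$'') strictly requires, but it is a sound and harmless elaboration.
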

\begin{proof}
  By induction on the length of $\acword$ using \cref{def:syncSemaut}.
\end{proof}
  
\begin{proposition}\label{prop:livelive}
  For all systems of CFSMs
  $\aCS = (\aCM_{\ptp[X]})_{\ptp[X] \in \ptpset}$
  \begin{enumerate}[(i)]
  \item\label{prop:livelivei} $\hat \aCS$ lock-free iff $\aCS$ live;
  \item\label{prop:liveliveii} ${\hat \aCS}$ strong lock-free iff $\aCS$ lock-free;
  \item\label{prop:liveliveiii} ${\hat \aCS}$ deadlock-free iff $\aCS$
	 deadlock-free.
  \end{enumerate}
\end{proposition}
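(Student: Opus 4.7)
The plan is to establish a tight correspondence between finite runs of $\cssem\aCS$ and finite words of $\ssem{\hat\aCS}$, then translate each of the three properties by direct unfolding of their definitions on the two sides.

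First I would state the following bridge, proved once and used for all three items: for each finite word $\acword\in\ssem{\hat\aCS}$, Proposition~\ref{prop:eqsems} gives $\acword\in\mlang(\cssem\aCS)$, and since each $\aCM_{\p}$ is deterministic so is $\cssem\aCS$, hence $\acword$ is recognised by a unique run from $\aConf_0$ to some configuration $\aConf_\acword$. By Lemma~\ref{lem:auxlivelive}, $\aConf_\acword(\p)$ is precisely the state reached in $\aCM_\p$ by $\proj\acword\p$. Combining this with the fact that $\aCM_\p$ is deterministic and all its states are accepting, $\aConf_\acword(\p)$ has an outgoing transition in $\aCM_\p$ if and only if $\proj\acword\p$ is not maximal in $\mlang(\aCM_\p)=\hat\aCS(\p)$. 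Moreover, for any continuation $\acword'$ making $\acword\cat\acword'\in\ssem{\hat\aCS}$, the runs from $\aConf_\acword$ recognising $\acword'$ contain a transition involving $\p$ exactly when $\proj{\acword'}\p\neq\emptyword$. Conversely, each reachable $\aConf\in\RS[\cssem\aCS]$ is $\aConf_\acword$ for some finite $\acword\in\ssem{\hat\aCS}$.

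With this bridge each equivalence follows by unfolding definitions. For~(\ref{prop:livelivei}), lock-freedom of $\hat\aCS$ in the sense of Definition~\ref{def:wlive} yields, whenever $\proj\acword\p$ is not maximal, some $\acword'$ with $\acword\cat\acword'\in\ssem{\hat\aCS}$ and $\proj{\acword'}\p\neq\emptyword$; this $\acword'$ executes as a run of $\cssem\aCS$ from $\aConf_\acword$ containing a transition involving $\p$, giving liveness of $\aCS$, and the converse direction is symmetric. For~(\ref{prop:liveliveii}), strong lock-freedom (Definition~\ref{def:slf}) imposes the same property along \emph{every} maximal continuation, which matches exactly the negation of the lock condition in Definition~\ref{def:commpropaut} once maximal words in $\ssem{\hat\aCS}$ are identified with prefix-maximal runs of $\cssem\aCS$. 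For~(\ref{prop:liveliveiii}), a reachable $\aConf$ with no outgoing transitions corresponds through the bridge to a finite maximal $\acword\in\ssem{\hat\aCS}$; the deadlock condition on $\aCS$ then reads \quo{there is $\p$ such that $\aConf(\p)$ has an outgoing transition}, which translates exactly to \quo{$\proj\acword\p$ is not maximal in $\hat\aCS(\p)$}, the negation of Definition~\ref{def:deadlockfree}.

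The main obstacle is the maximality correspondence used in~(\ref{prop:liveliveii}): I must verify that a continuation $\acword'$ making $\acword\cat\acword'$ a maximal word of $\ssem{\hat\aCS}$ corresponds to a run of $\cssem\aCS$ from $\aConf_\acword$ that is maximal with respect to prefix order, and dually. The finite case is immediate from the bridge; the infinite case requires noting that by $\mlang(\cssem\aCS)=\ssem{\hat\aCS}$, an infinite word in $\ssem{\hat\aCS}$ is accepted by $\cssem\aCS$ viewed as a B\"uchi automaton (all states accepting), so its unique run is an infinite, hence prefix-maximal, run from $\aConf_\acword$; conversely, any infinite run from $\aConf_\acword$ reads an infinite word of $\mlang(\cssem\aCS)=\ssem{\hat\aCS}$. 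Once this is in place, the three equivalences reduce to pure bookkeeping on the two sets of definitions.
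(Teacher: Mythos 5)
Your proposal is correct and follows essentially the same route as the paper's proof: both rest on Proposition~\ref{prop:eqsems}, Lemma~\ref{lem:auxlivelive} and determinism to identify reachable configurations with finite words of $\ssem{\hat\aCS}$, outgoing local transitions with non-maximal projections, and transitions involving $\p$ with non-empty projected continuations, and then unfold the two sets of definitions. The only difference is presentational: you isolate the correspondence as a reusable bridge (and spell out the maximality/prefix-maximal-run matching needed for item~(\ref{prop:liveliveii})), whereas the paper proves item~(\ref{prop:livelivei}) in full and declares the other two similar.
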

\begin{proof}
  \ref{prop:livelivei}\proofcase{$\Rightarrow$} Let ${\hat \aCS}$ be
  lock-free.
  Following \cref{def:commpropaut}, in order to show the liveness of
  $\aCS$, let us consider $\aConf \in \RS[\cssem \aCS]$, and let
  $\aConf(\p) \arro{\aact} {}$ be a transition in $\aCM_{\p}$ (for a
  certain $\p\in\ptpset$).
  Let now $\acword$ be the trace of a run of $\cssem \aCS$
  from $\aConf_0$ to $\aConf$.
  By \cref{prop:eqsems} $\acword \in \ssem{\hat \aCS}$ and, by
  \cref{lem:auxlivelive}, $\proj \acword \p \in \mlang(\aCM_{\p})$ and
  $\proj \acword \p$ is the trace of a run from $\aConf_0(\p)$ to
  $\aConf(\p)$ in $\aCM_{\p}$.
  Now, to prove that ${\aCS}$ is live, we have to show that there
  exists a run of $\cssem \aCS$ from $\aConf$ such that one of its
  transitions has a component transition from $\aConf(\p)$.
  We have that $\proj \acword \p \cat \aact \in \mlang(\aCM_{\ptp[A]})$.
  Hence, by lock-freedom of ${\hat \aCS}$, there is $\acword'$ such
  that $\acword \cat \acword' \in \ssem{\hat \aCS}$ and
  $\proj {\acword'} \p \neq \emptyword$.
  By \cref{prop:eqsems} $\acword'$ is also a run of $\cssem \aCS$ from
  $\aConf$ and, by \cref{lem:auxlivelive} (considering the CFSMs
  $\aCS' = (\aCM'_{\ptp[X]})_{\ptp[X] \in \ptpset}$ where each
  $\aCM'_{\ptp[X]}$ is like $\aCM'_{\ptp[X]}$ but with
  $\aConf(\ptp[X])$ as initial state), such a run has a transition
  from $\aConf(\p)$ as component transition.  \proofcase{$\Leftarrow$}
  By contraposition, let us assume ${\hat \aCS}$ not to be lock-free.
  Then there exists a participant $\p$ and a word
  $\acword\in\ssem{\hat \aCS}$ such that
  \begin{itemize}
  \item
  $\proj{\acword}{A}$ is not maximal in $\hat\aCS(\p)$; and
  \item
  for all $\acword'$, $\acword\cat\acword'\in\ssem{\hat \aCS}$ implies $\proj{\acword'}{A}=\emptyword$
  \end{itemize}
  Let $\aConf$ be the configuration in $\cssem \aCS$ reached by
  recognising $\acword$.  Since $\proj{\acword}{A}$ is not maximal, we
  get, by \cref{lem:auxlivelive} and determinism of the automata in
  $\aCS$, that $\aConf(\p)$ has at least an outgoing transition. If
  $\aCS$ were live, then there would be a run of $\cssem \aCS$ from
  $\aConf$ -- and hence a word of the form
  $\acword\cat\acword'\in \mlang{(\cssem \aCS)}=\ssem{\hat \aCS}$ --
  including a transition involving $\p$, so contradicting the liveness
  of $\aCS$. So $\aCS$ is not live and we are done.
  
The proof of \ref{prop:liveliveii} and \ref{prop:liveliveiii} are
similar to \ref{prop:livelivei}.
\end{proof}

\begin{remark}
  \cref{prop:livelive} does not hold in case we consider
  non-determistic CFSMs.
  For instance, let us consider the following system of CFSMs.
  $\aCS = (\aCM_{\px})_{\px \in \Set{\p,\q}}$ where
  \[
	 \begin{tikzpicture}[node distance=1.3cm]
      \tikzstyle{every state}=[cnode]
      \tikzstyle{every edge}=[carrow]
      \node[state, initial, initial text={$\aCM_{\p}$}] (0) {$q_0$};
      \node[state,  right of=0] (1) {$q_1$};
      \node[state,  right of=1] (2) {$q_2$};
      \path
      (0) edge node[above] {$\aout[A][B][][m]$} (1)
      (1) edge node[above] {$\aout[A][B][][n]$} (2)
      ;
    \end{tikzpicture}
	 \qquad
	 \begin{tikzpicture}[node distance=1.3cm]
      \tikzstyle{every state}=[cnode]
      \tikzstyle{every edge}=[carrow]
      \node[state, initial, initial text={$\aCM_{\q}$}] (0) {$q_0$};
      \node[state, above right of=0, yshift=-0.5cm] (1) {$q_1$};
      \node[state,  right of=1] (2) {$q_2$};
      \node[state,  right of=2] (3) {$q_3$};
      \node[state, below right of=0, yshift=0.5cm] (4) {$q_4$};
		\node[state, right of=4] (5) {$q_5$};
      \path
      (0) edge node[above] {$\emptyword$} (1)
      (0) edge node[below] {$\emptyword$} (4)
      (1) edge node[above] {$\ain[A][B][][m]$} (2)
      (4) edge node[above] {$\ain[A][B][][m]$} (5)
      (2) edge node[above] {$\ain[A][B][][n]$} (3)
      ;
    \end{tikzpicture}
  \]
The corresponding communicating systems is
$\hat\aCS = (\alang_{\px})_{\px \in \Set{\p,\q}}$ 
where 
\[\alang_{\p} =\Set{\emptyword,\, \aout[A][B][][m],\, \aout[A][B][][m]\cat\aout[A][B][][n]}
\qquad 
\alang_{\q} =\Set{\emptyword,\, \ain[A][B][][m],\, \ain[A][B][][m]\cat\ain[A][B][][n]}
\]
It is easy to check that $\hat\aCS$ is deadlock-free.
However $\aCS$ is not so, since the system can reach the stuck
configuration $(q_1,q_5)$ and there is an outgoing transition from
$q_1$ in $\aCM_{\p}$.
Likewise, the communicating system made of $\aCM_{\p}$
  and either of the following machines
  \[
	 \begin{tikzpicture}[node distance=1.3cm, initial text={}]
      \tikzstyle{every state}=[cnode]
      \tikzstyle{every edge}=[carrow]
      \node[state, initial] (0) {\phantom{$q_0$}};
      \node[state, above right of=0, yshift=-0.5cm] (1) {\phantom{$q_0$}};
      \node[state, right of=1] (2) {\phantom{$q_0$}};
      \node[state, below right of=0, yshift=.5cm ] (3) {\phantom{$q_0$}};
      \path
      (0) edge node[above] {$\ain[A][B][][m]$} (1)
      (0) edge node[below] {$\ain[A][B][][m]$} (3)
      (1) edge node[above] {$\ain[A][B][][n]$} (2)
      ;
    \end{tikzpicture}
	 \qquad
	 \begin{tikzpicture}[node distance=1.3cm, initial text={}]
      \tikzstyle{every state}=[cnode]
      \tikzstyle{every edge}=[carrow]
      \node[state, initial] (0) {\phantom{$q_0$}};
      \node[state, above right of=0, yshift=-0.5cm] (1) {\phantom{$q_0$}};
      \node[state, right of=1] (2) {\phantom{$q_0$}};
      \node[state, right of=2] (3) {\phantom{$q_0$}};
      \node[state, below right of=0, yshift=.5cm ] (4) {\phantom{$q_0$}};
      \path
      (0) edge node[above] {$\emptyword$} (1)
      (0) edge node[below] {$\ain[A][B][][m]$} (4)
      (1) edge node[above] {$\ain[A][B][][m]$} (2)
      (2) edge node[above] {$\ain[A][B][][n]$} (3)
      ;
    \end{tikzpicture}
  \]
  also reaches a deadlock configuration.

The FCL formalism, by abstracting from the notion of state, is in fact
intrinsically deterministic.
\end{remark}

\renewcommand{\aCS}{S}

%%% Local Variables:
%%% mode: latex
%%% TeX-master: "main"
%%% End:

\section{Choreography Automata and Choreography Languages}\label{sec:chor-automata}
We advocated \emph{choreography automata}
(c-automata)~\cite{blt20} as an expressive and flexible model of
global specifications.
Essentially c-automata are finite-state automata (FSAs) whose
transitions are labelled with interactions.
This yields an immediate connection between g-languages and
c-automata in terms of the languages the latter accept.
This section explores such connection, as well as the connection
between the projection operation on c-automata in~\cite{blt20} and the
projection on g-languages.

\subsection{The g-languages of c-automata}
\emph{Choreography automata} (c-automata for short, ranged over by
$\chora$, $\chora[B]$, etc.) are defined in~\cite{blt20} as
deterministic FSAs with labels in the set $\alfint$ of interactions.
%% A \emph{choreography automaton} (c-automaton for short) is defined
%% in~\cite{blt20} as a deterministic FSA on the alphabet $\alfint$.
%% C-automata (ranged over by $\chora$, $\chora[B]$, etc.) are FSAs with
%% labels in the set $\alfint$ of interactions.
%
(Observe that the set of participants occurring in a c-automaton is
necessarily finite.)
We can see c-automata as a tool for specifying g-languages
as shown by the next proposition.
\begin{proposition}\label{fac:pcc}
  Given a c-automaton $\chora$, $\mlang(\chora)$ is a continuous
  g-language.
\end{proposition}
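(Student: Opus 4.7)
The plan is to verify the three requirements in turn: prefix-closure of $\mlang(\chora)$, finiteness of $\ptpof[\mlang(\chora)]$, and continuity in the sense of \cref{def:continuity}. Prefix-closure and finiteness of participants should be routine, so the main work goes into continuity.

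First I would establish prefix-closure. Given a word $\aword \in \mlang(\chora)$ (finite or infinite) and a finite prefix $\aword' \preceq \aword$, there is a run of $\chora$ consuming $\aword$; truncating this run after $\aword'$ lands in some reachable state, and by \cref{rmk:fsa} every state of $\chora$ is accepting, so $\aword' \in \mlang(\chora)$. Finiteness of $\ptpof[\mlang(\chora)]$ is immediate: $\chora$ has finitely many transitions, each labelled by an element of $\alfint$ that mentions exactly two participants, so only finitely many participants appear in any label of $\chora$, and hence in any word of $\mlang(\chora)$.

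For continuity, fix $z \in \alfint^\omega$ such that $\pref[z] \cap \mlang(\chora)$ is infinite; the goal is $z \in \mlang(\chora)$. Since $\chora$ is deterministic and prefixes of $z$ are accepted, each such finite prefix $w_n \preceq z$ induces a unique run $q_0 \arro{} \cdots \arro{} q_n$ of $\chora$ on $w_n$, and these runs are consistent in $n$. Concatenating them yields a single infinite run $q_0 \arro{} q_1 \arro{} q_2 \arro{} \cdots$ of $\chora$ on $z$ (the existence of every individual transition is given by the accepted finite prefixes). Because $\chora$ has finitely many states and again every state is accepting by \cref{rmk:fsa}, this run trivially visits the set of accepting states infinitely often, so $z$ is accepted by $\chora$ viewed as a Büchi automaton, giving $z \in \mlang(\chora)$.

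The only subtle point is to make sure that the finite runs on the prefixes really glue into a single infinite run on $z$, which is where determinism of c-automata is essential; without it one would need a König-style argument on an accepting tree, but determinism collapses the choice and the construction becomes a direct limit. Once this is in place, both the g-language conditions and continuity follow uniformly from the convention that every state of $\chora$ is accepting.
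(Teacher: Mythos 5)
Your proof is correct and follows essentially the same route as the paper's: prefix-closure from the fact that every state is accepting, and continuity by using determinism to glue the runs on the finite prefixes into a single infinite run, which is B\"uchi-accepting because all states are accepting. The only addition is your explicit check that $\ptpof[\mlang(\chora)]$ is finite, which the paper handles in a one-line remark preceding the proposition.
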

\begin{proof}
  To show that $\mlang(\chora)$ is a g-language we need to show prefix
  closure. It follows from the fact that all the states are accepting:
  if a word $w$ is in $\mlang(\chora)$ then any prefix of $w$ can be
  generated by taking the corresponding prefix of the computation
  generating $w$.

  We now need to show that the language is continuous.
  We need to show that $\mlang(\chora)$ contains an
  infinite word if it contains all its finite prefixes. Note that,
  thanks to determinism, words which are prefixes one of the other are
  generated by computations which are prefixes one of the other as
  well. Let $w$ be an infinite word whose prefixes are in
  $\mlang(\chora)$. The infinite run obtained as the limit of the runs
  generating the prefixes of $w$ generates $w$. This concludes the
  proof.
\end{proof}

Interestingly, c-automata have an immediate relation with
CFSMs~\cite{bz83}, introduced in the previous section, which can be
adopted to model the local behaviour of distributed components.
Indeed, the local behaviour of a participant of a c-automaton $\chora$
can be algorithmically obtained directly from $\chora$.
Formally\footnote{%
  Overloading the projection operator of \cref{def:projection} does
  not introduce confusion and avoids the introduction of further
  notation.
}
\begin{definition}[Projection of c-automata~\cite{blt20}]\label{def:projca}
  Let $\chora = \conf{\sset, q_0, \alfint, \tset}$ be a c-automaton
  and \p\ be a participant.
  The \emph{projection of $\chora$ on \p} is the CFSM
  $\proj \chora \p$ obtained by determinising up-to-language
  equivalence the \emph{intermediate} automaton
  \[
	 \intaut{}{\p} = \conf{ \sset, q_0, \alfact \cup \Set{\emptyword}, \Set{q
		  \arro{\proj{\aint[l]} \p} q' \sst q \arro{\aint[l]} q'} }
  \]
  The \emph{projection of $\chora$}, written $\proj \chora {}$, is the
  system of CFSMs $(\proj \chora \p)_{\p \in \ptpset}$.
\end{definition}

The l-language of a projection of a c-automaton $\chora$ coincides
with the projection of the language of $\chora$:
\begin{proposition}\label{prop:semcom}
  If $\chora$ is a c-automaton on $\ptpset$ then for all
  $\p[x] \in \ptpset$,
  $\proj{\mlang(\chora)} x = \mlang(\proj \chora x)$.
\end{proposition}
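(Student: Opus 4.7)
The plan is to split the equality into two inclusions after observing that determinisation up-to-language equivalence preserves the recognised language (both for the finite-word part and for the B\"uchi part, given that every state of $\chora$ and of $\intaut{}{x}$ is accepting). Hence it suffices to prove
\[
\proj{\mlang(\chora)}{x} \;=\; \mlang(\intaut{}{x}),
\]
and then conclude by $\mlang(\proj{\chora}{x}) = \mlang(\intaut{}{x})$.

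For the inclusion $\proj{\mlang(\chora)}{x} \subseteq \mlang(\intaut{}{x})$, I would take $u \in \proj{\mlang(\chora)}{x}$ and pick $w \in \mlang(\chora)$ with $u = \proj{w}{x}$. By definition of $\mlang(\chora)$ there is an accepting run $\rho$ of $\chora$ labelled by $w$. Because $\intaut{}{x}$ has exactly the same set of states and transition skeleton as $\chora$, with each label $\aint[l]$ replaced by $\proj{\aint[l]}{x}$, the same sequence $\rho$ is a run of $\intaut{}{x}$ whose trace is precisely the homomorphic projection $\proj{w}{x} = u$ (here I use \cref{def:projection} extended homomorphically, so that letters with subject different from $x$ become $\emptyword$). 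Since all states of $\intaut{}{x}$ are accepting, $\rho$ is accepting, showing $u \in \mlang(\intaut{}{x})$. The converse inclusion is symmetric: an accepting run of $\intaut{}{x}$ reading $u$ lifts to an accepting run of $\chora$ reading some $w$ with $\proj{w}{x}=u$, so $u \in \proj{\mlang(\chora)}{x}$.

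For the reduction from $\proj{\chora}{x}$ to $\intaut{}{x}$, I would invoke the standard fact that $\emptyword$-removal followed by subset-construction yields a deterministic automaton with the same language, both for finite-word acceptance and, since all states are accepting here, for the B\"uchi component; this is exactly the \squo{determinisation up-to-language equivalence} used in \cref{def:projca}. Combined with the equality displayed above, this gives $\mlang(\proj{\chora}{x}) = \proj{\mlang(\chora)}{x}$, as required.

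The only delicate point is the treatment of infinite words that project to finite ones (i.e.\ when $w \in \alfint^{\omega}$ contains only finitely many interactions involving $\p[x]$): here $u = \proj{w}{x}$ is a finite word, and one must verify that $u \in \mlang(\intaut{}{x})$ under the mixed finite/B\"uchi acceptance of \cref{rmk:fsa}. This is not really an obstacle because every state is accepting, so the finite prefix of $\rho$ that produces $u$ (after erasing $\emptyword$ steps) ends in an accepting state; symmetrically, a finite accepting run of $\intaut{}{x}$ on $u$ can be extended to an infinite accepting run of $\chora$ whose projection on $x$ is still $u$, by following any infinite continuation in $\chora$ that avoids interactions involving $\p[x]$ when one exists, or by stopping at the finite word $w$ with $\proj{w}{x}=u$ otherwise. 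This case analysis is the main technical care needed in the proof.
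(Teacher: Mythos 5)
Your proposal is correct and follows essentially the same route as the paper's proof: factor through the intermediate automaton $\intaut{}{x}$, observe that it shares the state/transition structure of $\chora$ with projected labels so that $\mlang(\intaut{}{x}) = \proj{\mlang(\chora)}{x}$, and then use that determinisation preserves the language. Your treatment is just more explicit, in particular about the (genuinely worth noting) case of infinite words whose projection on $\p[x]$ is finite, which the paper's two-line argument glosses over.
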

\begin{proof}
  By definition of projection and since determinisation preserves the
  language, we have
  $\mlang(\proj \chora x)=\mlang(\intaut{}{\p[x]})$, where
  $\intaut{}{\p[x]}$ is the intermediate automaton
  (cf.~\cref{def:projca}).
  Since $\intaut{}{\p[x]}$ is obtained by taking the projection of
  every transition in $\chora$,
  $\mlang(\intaut{}{\p[x]})=\proj{\mlang(\chora)} x$.
  The thesis follows by transitivity.
\end{proof}

Observing that $\proj \chora {}$ is $\emptyword$-free, the LTS of
$\proj \chora {}$ is a c-automaton and its language coincides with the
g-language of the system
$(\mlang(\proj \chora x))_{\ptp[x] \in \ptpset}$ thanks to
\cref{prop:eqsems}.

The notion of well-formedness we provided in~\cite{blt20} was aimed to
guarantee correctness and completeness as well as the communication
properties of projected systems (cf.~\cref{def:commpropaut}).
We discovered later on that our notion of well-formedness
in~\cite{blt20} was flawed.
In fact it does not correctly handle the interplay between concurrent
transitions and choices.
This is shown by the following example.
\begin{example}\label{ex:bad}
  The c-automaton $\chora$ below
%   since the only transitions with
% 	 disjoint participants are those from $q_0$ and they are concurrent.
  \[
	 \begin{tikzpicture}[node distance=2.5cm]
      \tikzstyle{every state}=[cnode]
      \tikzstyle{every edge}=[carrow]
      \node[state, initial, initial text={$\chora$}] (0) {$q_0$};
      \node[state, above right of=0, yshift=-1cm] (1) {$q_1$};
      \node[state, below right of=0, yshift=1cm] (2) {$q_2$};
      \node[state, below right of=1, yshift=1cm] (3) {$q_3$};
      \node[state, right of=1] (4) {$q_4$};
		\node[state, right of=2] (5) {$q_5$};
		\node[state, right of = 3] (6) {$q_6$};
      \path
      (0) edge node[above] {$\gint[]$} (1)
      (1) edge node[below, near start] {$\gint[][c][n][d]$} (3)
      (1) edge node[above] {$\gint[][c][r]$} (4)
      (0) edge node[below] {$\gint[][c][n][d]$} (2)
      (2) edge node[above, near start] {$\gint[]$} (3)
      (2) edge node[below] {$\gint[][c][r]$} (5)
      (3) edge node[below, near start] {$\gint[][c][r]$} (6)
      (4) edge node[above] {$\gint[][c][n][d]$} (6)
      (5) edge node[below] {$\gint[]$} (6)
      ;
    \end{tikzpicture}
  \]
  is well-formed according to~\cite[Def. 4.12]{blt20}.
  However, the system $\proj{\chora}{}$ admits the run
  $\gint[][c][r]\ \gint[][c][n][d]$ which is not a word accepted by
  $\chora$.
  \finex
\end{example}
In our setting, the problem of the above $\chora$ is that
$\mlang(\chora)$ is not CUI.
In fact, by setting $\acword_1= \gint[][a][m][b]$ and
$\acword_2= \gint[][c][n][d]$, we have that
$\acword_1\cat\gint[][c][r][b], \acword_2\cat\gint[][c][r][b]\in
\mlang(\chora)$.
Now, by taking $\acword = \emptyword$, we have that
$\proj{\acword_1}{c} = \proj{\acword}{c}$ and
$\proj{\acword_2}{b} = \proj{\acword}{b}$, but
$\acword\cat\gint[][c][r][b]=\gint[][c][r][b]\not\in\mlang(\chora)$.

%%% Local Variables:
%%% mode: latex
%%% TeX-master: "main"
%%% End:

\subsection{Deciding CUI and Branch-Awareness}\label{sec:decidability}
In this section we show that CUI and BA are decidable when
restricting to c-languages associated to c-automata.
The approach of the present paper can hence be used to prove the
communication properties considered in~\cite{blt20} (namely those in
\cref{def:commpropaut}) by showing the corresponding ones in the FCL
setting.

We begin by unveiling an interesting interplay between c-automata
and FCLs.
Languages accepted by c-automata are continuous closures of regular
languages, so making conditions like CUI and branch-awareness
decidable.

\begin{theorem}
  CUI is decidable on languages accepted by c-automata.
\end{theorem}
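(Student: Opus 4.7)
The plan is to reduce the CUI test for $\clang = \mlang(\chora)$ of a c-automaton $\chora = \conf{\sset, q_0, \alfint, \tset}$ to finite-state reachability on a product automaton. A first observation is that CUI violations need only be sought among finite words: if $\acword$ is infinite then $\acword \cat \aint = \acword$ by the convention on infinite words, and since $\acword \in \clang$ is given, the consequent of \cref{def:closedness} is trivially satisfied. Hence $\cuui$ fails iff there exist finite words $\acword_1, \acword_2, \acword \in \clang$ and an interaction $\aint = \gint$ such that $\acword_1 \cat \aint, \acword_2 \cat \aint \in \clang$ with $\proj{\acword_1}{\p} = \proj{\acword}{\p}$ and $\proj{\acword_2}{\q} = \proj{\acword}{\q}$, yet $\acword \cat \aint \notin \clang$. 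Since all states of $\chora$ are accepting and $\chora$ is deterministic, each finite word in $\clang$ reaches a unique state, and only the (finitely many) interactions $\aint$ labelling some transition of $\chora$ can be the culprit.

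For each such $\aint = \gint$, I would build a product automaton $P_\aint$ over the state space $\sset \times \sset \times \sset$, with initial state $(q_0, q_0, q_0)$, whose runs simulate three simultaneous runs of $\chora$ reading $\acword_1$, $\acword_2$, $\acword$ respectively, while enforcing the two projection constraints via an asynchronous-product scheme. Concretely, transitions are defined by cases on the participants of the interaction being fired: interactions involving neither $\p$ nor $\q$ may be fired independently by any single component; interactions involving $\p$ but not $\q$ force components $1$ and $3$ to synchronise on (possibly different) interactions with coinciding $\p$-projections while component $2$ may independently fire any non-$\q$ interaction; the case for $\q$ is symmetric; interactions involving both $\p$ and $\q$ force all three components to synchronise, with matching $\p$-projections between $1$ and $3$ and matching $\q$-projections between $2$ and $3$.

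A straightforward induction on the length of runs shows that $(q_1, q_2, q)$ is reachable in $P_\aint$ iff there exist finite words $\acword_1, \acword_2, \acword \in \clang$ driving $\chora$ from $q_0$ to $q_1, q_2, q$ respectively, with $\proj{\acword_1}{\p} = \proj{\acword}{\p}$ and $\proj{\acword_2}{\q} = \proj{\acword}{\q}$. Then $\cuui$ fails iff there exist some $\aint$ and some state $(q_1, q_2, q)$ reachable in $P_\aint$ such that $q_1$ and $q_2$ have outgoing $\aint$-transitions in $\chora$ while $q$ does not; this condition is decidable because $P_\aint$ has cubic size in $|\sset|$ and reachability plus the side-conditions on outgoing transitions are checkable in polynomial time. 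The main technical obstacle is the correctness of the product construction, specifically the non-trivial direction of the equivalence above: showing that any triple of runs with the required projection equalities can be re-ordered into an interleaving that is a valid run of $P_\aint$. This amounts to aligning interactions involving $\p$ (resp. $\q$) between the corresponding components while interspersing the independent steps in a consistent order, and is handled by a careful induction on the total length of the three runs.
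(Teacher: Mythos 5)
Your reduction is correct, but it is built differently from the paper's. The paper also turns $\neg\cuui$ into a finite search, yet it does so by reusing the projection CFSMs of \cref{def:projca}: since $\proj \chora a$ and $\proj \chora b$ arise by determinising the intermediate automata, their states are \emph{sets} of states of $\chora$, and the state $Q_{\p}$ reached by a string $u$ of actions collects exactly the states of $\chora$ reachable by interaction words whose $\p$-projection is $u$; the paper then searches for a state $q$ of $\chora$ with no outgoing $\aint$-transition lying in $Q_{\p} \cap Q_{\q}$ for reachable $Q_{\p}$, $Q_{\q}$ that contain states with outgoing $\aint$-transitions. You instead build a synchronised triple product of $\chora$ with itself, anchoring both projection constraints to the middle run. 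The two constructions buy slightly different things: the paper's reuses objects already defined and needs no interleaving argument, because the subset construction absorbs that work; yours stays polynomial in $|\sset|$ rather than potentially exponential, and it makes explicit that the three witnesses $\acword_1$, $\acword_2$, $\acword$ must be \emph{simultaneously} consistent (both projection equalities refer to the same $\acword$), a point that the paper's search over independently reachable $Q_{\p}$ and $Q_{\q}$ leaves implicit and that your product handles for free. Your preliminary observation that infinite $\acword$ can be discarded because $\acword \cat \aint = \acword$ is a detail the paper glosses over and is worth stating. The one piece you defer --- that any three runs satisfying the projection equalities can be interleaved into a run of the product --- does go through by the scheduling you sketch (flush the pending non-$\p$, respectively non-$\q$, steps of components $1$ and $2$ before each synchronised step dictated by component $3$; the equality of projections guarantees the $k$-th $\p$-interaction of $\acword_1$ and of $\acword$ project to the same action), so there is no gap. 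A cosmetic remark: your product does not actually depend on $\aint$, so a single automaton suffices, with the final test repeated for each of the finitely many transition labels.
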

  \newcommand{\apath}[4]{
	 {#2} \xRightarrow{#3}_{#1} {#4}
  }
\begin{proof}
  Let $\clang$ be a language accepted by a c-automaton $\chora$.
  We show that $\neg\cuui$ is decidable by reducing the problem to a
  search in the FSAs $\chora$, $\proj \chora a$, and $\proj \chora b$.

  By definition of CUI we have $\neg\cuui$ iff there are an
  interaction $\aint = \gint$ and finite words
  $\acword_1, \acword_2, \acword \in \clang$ such that
  \[
	 \acword\cat\aint\not\in\clang
	 \qquad
	 \proj \acword a = \proj{\acword_1} a
	 \qquad
	 \proj \acword b = \proj{\acword_1} b
	 \qqand[while]
	 \acword_1 \cat \aint, \acword_2 \cat \aint \in \clang
  \]
  This amounts to find a state $q$ of $\chora$ and two states\footnote{
	 Recall that by definition of projection the states of
	 $\proj \chora a$ and $\proj \chora b$ are subsets of the states of
	 $\chora$ due to the determinisation of the intermediate automata.
  } $Q_{\p}$ and $Q_{\q}$ respectively in
  $\proj \chora a$ and $\proj \chora b$ such that
  \[
	 q \not \arro \aint
	 \quad
	 q \in Q_{\p} \cap Q_{\q}
	 \qand[while there are]
	 q_{\p} \in Q_{\p}, q_{\q} \in Q_{\q} \text{ s.t. }
	 q_{\p} \arro \aint \text{ and } q_{\q} \arro \aint
  \]
  Since all the involved FSAs are finite an exhaustive search can determine
  if such states exist for each interaction.
\end{proof}

\begin{theorem}
  BA is decidable on languages accepted by c-automata.
\end{theorem}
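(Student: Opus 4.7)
The plan is to reduce the check of BA to a decidable emptiness problem on finite automata. By \cref{def:ba}, BA fails for $\mlang(\chora)$ iff there exist a participant $\px$ and maximal words $\acword_1, \acword_2 \in \mlang(\chora)$ such that $\proj{\acword_1}{X}$ is a strict prefix of $\proj{\acword_2}{X}$. I will exhibit, for each participant $\px$ of $\chora$, a finite computable structure whose emptiness is equivalent to the absence of such a witnessing pair for $\px$; then BA holds iff the test succeeds for every $\px \in \ptpset$.

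First I characterise maximality on $\mlang(\chora)$: since $\chora$ is finite, deterministic, and has all states accepting, a finite word $\acword \in \mlang(\chora)$ is maximal iff its unique run ends in a \emph{dead} state (no outgoing transitions), while an infinite word of $\mlang(\chora)$ is always maximal and necessarily traverses infinitely often some non-trivial strongly connected component (SCC) of $\chora$. Dead states and non-trivial SCCs are computable in polynomial time by standard graph algorithms.

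Next, for each $\px$, I build a product automaton $P_X$ whose states are triples $(q_1, q_2, b)$ with $q_1, q_2$ states of $\chora$ and $b \in \{0,1\}$ a flag. Starting from $(q_0, q_0, 0)$, phase $b=0$ lets the two components move independently on interactions not involving $\px$ and forces synchronisation (same $\px$-labelled action) on interactions that do involve $\px$; this enforces $\proj{\acword_1}{X} = \proj{\acword_2'}{X}$ where $\acword_2'$ is the prefix of $\acword_2$ consumed so far. The flag is set to $1$ exactly when the first component has arrived at a \emph{maximality witness} for $\acword_1$ and the second component performs a further $\px$-action, witnessing the strict prefix. In phase $1$ the second component continues alone. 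The BA check for $\px$ reduces to asking whether $P_X$ admits a run that sets the flag and can be completed in such a way that both components correspond to maximal words of $\mlang(\chora)$.

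The main obstacle is treating infinite maximal words on the same footing as finite ones, because the $\px$-projection of an infinite maximal word may itself be finite (if $\px$ is absent from the tail cycle). I plan to tackle this by equipping $P_X$ with a Büchi-style acceptance condition that distinguishes two sub-cases for each component: either it reaches a dead state of $\chora$, or it settles into an SCC of $\chora$ which, after the relevant phase, contains no $\px$-action (finite projection) or else is visited infinitely often (infinite projection). All the ingredients—dead states, SCCs, sub-SCCs avoiding $\px$-actions, and the synchronised product—are finite and computable, and emptiness of Büchi automata is decidable. Hence BA is decidable on languages accepted by c-automata.
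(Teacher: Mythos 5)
Your proposal is correct and takes essentially the same route as the paper's proof: both reduce the failure of BA to a finite search over $\chora$ by decomposing a witnessing pair of maximal words into a phase in which the two runs have identical projections on $\px$ (a product/intersection of the projected automata) followed by a divergence point after which one run continues without $\px$ to a maximal word while the other still performs a $\px$-action, with maximality handled via dead states and cycles. The only difference is packaging: the paper enumerates pairs of states $p,q$, checks an $\omega$-regular language intersection for the common-projection phase and two separate reachability conditions for the continuations, whereas you fold all of this into a single flagged product automaton with a B\"uchi acceptance condition --- an equivalent reduction to a decidable emptiness problem.
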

\begin{proof}
  Let $\clang$ be a language accepted by a c-automaton $\chora$ and
  let $\intaut{}\px$ be the intermediate automaton obtained as in
  \cref{def:projca} on a participant $\p[X]$ of $\chora$.
  Given a state $p$ of an FSA $A$, we let $[A]_p$ denote the FSA
  obtained by replacing the initial state of $A$ with $p$ and let
  $[A]^p$ denote the FSA obtained by making $p$ the only final state
  in $A$ (i.e., setting the final states of $A$ to the singleton
  $\Set p$).
  
  In order to get a decision procedure for $\ba$, we observe that the
  following equivalences do hold.
  \begin{itemize}
  \item[]
	 $\neg\ba$
  \item[]
    iff there are $\ptp[X] \in \ptpset$ and two maximal words
	 $\acword_1 \neq \acword_2 \in \clang$ such that
	 $\proj{\acword_1}{X}\prec\proj{\acword_2}{X}$
  \item[]
    iff there are $\px \in \ptpset$, two states $p \neq q$ of
	 $\chora$ such that
	 \begin{enumerate}[a)]
	 \item\label{labaa}
		$\mlang(\intaut{p}{\px}) \cap \mlang(\intaut{q}{\px}) \neq
		\emptyset$;
	 \item  
		\label{laba}
		there are two maximal words $\acword_1' \in \mlang([\chora]_p)$
		and $\acword_2' \in \mlang([\chora]_q)$ such that
		$\proj{\acword_1'}{X}=\emptyword$ and
		$\proj{\acword_2'}{X} \neq \emptyword$.
	 \end{enumerate}
  \end{itemize}
  Intuitively, the first equivalence above states
  that for a participant $\px$ it is not possible to distinguish when
  a computation halts with word $\acword_1$ or continues as
  $\acword_2$.
  Transferring this condition on automata yields the last equivalence,
  which basically requires that
  $\mlang(\intaut{p}{\px}) \cap \mlang(\intaut{q}{\px}) \neq
  \emptyset$.
  Then from its initial state $\chora$ has two paths from the initial
  state respectively reaching $p$ and $q$ with two words whose projection
  on $\px$ coincide.
  Therefore there are two runs from $p$ and $q$ respectively such that
  $\px$ does not occur in the run from $p$ while it occurs in the run
  from $q$, as required by \cref{laba}.
   
  We now notice that \cref{labaa} is decidable because the
  intersection of $\omega$-regular languages is computatable.
  In order to decide \cref{laba}, instead, we can proceed as follows.
  We perform a breath-first search on $\intaut{p}{\px}$ stopping
  either when a state does not have outgoing transitions or when it
  has been already visited.
  \Cref{laba} holds iff the resulting tree has a path from
  $p$ to a leaf made of $\emptyword$-transitions only.
  Finiteness of automata guarantees that this procedure terminates.
  Likewise we can check the existence of a word
  $\acword'_2\in \mlang([\chora]_q)$ such that
  $\proj{\acword'_2}{X} \neq \emptyword$ (as required by \cref{laba}).
\end{proof}

%%% Local Variables:
%%% mode: latex
%%% TeX-master: "main"
%%% End:

\section{Global Types as Choreographic Languages}\label{sec:mpst}
\newcommand{\Gvti}[5]{\ensuremath{#1\to#2:\{#3_i. #5_i \}_{1 \leq i \leq n}}}
\newcommand{\pp}{\p}
\newcommand{\G}{\ensuremath{{\sf G}}}
\newcommand{\tend}{\mathtt{end}}
\newcommand{\mkt}[3]{\ensuremath{#1\to#2:#3  }}
\newcommand{\setl}[3]{\{#1_{i}. #2_{i}\}_{1 \leq i \leq #3}}
\newcommand{\Labgt}{\Gamma}
\newcommand{\pq}{\q}
\newcommand{\pr}{\p[r]}
\newcommand{\ps}{\p[s]}
\newcommand{\inact}{\ensuremath{\mathbf{0}}}
\newcommand{\project}{\upharpoonright}
\newcommand{\projp}[2]{ (#1) \!\project_{#2}}
\newcommand{\projds}[2]{ #1 \!\project_{#2}}
\newcommand{\pout}[2]{#1 ! #2}
\newcommand{\setlp}[4]{\{#1_{i}. \projds{#2_{i}}{#4} \}_{1 \leq i \leq #3}}
\newcommand{\pin}[2]{#1 ? #2}
\newcommand{\mklab}[2]{#1 \uplus #2}
\newcommand{\Lab}{\Lambda}
\newcommand{\langc}[1]{\clang(#1)}
\newcommand{\langa}[1]{\alang(#1)}
\newcommand{\langap}[1]{\widehat\alang(#1)}
\newcommand{\langp}[1]{\widehat\clang(#1)}
\newcommand{\PP}{\mathit{P}}
\newcommand{\MM}{\mathcal{M}}
\renewcommand{\ell}{\msg}

The global types of~\cite{SeveriD19} are our last case study.
Informally, a global type $\Gvti \pp \pq \ell \S {\G} $ specifies a
protocol where participant $\pp$ must send to $\pq$ a message $\ell_i$
for some $1 \leq i \leq n$ and then, depending on which $\ell_i$ was
chosen by $\pp$, the protocol continues as $\G_i$.
Global types and multiparty sessions are defined in~\cite{SeveriD19}
in terms of the following grammars for, respectively, \emph{pre-global
  types}, \emph{pre-processes}, and \emph{pre-multiparty sessions} (we
adapted some of the notation to our setting):
\[
  \begin{array}{ccc}
    \begin{array}{lcl}
      \G & ::=^{\textrm{co}} & \tend
      \\
			& \bnfmid & \mkt \pp \pq  \setl {\ell}{\G} {n}
		  \\\ 
    \end{array}
		&
		  \begin{array}{lcl}
		    \PP & ::=^{\textrm{co}} & \mathbf{0}
		    \\
			& \bnfmid & \pp?\{\ell_i.\PP_i\}_{1\leq i\leq n}
		    \\[1mm]
			& \bnfmid &  \pp!\{\ell_i.\PP_i\}_{1\leq i\leq n}
		  \end{array}
		&
	   \begin{array}{lcl}
	     \MM  & ::= & \pp\,\triangleright\PP
	     \\
		  & \bnfmid & \MM \!\!\bnfmid\!\! \MM
			   \\\ 
	   \end{array}
  \end{array}
\]
where we assume messages $\ell_i$ to be pairwise distinct in sets
$\setl {\ell}{\G} {n}$ and $\{\ell_i.\PP_i\}_{1\leq i\leq n}$, to
whose elements we call \emph{branches}.
The first two grammars are interpreted coinductively, that is their
solutions are both minimal and maximal fixpoints (the latter
corresponding to infinite trees).
% and all messages $\ell_i$ are pairwise different. 
A pre-global type $\G$ (resp.~pre-process $\PP$)
is a \emph{global type} (resp.~\emph{process}) if its tree
representation is \emph{regular}, namely it has finitely many distinct
sub-trees.
A \emph{multiparty session} (MPS for short) is a pre-multiparty
session such that (a) in ${\pp}\,\triangleright{\PP}$, participant \p\
does not occur in process $\PP$ and (b) in
${\pp_1\,\triangleright}{\PP_1} \mid \ldots \mid
{\pp_n}\,\triangleright{\PP_n}$, participants $\pp_i$ are pairwise
different.

The semantics of global types is the LTS induced by
the following two rules:
\[\begin{array}{c}
  \\
  \prooftree
  \justifies
  \mkt \pp \pq \setl {\ell}{\G} {n}
  \xrightarrow{\pp\rightarrow\pq:\ell_i} \G_i
  \using \text{(GT1)}
  \endprooftree
  \\[2em]
  \prooftree
  \text{for each } 1\leq i\leq n \quad
  {\G_i} \xrightarrow{\pp\rightarrow\pq:\ell} \G'_i
  \qquad
  \Set{\pp,\pq}\cap\Set{\pr,\ps}=\emptyset
  \justifies
  \mkt \pr \ps \setl {\ell}{\G} {n}
  \xrightarrow{\pp\rightarrow\pq:\ell} \mkt \pr \ps \setl {\ell}{\G'}
  n
  \using \text{(GT2)}
	 \endprooftree
	 \\[2em]
  \end{array}
\]
Rule \text{(GT2)} allows out-of-order execution, namely interaction
$\gint[]$ between participants \p\ and \q\ can happen even if an
interaction between other participants is syntactically occurring
before, provided that $\gint[]$ occurs in all branches.
As usual, we let
$\G \xrightarrow{\aint_0\cat \cdots \cat \aint_n} \G'$ to denote
$\G \xrightarrow{\aint_0} \G_1 \cdots \G_{n}
\xrightarrow{\aint_n} \G'$ and always assume that
$\G \xrightarrow \emptyword \G$ holds.

The semantics for MPSs is the LTS defined by the following rule
\begin{align}\label{eq:mpssem}
	 \pp\triangleright\pq!(\{\ell.\PP\}\uplus\Lambda)
	 \bnfmid \pq\triangleright\pp?(\{\ell.\PP'\}\uplus\Lambda') \bnfmid \MM
	 \quad \xrightarrow{\pp\rightarrow\pq:\ell} \quad
	 \pp\triangleright\PP \bnfmid \pq\triangleright\PP' \bnfmid \MM
\end{align}
where
$\{\ell_i: \PP_i\}_{i \in I} \uplus \{\ell_j': \PP'_j\}_{j \in J} =
\{\ell_i: \PP_i\}_{i \in I} \cup \{\ell_j': \PP'_j\}_{j \in J}$ if
$\{\ell_i\}_{i\in I} \cap \{\ell'_j\}_{j \in J} = \emptyset$,
otherwise
$\{\ell_i: \PP_i\}_{i \in I} \uplus \{\ell_j': \PP'_j\}_{j \in J}$ is undefined.

Rule~\eqref{eq:mpssem} applies only if the messages in $\Lambda'$
include those in $\Lambda$, which is the case for MPSs obtained by
projection, defined below.
\begin{definition}[Projection~{\cite[Definition 3.4]{SeveriD19}}]\label{definition:projection}
  The \emph{projection} of $\G$ on a participant \p[x] such that the
  depths of its occurrences in $\G$ are bounded is the partial
  function $\projds{\G}{\p[x]}$ coinductively defined by
  $ \projds \tend {\p[x]} = \inact$ and, for a global type
  $\G = \Gvti \pp\pq \ell \S {\G}$, by:
	 \[
		  \projds \G {\p[x]} = \begin{cases} \inact & \text{if $\ptp[x]$ is
		  not a participant of } \G
		  \\
		  \pout \pq \setlp{\ell} {\G}{n}{\p[x]} & \text{if } \p[x] = \p
		  \\
		  \pin\p \setlp{\ell}{\G}{n}{\p[x]} & \text{if } \p[x] = \q
		  \\
		  \projds{\G_1}{\p[x]} & \text{if } \p[x] \not\in \{\p, \q \}
		  \text{ and } n=1
		  \\
		  \pin \ps (\mklab {\Lab_1} {\mklab{\ldots}{\Lab_n}} ) &
		  \text{if } \p[x] \not\in \{\p, \q \}, \ n>1, \text{ and }
		  \text{for all } 1 \leq i \leq n, \projds {\G_i}{\p[x]} = \pin \ps
		  {\Lab_i}
		\end{cases}
  \]
  The global type $\G$ is \emph{projectable}\footnote{%
	 In~\cite{SeveriD19}, projectability embeds well-branchedness.
  } if $\projds \G{\p[x]}$ is defined for all participants $\p[x]$ of
  $\G$, in which case $\projds \G {}$ denotes the corresponding MPS.
\end{definition}
The projection on $\p[x]$ is partial because if we have a choice where
$\p[x]$ is not involved in the first interaction then only the last
clause in \cref{definition:projection} could apply, but this clause
requires $\p[x]$ to start each branch with an input action from the
same sender $\p[s]$.

Let $\langp{\G}$ be the language coinductively defined as follows:
\[
	 \langp{\tend} = \Set{\emptyword}
	 \qqand
	 \langp{\mkt \pp \pq  \setl {\ell}{\G} {n}} =
	 \bigcup_{1\leq i\leq n}\Set{\gint[][\pp][\ell_\mathit{\color{black}i}][\pq] \cat \acword \sst \acword \in \langp{\G_i}}
\]
The g-language $\clang(\G)$ associated to a global type $\G$ is the concurrency and
prefix closure of $\langp \G$, that is
\[
  \langc \G = \pref[\Set{\acword \in \alfint^\infty \sst \text{ there is }
	 \acword' \in \langp \G \text{ such that } \acword \comm \acword'}]
\]

We define the l-language $\langa{\pq \triangleright \PP}$ associated
to a named process $\pq \triangleright \PP$ as the prefix closure of
$\langap{\pq \triangleright \PP}$ which, letting
$\star \in \Set{?,!}$, is defined by
\[\begin{array}{l}
  \langap{\pq \triangleright\mathbf{0}} = \Set{\emptyword} \qand
  \langap{\pq \triangleright\pp\!\star\{\ell_i.\PP_i\}_{1\leq i\leq n}} =
  \bigcup_{1\leq i\leq n}\Set{
  \text{\small $\pp\pq\!\star\ell_\mathit{\color{black}i}$} \cat \acword \sst \acword \in \langap{\PP_i}
  }
\end{array}
\]

The system associated to an MPS is defined as the following map:
\[
  S({\pp_1\,\triangleright}{\PP_1} \mid \ldots\mid {\pp_n}\,\triangleright{\PP_n}) =
  \Set{\pp_i \mapsto \langa{\pp_i\,\triangleright\PP_i} \sst 1 \leq i \leq n}
\]

Our constructions capture relevant properties of the
global types in~\cite{SeveriD19}.

We establish correspondences between the two frameworks in
\cref{prop:sda,prop:sdb}.

\begin{restatable}{proposition}{severia}\label{prop:sda}
  Given a projectable global type $\G$,  
  $\langc \G = \Set{\acword \sst \G \xrightarrow{\acword}}$.
\end{restatable}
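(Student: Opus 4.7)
The plan is to prove the equality by mutual inclusion, with the key ingredient being a \emph{Mazurkiewicz-invariance lemma} for the LTS of projectable global types: if $\G \xrightarrow{\acword_1}$ and $\acword_1 \comm \acword_2$, then $\G \xrightarrow{\acword_2}$. It suffices to establish this for a single adjacent swap of independent interactions and then close under transitivity.

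For $\subseteq$: if $\acword \in \langc\G$ is finite then $\acword$ is a prefix of some $\hat\acword$ with $\hat\acword \comm \tilde\acword$ for a $\tilde\acword \in \langp\G$. A straightforward coinduction on the structure of $\G$ shows $\G \xrightarrow{\tilde\acword}$ using only rule \text{(GT1)}. The Mazurkiewicz-invariance lemma then yields $\G \xrightarrow{\hat\acword}$, and prefix-closure of the transition relation gives $\G \xrightarrow{\acword}$.

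For $\supseteq$: I proceed by induction on the length of a finite trace to show that $\G \xrightarrow{\acword}$ implies $\acword \in \langc\G$. The base case $\acword = \emptyword$ is immediate. Otherwise $\acword = \aint \cat \acword'$ with $\G \xrightarrow{\aint} \G' \xrightarrow{\acword'}$; projectability is preserved by transitions, so the induction hypothesis gives $\acword' \in \langc{\G'}$. If $\aint$ was fired by rule \text{(GT1)}, say $\G = \mkt{\pp}{\pq}{\setl{\ell}{\G}{n}}$ with $\G' = \G_i$ and $\aint$ the corresponding labelled interaction, then any $\tilde\acword \in \langp{\G_i}$ witnessing $\acword' \in \langc{\G'}$ yields $\aint \cat \tilde\acword \in \langp\G$, whence $\aint \cat \acword' \in \langc\G$. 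If \text{(GT2)} was used, Mazurkiewicz-invariance permits commuting $\aint$ with the initial \text{(GT1)}-step of some linear trace of $\G$, reducing to the previous case. Infinite traces are handled by a standard limit argument exploiting prefix-closure of both sides and the continuity of $\langc\G$, which follows from its definition as a prefix closure of a Mazurkiewicz closure (cf.~\cref{def:cclos}).

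The main obstacle is the Mazurkiewicz-invariance lemma itself, which relies crucially on projectability. When $\aint$ and $\aint'$ are independent consecutive interactions in a derivation, commuting them typically requires applying rule \text{(GT2)} to fire $\aint'$ first; this is allowed only if $\aint'$ can be fired in every branch of the surrounding choice. The projectability condition (\cref{definition:projection}) forces participants of $\aint'$ not involved in that choice to exhibit uniform behaviour across branches, which is precisely the semantic content that validates \text{(GT2)}'s premise. A careful case analysis on whether \text{(GT1)} or \text{(GT2)} was used for each of $\aint$ and $\aint'$, together with closedness of projectability under transitions, completes the proof of the lemma.
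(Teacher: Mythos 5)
Your proof takes essentially the same route as the paper's: $\langp{\G}\subseteq\Set{\acword\sst\G\xrightarrow{\acword}}$ by a (GT1)-only coinduction, with concurrency and prefix closure discharged through (GT2) for the $\subseteq$ direction, and a case split on whether the first step is (GT1) or (GT2) for $\supseteq$. The only real difference is that you isolate closure of the transition relation under $\comm$ as an explicit lemma, where the paper contents itself with ``(possibly repeated) application of rule (GT2)''; making this explicit is an improvement, since that is exactly where the work hides. (The paper uses coinduction where you use induction on length plus a limit step; note also that continuity of $\langc{\G}$ does not follow purely ``from its definition'' --- it needs a K\"onig-style argument exploiting regularity of $\G$.)

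One substantive caveat concerns your justification of the invariance lemma. You argue that projectability ``validates (GT2)'s premise'' when an independent interaction must be commuted past a choice. For a genuine choice ($n>1$) the mechanism is actually the reverse: if some branch $\G_i$ of a choice between $\p$ and $\q$ could fire $\aint'=\gint[][c][x][d]$ with $\p[c],\p[d]\notin\Set{\p,\q}$, then an induction on that derivation shows that $\projds{\G_i}{\p[c]}$ must begin with an output to $\p[d]$, whereas the last clause of \cref{definition:projection} demands that $\projds{\G_i}{\p[c]}$ be an input from one fixed sender in \emph{every} branch --- so $\G$ would not be projectable. Projectability thus does not make the cross-branch premise of (GT2) true in this situation; it makes the situation impossible, and (GT2)'s premise only ever has to be discharged across unary choices, where it is trivial. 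Your case analysis would still close, but along these lines rather than the ones you sketch.
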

\begin{proof}
  We show the two inclusions.

 \proofcase{$\subseteq$}
	 We begin by showing
	 $\langp \G \subseteq \Set{\acword \sst \G \xrightarrow{\acword}}$.
	 If $\G = \tend$ then the thesis trivially follows since
	 $\G \xrightarrow{\emptyword}$.
	 If $\G = \mkt \pp \pq \setl {\ell}{\G} {n}$ then each word in
	 $\langp \G$ begins with an interaction $\gint[][\pp][\ell_\mathit{\color{black}i}][\pq]$
	 with $1 \leq i \leq n$.
	 By the reduction semantics of global types,
	 $\G \xrightarrow{\gint[][\pp][\ell_\mathit{\color{black}i}][\pq]} \G_i$.
	 By coinduction~\cite{KozenS17} we obtain the thesis.
	 Each word in $\langc \G$ is a prefix of a concurrency equivalent
	 word $\acword \in \langp \G$.
	 Hence we obtain the thesis by observing that
	 $\G \xrightarrow{\acword}$ from the previous argument and by
	 (possibly repeated) application of rule~(GT2) of the semantics of
	 global types.
  \proofcase{$\supseteq$}
	 If $\G \xrightarrow{w}$ then, for some $\G'$ we have
	 $\G \xrightarrow{\aint} \G' \xrightarrow{\acword'}$ where
	 $\acword = \aint \cat \acword'$.
	 If $\G \xrightarrow{\aint} \G'$ is an instance of the first rule of
	 the operational semantics of global types then the thesis follows by
	 coinduction.
	 Otherwise, assuming $\G = \mkt \pr \ps \setl {\ell}{\G} {n}$, for
	 each $1\leq i\leq n$ we have (i) $\G_i \xrightarrow{\aint} \G'_i$,
	 (ii) $\ptpof[\aint] \cap \Set{\pr,\ps}=\emptyset$, and (iii)
	 $\G' = \mkt \pr \ps \setl \ell {\G'} n \xrightarrow{\acword'}$.
	 By coinduction, $\acword' \in \langc{\G'}$ hence
	 $\acword = \aint \cat \acword' \in \langc \G$.
%  \end{description}
\end{proof}

We now relate projectability (cf.~\cref{definition:projection}) and
our properties.
\begin{restatable}{proposition}{wfba}\label{prop:cuiba}
  If $\G$ is a projectable global type then $\langc{\G}$ is a CUI and
  branch-aware \sclang.
\end{restatable}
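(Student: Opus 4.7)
The proof is naturally organised in three parts, corresponding to the three claims: that $\langc \G$ is an \sclang, that it is CUI, and that it is branch-aware. The plan is to work structurally/coinductively on $\G$ and exploit the projectability conditions of \cref{definition:projection}, invoking \cref{prop:sda} to convert word membership in $\langc \G$ into derivations $\G \xrightarrow{\acword} \G'$ whenever convenient.

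For the \sclang property, I would observe that if $\G$ is \quo{finite} (its regular tree has no infinite path) then $\langp \G \subseteq \alfint^\star$ and hence so is $\langc \G$ after prefix and concurrency closure; in the other case I would prove continuity directly, using that $\G$'s regular tree is finitely branching, so that any infinite word whose finite prefixes all lie in $\langc \G$ is realised by an infinite run of $\G$ (obtained e.g.\ by K\"onig's lemma on a limit of derivations compatible with the concurrency equivalence) and is therefore in $\langp \G \subseteq \langc \G$.

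The main obstacle is CUI, which I would prove by coinduction on $\G$. Given $\acword_1 \cat \aint, \acword_2 \cat \aint, \acword \in \langc \G$ with $\aint = \gint$ and matching \p- and \q-projections, \cref{prop:sda} produces derivations $\G \xrightarrow{\acword_1} \G_1 \xrightarrow{\aint} $, $\G \xrightarrow{\acword_2} \G_2 \xrightarrow{\aint}$, and $\G \xrightarrow{\acword} \G'$, and the goal reduces to $\G' \xrightarrow{\aint}$. I would proceed by case analysis on $\G = \mkt{\pp}{\pq}{\setl \ell \G n}$: if $\Set{\pp,\pq} \cap \Set{\p,\q} \neq \emptyset$, projectability forces the choice at the head of $\G$ to be the one seen by \p\ or \q, so the matching projections pin down the same branch in the three derivations and we recurse; if $\Set{\pp,\pq} \cap \Set{\p,\q} = \emptyset$, rule (GT2) has been used and each $\G_i$ offers $\aint$, hence by coinduction we close CUI in every branch and re-apply (GT2). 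Concurrency closure is preserved at every step thanks to \cref{lemma:equalProj}, which ensures the matching-projection hypotheses transport through permutations of independent interactions.

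For branch-awareness I would argue, given two maximal words $\acword_1, \acword_2 \in \langc \G$ with $\proj{\acword_1}{X} \neq \proj{\acword_2}{X}$, that their longest common \emph{projection prefix} on $\px$ corresponds (up to concurrency) to a point in some $\G'$ reached by both derivations where the choice in $\G' = \mkt{\pp}{\pq}{\setl \ell \G n}$ is taken differently. Projectability then forces one of the following at that choice: $\px = \pp$ (differing outputs), $\px = \pq$ (differing labels in the inputs), or $\px$ is a bystander whose projection on every branch is a reception $\pin \ps \Lambda_i$ from a fixed sender $\ps$, so that the projections of $\acword_1$ and $\acword_2$ on $\px$ eventually diverge on some message from $\ps$ (the depth-boundedness condition of \cref{definition:projection} guarantees such a divergence actually occurs). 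In each case $\proj{\acword_1}{X}$ and $\proj{\acword_2}{X}$ are incomparable, as required by \cref{def:ba}, which together with the CUI proof completes the statement.
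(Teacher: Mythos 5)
Your split into the \sclang, branch-awareness and CUI claims matches the paper's, and the first two parts are sound and essentially the paper's own arguments (the paper settles continuity with a one-line ``by construction'', and proves branch-awareness by locating the first choice node of $\G$ at which two maximal words of $\langp{\G}$ pick different labels and showing that no clause of \cref{definition:projection} can apply there --- your three-way case split is the same argument). The gap is in CUI. Your coinductive step rests on two claims that fail. First, when the head choice $\mkt{\pr}{\ps}{\setl{\ell}{\G}{n}}$ involves one of $\p,\q$ you assert that the matching projections ``pin down the same branch in the three derivations''. They pin down the branch only for the runs constrained by the participant that occurs in the head: if $\p=\pr$ but $\q\notin\Set{\pr,\ps}$, then $\acword$ and $\acword_1$ take the same branch, but the hypothesis $\proj{\acword_2}{\q}=\proj{\acword}{\q}$ says nothing about the label chosen at the head in the run of $\acword_2\cat\aint$, which may traverse a different branch (e.g.\ $\G=\mkt{\p}{\p[c]}{\Set{\ell_1.\G',\,\ell_2.\G'}}$ with $\q$ occurring only in $\G'$, $\acword=\acword_1=\p\rightarrow\p[c]:\ell_1$ and $\acword_2=\p\rightarrow\p[c]:\ell_2$). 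The coinduction hypothesis is then unavailable: you only know that $\acword_2\cat\aint$ is (a residue of) a word of the branch $\acword_2$ traverses, not of the one you recurse into. Second, when $\Set{\pr,\ps}\cap\Set{\p,\q}=\emptyset$ you claim that $\aint$ ``has been fired by (GT2)'', hence that every $\G_i$ offers $\aint$. This is unjustified: the run of $\acword_1\cat\aint$ may resolve the head by (GT1) inside $\acword_1$ and only then fire $\aint$ within one branch. The projectable type
\[
  \mkt{\p[c]}{\p[d]}{\SET{\ \ell_1.\,\mkt{\p[d]}{\p}{\Set{\msg[x].\,\mkt{\p}{\q}{\Set{\msg.\tend}}}},\ \ \ell_2.\,\mkt{\p[d]}{\p}{\Set{\msg[y].\,\mkt{\p}{\q}{\Set{\msg[n].\tend}}}}\ }}
\]
has a branch that never offers $\gint$.

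What is missing is precisely the ingredient the paper's proof turns on: the \emph{merge} $\pin{\ps}{(\mklab{\Lab_1}{\mklab{\cdots}{\Lab_n}})}$ in the last clause of \cref{definition:projection}. It is the definedness of this merge --- not the mere visibility of the choice --- that reconciles runs of $\acword$, $\acword_1$ and $\acword_2$ traversing \emph{different} branches of the same choice while agreeing on their $\p$- and $\q$-projections: when two branches $\G_u,\G_v$ both let the non-deciding participant receive the label of $\aint$ after matching histories, the definedness of $\Lab_u\uplus\Lab_v$ forces the corresponding continuations to coincide, and this is what licenses appending $\aint$ after $\acword$. Since your argument never invokes the merge, the coinduction cannot be closed in the mixed cases above; repairing it amounts to redoing the paper's case analysis on whether $\acword$ meets $\Set{\p,\q}$ and, when the traversed branches disagree, arguing through the merged projection of the non-deciding participant.
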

\begin{proof}
  Let $\G = \Gvti \pp \pq \ell \S \G$ be a global type projectable on
  $\p[x]$.
  By construction, $\langc \G \subseteq \alfint^\star$ or $\langc \G$
  is continuous.

  To show that $\langc \G$ is branch-aware we proceed by
  contradiction: assume that $\langc{\G}$ is not branch-aware.
  Then there exist two distinct maximal words in $\langc \G$, say
  $\acword = \aint_0 \cat \aint_1 \cdots$ and
  $\acword' = \aint'_0 \cat \aint'_1 \cdots$, such that
  $\acword \not\sim \acword'$ and, for a participant $\p[x]$, we have
  $\proj \acword x \prec \proj{\acword'} x$, that is
  $\proj \acword x$ is a proper prefix of $\proj{\acword'} x$.
  Note that $\p[x]$ must occur in $\G$ otherwise,
  $\proj \G x = \inact$.
  By definition of $\langc \G$, there are
  $\widehat \acword = \aint_0 \cat \aint_1 \cdots$ and
  $\widehat{\acword'} = \aint'_0 \cat \aint'_1 \cdots$ in $\langp \G$,
  such that $\acword \sim \widehat \acword$ and
  $\acword' \sim \widehat{\acword'}$; we have
  $\proj{\widehat \acword} x \prec \proj{\widehat \acword'} x$.
  
  By \cref{prop:sda}, there are two runs
  \[
	 \G \xrightarrow{\aint_0} \G_1 \xrightarrow{\aint_2} \cdots
	 \qqand
	 \G \xrightarrow{\aint'_0} \G'_1 \xrightarrow{\aint'_2} \cdots
  \]
  Let $h$ be the index such that $\aint_h \neq \aint'_h$ and
  $\aint_i = \aint'_i$ for all $0 \leq i < h$; note that $h$ must
  exist otherwise $\widehat \acword = \widehat \acword'$,
  contradicting $\acword \not\sim \acword'$.
  Necessarily, $\p[x]$ does not occur in $\aint_h$ otherwise
  $\proj{(\aint_0\cat \cdots \cat \aint_h)} x \not\prec
  \proj{(\aint'_0\cat \cdots \cat \aint'_h)} x$ contradicting
  $\proj{\widehat \acword} x \prec \proj{\widehat \acword'} x$.
  
  Take the last index, say $n$, of the interaction in
  $\widehat \acword$ involving $\p[x]$, which necessarily exists
  otherwise $\proj{\widehat \acword} x$ cannot be a proper prefix of
  $\proj{\widehat \acword'} x$.
  This contradicts the projectability of $\G$ since its subtree $\G_h$
  has two branches initiating with $\aint_h$ and $\aint'_h$ on which
  the projections for $\p[x]$ differ.
  Hence, none of the cases of \cref{definition:projection} can be
  applied.

  To show CUI we have to prove that if there are words
  $\acword_1 \cat \gint, \acword_2 \cat \gint,\acword \in \langc{\G}$
  such that $\gproj[\acword_1][\p] = \gproj[\acword][\p]$ and
  $\gproj[\acword_2][\q] = \gproj[\acword][\q]$ then
  $\acword\cat\gint\in \langc{\G}$.
  
  By definition of $\langc{\G}$, from
  $\acword_1 \cat \gint, \acword_2 \cat \gint,\acword\in \langc{\G}$
  we can infer that there exist
  $\hat \acword_1, \hat\acword_2,\hat\acword\in \langp{\G}$ and
  $\acword'_1, \acword'_2,\acword'\in \langc{\G}$ such that
  \begin{itemize}
  \item
  $\hat \acword_1 \sim \acword'_1$\qquad  $\hat\acword_2\sim \acword'_2$ \qquad
  $\hat\acword\sim\acword'$;
  \item
  $\acword'_i = \acword_i\cat\gint\cat\acword''_i$ \quad with $\ptpof[\acword''_i]\cap\Set{\p,\q}=\emptyset$\qquad ($i\in\Set{1,2}$);
   \item
  $\acword' = \acword\cat\acword''$ \quad with $\ptpof[\acword'']\cap\Set{\p,\q}=\emptyset$.
  \end{itemize}
  Let us now consider the following cases.
 % \begin{description}
 % \item[\mbox{$\ptpof[\acword']\cap\Set{\p,\q}=\emptyset$}]\mbox{}\\
    \proofcase{\mbox{$\ptpof[\acword']\cap\Set{\p,\q}=\emptyset$}}\mbox{}
    From the hypothesis $\acword'_1$ contains exactly one interaction
	 involving $\p$ and such an interaction is $\gint$.
	 We proceed now by coinduction on the projection of
	 $\G=\Gvti \pr\ps \ell \S {\G}$ on $\p$
	 (cf.~\cref{definition:projection}).
	 Since the first action of $\p$ is an output we necessarily have
	 a sequence of co-inductive cases where we apply the 4-th case
	 of \cref{definition:projection} followed by a case where
	 the 2-nd case 	 of \cref{definition:projection} is applied.
	 Hence, $\G$ has no branch before the interaction $\gint$ which
	 must occur in all the maximal words, as desired.
  \proofcase{\mbox{$\ptpof[\acword']\cap\Set{\p,\q}\neq\emptyset$}}\mbox{}
    Let $\gint[][c][x][d]$ be the first interaction in $\acword'$
	 (hence also in $\hat\acword$)
	 such that $\Set{\p[c],\p[d]} \cap \Set{\p,\q} \neq \emptyset$.
	 With no loss of generality, assume $\p = \p[c]$; the other cases
	 are analogous and therefore omitted.
	  
	 By $\gproj[\acword_1][\p] = \gproj[\acword][\p]$, the interaction
	 $\gint[][a][x][d]$ is present also in $\acword'_1$ and it is the
	 first one having $\p$ as participant.  Moreover, the same holds
	 for $\hat\acword_1$.  Let us proceed by coinduction on the
	 projection of
	 $\G=\ptp[S]\rightarrow\ptp[R]:\Set{\msg[m]_i.\G_i}_{1\leq i\leq
		n}$ on $\p$.  The case $\pr=\p$ follows by coinduction since
	 both $\hat \acword$ and $\hat \acword_1$ are generated from the
	 same branch since $\proj{\acword}{A}=\proj{\acword_1}{A}$. The
	 case $\p \not\in \{\pr, \ps\}$ and $n=1$ follows by coinduction.

  Let us consider the case
  $\Set{\ptp[S],\ptp[R]}\cap\Set{\ptp[A],\ptp[B]}=\emptyset$, and
  assume that $\hat\acword_1$ and $\hat\acword$ begin, respectively with
  $\ptp[S]\rightarrow\ptp[R]:\msg[m]_u$ and 
  $\ptp[S]\rightarrow\ptp[R]:\msg[m]_v$ with $u\neq v$ (the case $u=v$ is simpler).
  Let now $\projds {\G_u}{\p[A]} = \ptp[D] ? {\Lab_u}$,
  and $\projds {\G_v}{\p[A]} =  \ptp[D] ? {\Lab_v}$ (which are defined by projectability
  of $\G$).
  By what said previously, we have that
  ${\Lab_u} = \Set{\msg[x].P' \uplus \ldots}$ and ${\Lab_u} = \Set{\msg[x].P'' \uplus \ldots}$.
  Now, since 
  $\projds \G {\p[A]} = \ptp[D] ?\Set{{\Lab_u} \uplus {\Lab_v}\uplus\ldots }$,
  we can infer that $P'=P''$. 
 This implies that $\acword\cat\gint\in \langc{\G}$.
%\end{description}
This concludes the proof.
\end{proof}

The proof of \cref{prop:sdb} below uses the following auxiliary lemma.
\begin{restatable}{lemma}{sd}\label{lem:sd}
  If $\projds \G \pp$ is defined then $\acword \in \langp \G$ implies
  $\proj \acword \pp \in \langap{\projds \G \pp}$.
\end{restatable}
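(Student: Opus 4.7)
The plan is to proceed by coinduction on the structure of $\G$ (equivalently, on the derivation of $\acword \in \langp \G$), since $\langp{\G}$ is defined coinductively and words may be infinite. The base case is immediate: if $\G = \tend$ then $\langp{\G} = \{\emptyword\}$, $\projds \tend \pp = \inact$, and $\proj \emptyword \pp = \emptyword \in \langap{\pp \triangleright \inact}$.

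For the coinductive case, let $\G = \mkt \pr \ps \setl \ell \G n$ and $\acword = \gint[][\pr][\ell_i][\ps] \cat \acword'$ with $\acword' \in \langp{\G_i}$ for some $1 \leq i \leq n$. I would split into the five cases of \cref{definition:projection}. If $\pp$ is not a participant of $\G$, then $\proj \acword \pp = \emptyword$ and $\projds \G \pp = \inact$, so the thesis is immediate. If $\pp = \pr$ (resp.\ $\pp = \ps$), then $\proj \acword \pp = \pr\ps!\ell_i \cat \proj{\acword'} \pp$ (resp.\ $\pr\ps?\ell_i \cat \proj{\acword'} \pp$); the coinductive hypothesis applied to $\acword' \in \langp{\G_i}$ yields $\proj{\acword'} \pp \in \langap{\pp \triangleright \projds{\G_i}\pp}$, and then the definition of $\langap{}$ on a choice process, together with the shape of $\projds \G \pp$ given by the corresponding clause of \cref{definition:projection}, gives the thesis. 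If $\pp \not\in \{\pr,\ps\}$ and $n = 1$, then $\proj \acword \pp = \proj{\acword'} \pp$ and $\projds \G \pp = \projds{\G_1}\pp$, so the thesis follows directly from the coinductive hypothesis.

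The only delicate case is $\pp \not\in \{\pr,\ps\}$ with $n > 1$: here $\proj \acword \pp = \proj{\acword'} \pp$ and, by the coinductive hypothesis applied to $\acword' \in \langp{\G_i}$, we have $\proj{\acword'} \pp \in \langap{\pp \triangleright \projds{\G_i}\pp} = \langap{\pp \triangleright \pin \ps \Lab_i}$. Since $\projds \G \pp = \pin \ps (\Lab_1 \uplus \cdots \uplus \Lab_n)$ and the defining equation of $\langap{}$ on an input branching takes a union over all branches, every word of $\langap{\pp \triangleright \pin \ps \Lab_i}$ is automatically a word of $\langap{\pp \triangleright \pin \ps (\Lab_1 \uplus \cdots \uplus \Lab_n)}$. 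The main obstacle is checking that this monotonicity step is sound, i.e.\ that the $\uplus$ on the right is well-defined (which is ensured by the standing projectability assumption on $\G$) and that membership in the branch language $\Lab_i$ transfers to the merged branching $\Lab_1 \uplus \cdots \uplus \Lab_n$; this follows because merging branches only \emph{adds} possible continuations while leaving the ones from $\Lab_i$ intact, so the inclusion $\langap{\pp \triangleright \pin \ps \Lab_i} \subseteq \langap{\pp \triangleright \pin \ps (\Lab_1 \uplus \cdots \uplus \Lab_n)}$ is immediate from the definition.
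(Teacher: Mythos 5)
Your proposal is correct and follows essentially the same route as the paper's proof: coinduction on $\G$ with a case analysis mirroring the clauses of \cref{definition:projection}, the only non-trivial case being $\pp \not\in \{\pr,\ps\}$ with $n>1$. There you phrase the key step as the inclusion $\langap{\pp \triangleright \pin \ps {\Lab_i}} \subseteq \langap{\pp \triangleright \pin \ps {(\mklab{\Lab_1}{\mklab{\ldots}{\Lab_n}})}}$ (sound because merging only adds branches), whereas the paper decomposes the projected word into its head action and tail and appeals to projectability of $\G$ on $\pp$ -- the same observation, stated slightly more cleanly on your side.
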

\begin{proof}
  By coinduction on $\G$.
  If $\G = \tend$ the thesis trivially hold.
  If $\G = \mkt \pp \pq \setl {\ell}{\G} {n}$ or
  $\G = \mkt \pq \pp \setl {\ell}{\G} {n}$ then the thesis immediately
  follows by coinduction observing that
  $\acword = \gint[][a][\ell_\mathit{\color{black}i}][b] \cat
  \acword'$ or
  $\acword = \gint[][b][\ell_\mathit{\color{black}i}][a] \cat
  \acword'$ for some $1 \leq i \leq n$ and
  $\acword' \in \langp {\G_i}$.
  If $\G = \mkt \pr \ps \setl {\ell}{\G} {n}$ with $\pp \neq \pr$ and
  $\pp \neq \ps$ then
  $\acword = \gint[][r][\ell_\mathit{\color{black}i}][s] \cat
  \acword'$ for some $1 \leq i \leq n$ and
  $\acword' \in \langp {\G_i}$; hence
  $\projds \acword \pp = \projds{\acword'} \pp$.
  By \cref{definition:projection}, either (a) $n=1$ and
  $\projds \G \pp = \projds {\G_1} \pp$ or (b) $n>1$ and
  $\projds \G \pp = \pin \ps (\mklab {\Lab_1} {\mklab{\ldots}{\Lab_n}}
  )$ where, for all $1 \leq i \leq n$,
  $\projds {\G_i} \pp = \pin \ps {\Lab_i}$ and $\Lambda_{i}$ of the
  form $\{\ell^i_1: \G^i_1,\ldots,\ell^i_{n_i}: \G^i_{n_i}\}$.
  In case (a) the thesis immediately follows by coinduction observing
  that $i = n = 1$.
  In case (b),
  $\projds {\acword'} \pp = \pin \ps {\ell_{ij}} \cat \aaword$ with
  $1 \leq j \leq n_i$ and, coinductively,
  $\aaword \in \langap{\pp \triangleright \projds {\G_i} \pp}$ the
  projection on $\pp$ of some word in $\langp{\G_i}$ otherwise we
  would violate the projectability of $\G$ on $\pp$.
\end{proof}

\begin{restatable}{proposition}{severib}\label{prop:sdb}
  Given a projectable global type $\G$,
  $\ssem{S(\projds \G {})} = \Set{\acword \sst \projds \G {}
	 \xrightarrow{\acword}}$.
\end{restatable}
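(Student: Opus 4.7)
The plan is to prove the thesis as a chain of equalities linking $\ssem{S(\projds \G {})}$, $\langc \G$, and the MPS trace set $\Set{\acword \sst \projds \G {} \xrightarrow{\acword}}$. The first step would be to establish that $S(\projds \G {})$ coincides with $\proj{\langc \G}{}$ as communicating systems, i.e.\ that for every participant $\pp$ of $\G$ one has $\langa{\pp \triangleright \projds \G \pp} = \proj{\langc \G}{\pp}$. The inclusion $\proj{\langc \G}{\pp} \subseteq \langa{\pp \triangleright \projds \G \pp}$ would follow from \cref{lem:sd} together with \cref{lemma:equalProj}: since projection commutes with prefix and is invariant under $\sim$, the inclusion lifts from $\langp \G$ to the prefix and concurrency closure $\langc \G$. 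The reverse inclusion would be proved by coinduction on the structure of $\G$ following the clauses of \cref{definition:projection}: given an action trace in $\langap{\projds \G \pp}$, one constructs a word in $\langp \G$ whose $\pp$-projection is the given trace by choosing, at each interaction of $\G$ not involving $\pp$, any admissible branch, and, at interactions involving $\pp$, the branch dictated by the trace; projectability guarantees that such a construction is well-defined since in the last clause of \cref{definition:projection} all $\G_i$ project on $\pp$ to an input from the same sender.

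Next, I would invoke \cref{prop:cuiba}, which ensures that $\langc \G$ is a CUI and branch-aware \sclang. Together with \cref{th:correctness,th:completeness} this yields $\ssem{\proj{\langc \G}{}} = \langc \G$, so combining with the previous step I would obtain $\ssem{S(\projds \G {})} = \langc \G$. By \cref{prop:sda} the latter equals $\Set{\acword \sst \G \xrightarrow{\acword}}$.

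Finally, it remains to show $\Set{\acword \sst \G \xrightarrow{\acword}} = \Set{\acword \sst \projds \G {} \xrightarrow{\acword}}$. This is a standard trace-equivalence between a global type and its projected MPS, provable by coinduction: every global type reduction (possibly via rule (GT2), handling out-of-order interactions) is simulated by a corresponding MPS step using \eqref{eq:mpssem}; conversely, every MPS step at some $\pp \triangleright \pq!\Lambda$, $\pq \triangleright \pp?\Lambda'$ pair is reproduced in $\G$ by (GT1) at the top of $\G$ or by (GT2) when the acting participants are nested inside independent interactions. Projectability of $\G$ is used to guarantee that branches in each projected process contain precisely the labels enabled by $\G$. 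The main obstacle will be this last step: reconciling out-of-order execution via (GT2) with the strictly local synchronous semantics of MPS requires careful coinductive bookkeeping, particularly for infinite traces; the preimage construction in the first step is also delicate, as it depends crucially on the full power of projectability to ensure that branches not involving $\pp$ can be chosen freely.
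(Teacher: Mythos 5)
Your proposal is correct and follows essentially the same route as the paper: the pivotal step is the identification $S(\projds \G {}) = \proj{\langc \G}{}$, after which \cref{prop:cuiba}, \cref{th:correctness}, \cref{th:completeness} and \cref{prop:sda} give the result. If anything you are more thorough than the paper's own proof, which justifies $\proj{\langc{\G}}{\pp} = \langa{\pp\,\triangleright\projds{\G}{\pp}}$ by citing only \cref{lem:sd} (which yields just one inclusion) and leaves implicit the final correspondence between the traces of $\G$ and those of the MPS $\projds \G {}$, both of which you rightly flag as needing their own (co)inductive arguments.
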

\begin{proof}
  We get
  $\ssem{S(\projds \G {})} = \Set{\acword \sst \projds \G {}
	 \xrightarrow{\acword}}$ by showing that
  \begin{align}\label{eq:prj}
	 S(\projds{\G}{}) = & \proj{\langc{\G}}{}
  \end{align}
  Indeed, $\ssem{S(\projds{\G}{})} = \ssem{\proj{\langc{\G}}{}}$ by
  \cref{prop:sda}.
  \cref{prop:cuiba} ensures $S(\projds{\G}{})$ is correct and
  complete, hence $\ssem{S(\projds{\G}{})} = \langc{\G}$ by
  \cref{th:correctness,th:completeness}.
  So, assume
  $\projds \G {} = {\pp_1\,\triangleright}{\PP_1} \mid \ldots \mid
  {\pp_n}\,\triangleright{\PP_n}$, then
  \begin{align*}
	 \proj{\langc \G} {} =
 	 & \proj{\Set{\acword \sst \G \xrightarrow{\acword}}}{}
 	 & & \text{by \cref{prop:sda}}
 	 \\
	 = & \Set{ \pp_i \mapsto \Set{\proj \acword {\pp_i} \sst \G \xrightarrow{\acword}} \ \sst\ 1 \leq i \leq n}
	 & & \text{by \cref{def:projection}}
 	 \\
	 = & \Set{ \pp_i \mapsto \Set{\proj \acword {\pp_i} \sst \acword \in \pref[\Set{\acword' \in  \langp \G \sst \acword \comm \acword'}] } \! \sst \! 1 \leq i \leq n} % prevented text from going in margin
	 & & \text{by def. of $\langc \G$}
 	 \\
	 = &  \Set{\pp_i \mapsto \langa{\pp_i\,\triangleright\PP_i} \sst 1\leq i\leq n }
	 & & \text{by \cref{lem:sd}}
  \end{align*}
  Finally $\Set{\pp_i \mapsto \langa{\pp_i\,\triangleright\PP_i} \sst 1\leq i\leq n }
  = S(\projds \G {})$
  by definition of $S(\_)$. 
\end{proof}

Projectable global types are proved strongly lock-free
in~\cite{SeveriD19}.
The following result corresponds to \cite[Theorem 4.7]{SeveriD19}.
\begin{corollary}\label{cor:WFtoLF}
  $S(\projds{\G}{})$ is strongly lock-free for any projectable $\G$.
\end{corollary}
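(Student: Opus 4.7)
The plan is to chain three results already established in the paper. First, \cref{prop:cuiba} tells us that whenever $\G$ is projectable, the associated g-language $\langc{\G}$ is both CUI and branch-aware, and it is a \sclang by construction (it is either finite-length or continuous, since continuity follows from prefix and concurrency closure of a language built from a regular generator). Second, \cref{thm:baconseq} takes any branch-aware, CUI \sclang\ and concludes that its projected system satisfies all the communication properties of \cref{sec:prop}, in particular strong lock-freedom. Applying this to $\clang=\langc{\G}$ gives that $\proj{\langc{\G}}{}$ is strongly lock-free.

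It then remains to bridge between the projection of the g-language, $\proj{\langc{\G}}{}$, and the system $S(\projds{\G}{})$ obtained through the type-theoretic projection of \cref{definition:projection}. Here I would invoke equation \eqref{eq:prj} established inside the proof of \cref{prop:sdb}, namely
\[
  S(\projds{\G}{}) \;=\; \proj{\langc{\G}}{},
\]
which is exactly the identification we need: the two constructions yield the same communicating system, componentwise.

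Combining these three observations, $S(\projds{\G}{}) = \proj{\langc{\G}}{}$ is strongly lock-free, which is the statement of the corollary. The only mild subtlety is checking that strong lock-freedom is defined on the abstract communicating systems (l-language indexed families), so that the equality of systems in \eqref{eq:prj} actually transports the property; but this is immediate because strong lock-freedom in \cref{def:slf} is formulated purely in terms of $\aCS(\p)$ and $\ssem{\aCS}$, both of which are preserved under equality of systems.

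I do not anticipate any genuine obstacle: the corollary is essentially a packaging of \cref{prop:cuiba}, \cref{thm:baconseq}, and the identity from the proof of \cref{prop:sdb}. The only care needed is to cite \eqref{eq:prj} rather than the weaker statement of \cref{prop:sdb} itself, since the latter only equates semantics and not the systems.
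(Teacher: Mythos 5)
Your proposal is correct and is exactly the derivation the paper intends: \cref{prop:cuiba} gives that $\langc{\G}$ is a CUI, branch-aware \sclang, \cref{thm:baconseq} then yields strong lock-freedom of $\proj{\langc{\G}}{}$, and equation~\eqref{eq:prj} from the proof of \cref{prop:sdb} identifies this system with $S(\projds{\G}{})$. Your observation that one must cite the system-level identity~\eqref{eq:prj} rather than the semantics-level statement of \cref{prop:sdb} is a correct and worthwhile point, since strong lock-freedom also depends on the local languages $\aCS(\p)$ and not only on $\ssem{\aCS}$.
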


The symmetry between senders and receivers in CUI and branch-awareness
allows for an immediate generalisation of the projection in
\cref{definition:projection} by extending the last case with the
clause:
\[
  \pout \ps (\mklab {\Lab_1}   {\mklab{\ldots}{\Lab_n}} )
  \qquad \text{if } \p[x] \not\in \{\p, \q \}, \ n>1, \text{ and for all } 1 \leq i \leq n, \projds {\G_i}{\p[x]} = \pout \ps  {\Lab_i}
\]
\cref{cor:WFtoLF} still holds for this generalised definition of
projection.

%%% Local Variables:
%%% mode: latex
%%% TeX-master: "main"
%%% End:

\section{Related Work}\label{sec:related}
The use of formal language theories for the modelling of concurrent
systems dates back to the theory of traces~\cite{maz86}.
A trace is an equivalence class of words that differ only for swaps
of independent symbols.
Closure under concurrency corresponds on finite words to form traces,
as we noted after \cref{def:cclos}.
An extensive literature has explored a notion of realisability whereby
a language of traces is realisable if it is accepted by some class of
finite-state automata.
Relevant results in this respect are the characterisations
in~\cite{zie87,dub86} (and the optimisation in~\cite{gm06}) for finite
words and the ones in~\cite{Gastin90,gas91,gpz91} for infinite ones.
A key difference of our framework with respect to this line of work is that we
aim to stricter notions of realisability: in our context it is not
enough that the runs of the language may be faithfully executed by a
certain class of finite-state automata.
Rather we are interested in identifying conditions on the g-languages
that guarantee well-behaved executions in \quo{natural} realisations.

Other abstract models of choreographies, such as Conversation
protocols (CP)~\cite{fbs04} and c-automata~\cite{blt20}, have some
relation with our FCL.  We discussed in depth the relation with
c-automata in \cref{sec:chor-automata}.

CP, probably the first
automata-based model of choreographies, are non-deterministic B\"uchi
automata whose alphabet resembles a constrained variant of our
$\alfint$.
A comparison with the g-languages accepted by CPs is not immediate as
CPs are based on asynchronous communications (although some
connections are evident as noted below \cref{def:syncSem}).

Other proposals ascribable to choreographic settings
(cf.~\cite{hlvlcmmprttz16}) define global views that can be seen as
g-languages.
We focus on synchronous approaches because our current theory needs to
be extended to cope with asynchrony.

In~\cite{BravettiZ07,lgmz08} the correctness of implementations of
choreographies (called \emph{choreography conformance}) is studied in
a process algebraic setting.
The other communication properties we consider here are not discussed
there.

The notion of choreography implementation in~\cite{BravettiZ07}
corresponds to our correctness plus a form of existential termination.
It is shown that one can decide whether a system is an implementation
of a given choreography, since both languages are generated by
finite-state automata, hence language inclusion and existential
termination are decidable.

In~\cite{lgmz08} three syntactic conditions (connectedness, unique
points of choice and causality safety) ensure bisimilarity (hence
trace equivalence) between a choreography and its
projection.
Connectedness rules out systems which are not concurrency closed,
while we conjecture that unique points of choice and connectedness
together imply our CUI and BA.
Causality safety, immaterial in our case, is needed in~\cite{lgmz08}
due to the existence of an explicit parallel composition operator in
their process algebra.

Many multiparty session type systems~\cite{hlvlcmmprttz16} have two
levels of types (global and local) and one implementation level (local
processes).
This is the case also for synchronous session type systems such
as~\cite{KouzapasY14,Dezani-Ciancaglini16}. The approach
in~\cite{Stolze2022} instead merges the two levels of types into a
single one to allow for composition, while having a level of local
processes for implementation.
Our approach, like the session type systems
in~\cite{SeveriD19,BDLT20}, considers only (two) abstract
descriptions, g-languages and l-languages.
The literature offers several behavioural types featuring
correctness-by-construction principles through conditions (known as
projectability or well-branchedness) more demanding than ours.
For instance, relations similar to those in \cref{sec:mpst} can be
devised for close formalisms, such as~\cite{BDLT20} whose notion of
projection is more general than the one in~\cite{SeveriD19}, yet its
notion of projectability still implies CUI and BA.

There is a connection between CUI and the closure property CC2~\cite{aey03} on message-sequence charts (MSCs)~\cite{msc}.
On finite words CC2 and CUI coincide.
Actually, CUI can be regarded as a step-by-step way to ensure CC2 on
finite words.
The relations between our properties and CC3, also used in MSCs, are
still under scrutiny.

%%% Local Variables:
%%% mode: latex
%%% TeX-master: "main"
%%% End:

\section{Concluding Remarks}\label{sec:conc}

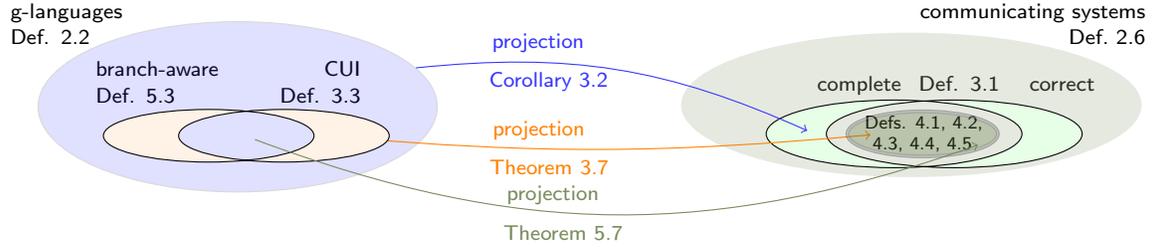
\begin{figure}[t]
  \centering
  \begin{tikzpicture}[node distance=.5cm and .5cm, scale = 1, every label/.style = {text width=1.5cm}, every node/.style={transform shape}]
	 \node[label={[above left, xshift=-1cm]{g-languages Def. \ref{def:chorlang}}}, text width=2.5cm] (g-languages) {};
	 \node[right = of g-languages] (g') {};
	 \node[right = 5.8cm of g-languages, label={[above right, text width=3cm, align=right, xshift=2cm]communicating systems \\ Def.~\ref{def:commSyst}}] (cs) {}; % prevent text from going in margin
	 \node[below right = of cs, text height = .3cm, text width=1.5cm, label={[above,yshift=.1cm]complete}] (complete) {};
	 \node[right = of complete, text height = .2cm, text width=1.5cm, label={[above, yshift=.2cm, text width=4cm, align=center]Def. \ref{def:cc} \quad correct}, xshift = -.5cm] (correct) {};
	 \node[below = of g-languages, xshift=1cm, label={[above,xshift=-1cm, text width=3cm, align=left]branch-aware \\ Def. \ref{def:ba}}, text width=1.5cm, text height=.35cm] (ba) {};
	 \node[right = of ba, xshift=-2.5cm, label={[above,align=right,xshift=.5cm]CUI \\ Def. \ref{def:closedness}}, text width=1.5cm, text height=.35cm] (cui) {};
	 \node[draw,fill=blue!80,opacity=.15,inner sep=3pt,ellipse,fit=(g-languages) (g') (cui) (ba)] (eg){};
	 \node[fill=DarkOliveGreen!60,opacity=.25,inner sep=.25pt,ellipse,fit=(cs) (correct) (complete)] {};
	 \node[right = .02cm of complete] (x) {};
	 \node[left = 1.5cm of correct] (x) {};
	 \node[right =-.5cm of ba] (ecui) {};
	 \node[right = -2.7cm of correct] (z) {};
	 \node[left = -.5cm of ba] (u) {};
	 \node[left = .1cm of correct] (v) {};
	 \node[right = .6cm of v,yshift=-.1cm] (v') {};
	 \node at (v) {
		\begin{minipage}{.2\linewidth}\centering\tiny
		  Defs. \ref{def:llive}, \ref{def:wlive},
		  \\
		  \ref{def:slf}, \ref{def:lf}, \ref{def:deadlockfree}
		\end{minipage}
	 };
	 \fill[draw,fill=green!10, even odd rule] (correct) circle[x radius = 1.7cm, y radius =.45cm,xshift=-1.5cm] (correct) circle[x radius = 1.7cm, y radius =.45cm,xshift=-.7cm];
	 \path[draw, ->, blue!80] (eg) edge[bend left=15] node[below,xshift=-0.9cm,yshift=.1cm,label={[above,] projection}]{\cref{th:completeness}} (x);
	 \path[draw, ->, orange] (ecui) edge[bend right = 5] node[below,xshift=-.7cm,label={[above]projection}]{\cref{th:correctness}} (z);
	 \fill[draw,fill=orange!10, even odd rule] (ba) circle[x radius = 1.4cm, y radius =.35cm,xshift=-1cm] (ba) circle[x radius = 1.4cm, y radius =.35cm];
	 \fill[draw,double,fill=DarkOliveGreen,opacity=.3] (v) circle[x radius = 1cm, y radius =.3cm];
	 \path[draw, ->, DarkOliveGreen!80] (u) edge[bend right = 20] node[below,xshift=-.7cm,label={[above]projection}]{\cref{thm:baconseq}} (v');
  \end{tikzpicture}
  \caption{Main contributions}\label{fig:contributions}
\end{figure}

We introduced formal choreography languages as a general and abstract
theory of choreographies based on formal languages.
In this theory we recasted known properties and constructions such as
projections from global to local specifications.

One of our contributions is the characterisation of systems'
correctness in terms of closure under unknown information.
Other communication properties can be ensured by additionally
requiring branch awareness.
A synopsis of our main contributions is in \cref{fig:contributions}.

We demonstrated the versatility of FCL by considering three existing
models.
We showed some relations between FCL and two automata models,
communicating finite-state machines~\cite{bz83} and c-automata~\cite{blt20}.
These models are close to FCL, given that the relations retrace the
well-known connection between automata and formal language theories.
The last model captured with FCL is the variant of MPSTs presented
in~\cite{SeveriD19}.
Being based on behavioural types, this model radically differs from FCL.

\medskip

\noindent\textbf{Future work.}\ 
Our investigation proposes a new point of view for choreography
formalisms and the related constructions.
As such, a number of extensions and improvements need to be analysed,
to check how they may fit in our setting. We list below the most
relevant.

  Our framework considers only point-to-point synchronous
  communications.
  Possible generalisations could contemplate other interaction models
  such as those e.g., in~\cite{cmsy17,wad14} and those based on
  \emph{asynchronous} communications.
  Another possible generalisation is to consider \emph{nominal} FCLs
  in the line of nominal languages~\cite{kst12,bbkl12,kmps15}.
While the general approach should apply, it is not immediate how to
extend CUI in order to characterise correctness for an asynchronous
semantics.
This is somehow confirmed by the results in~\cite{aey05,aey01} on the
realisability of MSCs showing that in the asynchronous setting this is
a challenging problem.

A second direction is analysing how to drop prefix-closure, so
allowing for specifications where the system (and single participants)
may stop their execution at some points but not at others;
a word would hence represent a complete computation, not only a
partial one.

A further direction would unveil the correspondence between closure
properties and subtyping relations used in many multiparty session
types approaches.

Additionally, it would be good to understand whether the composition
and decomposition operators or the partial multiparty sessions defined
in, respectively, \cite{BDLT20} and \cite{Stolze2022}, for multiparty
session types can be rephrased in the more general setting of FCLs.

Another intriguing research direction is the application of FCLs to capture
properties that are not usually considered in behavioural type
frameworks.
Specifically, the notions of \emph{receptiveness} and
\emph{responsiveness} have been defined
in~\cite{10.1007/978-3-319-59746-1_14,10.1007/978-3-030-90870-6_26,10.1007/978-3-030-64276-1_11}
to formalise the properties of communicating systems where some
components can always succeed to send or receive messages.
An interesting initial question to address is whether receptiveness
and responsiveness can be characterised in terms of FCLs.
A starting point to address it could be to give FCL models for the
dynamic logic used in~\cite{tchp23} to characterise receptiveness.

Finally, our approach strives for generality neglecting efficiency and
practical applicability.
An interesting research direction is to identify classes of languages
ensuring that our properties, such as CUI, could be checked
efficiently.

%%% Local Variables:
%%% mode: latex
%%% TeX-master: "main"
%%% End:

%\noindent\textbf{Acknowledgements}
%We thank an anonymous reviewer for suggesting the relations
%with Galois connections.

\bibliographystyle{alphaurl}
\bibliography{bib}

\newcommand{\etalchar}[1]{$^{#1}$}
\begin{thebibliography}{BDCLT21}

\bibitem[ABB80]{ajb80}
Jean-Michel Autebert, Joffroy Beauquier, and Luc Boasson.
\newblock Langages sur des alphabets infinis.
\newblock {\em Discrete Applied Mathematics}, 2(1):1 -- 20, 1980.
\newblock \href {https://doi.org/10.1016/0166-218X(80)90050-5}
  {\path{doi:10.1016/0166-218X(80)90050-5}}.

\bibitem[AEY01]{aey01}
Rajeev Alur, Kousha Etessami, and Mihalis Yannakakis.
\newblock Realizability and verification of {MSC} graphs.
\newblock In Fernando Orejas, Paul~G. Spirakis, and Jan van Leeuwen, editors,
  {\em Automata, Languages and Programming, 28th International Colloquium,
  {ICALP} 2001, Crete, Greece, July 8-12, 2001, Proceedings}, volume 2076 of
  {\em Lecture Notes in Computer Science}, pages 797--808. Springer, 2001.
\newblock \href {https://doi.org/10.1007/3-540-48224-5_65}
  {\path{doi:10.1007/3-540-48224-5_65}}.

\bibitem[AEY03]{aey03}
Rajeev Alur, Kousha Etessami, and Mihalis Yannakakis.
\newblock {Inference of Message Sequence Charts}.
\newblock {\em {IEEE} Trans. Software Eng.}, 29(7):623--633, 2003.

\bibitem[Alu05]{aey05}
Rajeev Alur.
\newblock The benefits of exposing calls and returns.
\newblock In Mart{\'{\i}}n Abadi and Luca de~Alfaro, editors, {\em {CONCUR}
  2005 - Concurrency Theory, 16th International Conference, {CONCUR} 2005, San
  Francisco, CA, USA, August 23-26, 2005, Proceedings}, volume 3653 of {\em
  Lecture Notes in Computer Science}, pages 2--3. Springer, 2005.
\newblock \href {https://doi.org/10.1007/11539452_2}
  {\path{doi:10.1007/11539452_2}}.

\bibitem[BB11]{BasuB11}
Samik Basu and Tevfik Bultan.
\newblock Choreography conformance via synchronizability.
\newblock In Sadagopan Srinivasan, Krithi Ramamritham, Arun Kumar, M.~P.
  Ravindra, Elisa Bertino, and Ravi Kumar, editors, {\em Proceedings of the
  20th International Conference on World Wide Web, {WWW} 2011, Hyderabad,
  India, March 28 - April 1, 2011}, pages 795--804. {ACM}, 2011.
\newblock \href {https://doi.org/10.1145/1963405.1963516}
  {\path{doi:10.1145/1963405.1963516}}.

\bibitem[BBKL12]{bbkl12}
Mikolaj Bojanczyk, Laurent Braud, Bartek Klin, and Slawomir Lasota.
\newblock Towards nominal computation.
\newblock In John Field and Michael Hicks, editors, {\em Proceedings of the
  39th {ACM} {SIGPLAN-SIGACT} Symposium on Principles of Programming Languages,
  {POPL} 2012, Philadelphia, Pennsylvania, USA, January 22-28, 2012}, pages
  401--412. {ACM}, 2012.
\newblock \href {https://doi.org/10.1145/2103656.2103704}
  {\path{doi:10.1145/2103656.2103704}}.

\bibitem[BBO12]{bbo12}
Samik Basu, Tevfik Bultan, and Meriem Ouederni.
\newblock Deciding choreography realizability.
\newblock In John Field and Michael Hicks, editors, {\em Proceedings of the
  39th {ACM} {SIGPLAN-SIGACT} Symposium on Principles of Programming Languages,
  {POPL} 2012, Philadelphia, Pennsylvania, USA, January 22-28, 2012}, pages
  191--202, 2012.
\newblock \href {https://doi.org/10.1145/2103656.2103680}
  {\path{doi:10.1145/2103656.2103680}}.

\bibitem[BDCLT21]{BDLT20}
Franco Barbanera, Mariangiola Dezani-Ciancaglini, Ivan Lanese, and Emilio
  Tuosto.
\newblock Composition and decomposition of multiparty sessions.
\newblock {\em Journal of Logical and Algebraic Methods in Programming},
  119:100620, 2021.
\newblock \href {https://doi.org/10.1016/j.jlamp.2020.100620}
  {\path{doi:10.1016/j.jlamp.2020.100620}}.

\bibitem[BLT20]{blt20}
Franco Barbanera, Ivan Lanese, and Emilio Tuosto.
\newblock Choreography automata.
\newblock In Simon Bliudze and Laura Bocchi, editors, {\em Coordination Models
  and Languages - 22nd {IFIP} {WG} 6.1 International Conference, {COORDINATION}
  2020, Held as Part of the 15th International Federated Conference on
  Distributed Computing Techniques, DisCoTec 2020, Valletta, Malta, June 15-19,
  2020, Proceedings}, volume 12134 of {\em Lecture Notes in Computer Science},
  pages 86--106. Springer, 2020.
\newblock \href {https://doi.org/10.1007/978-3-030-50029-0\_6}
  {\path{doi:10.1007/978-3-030-50029-0\_6}}.

\bibitem[BLT22]{blt22}
Franco Barbanera, Ivan Lanese, and Emilio Tuosto.
\newblock Formal choreographic languages.
\newblock In Maurice~H. ter Beek and Marjan Sirjani, editors, {\em Coordination
  Models and Languages - 24th {IFIP} {WG} 6.1 International Conference,
  {COORDINATION} 2022, Held as Part of the 17th International Federated
  Conference on Distributed Computing Techniques, DisCoTec 2022, Lucca, Italy,
  June 13-17, 2022, Proceedings}, volume 13271 of {\em LNCS}, pages 121--139.
  Springer, 2022.
\newblock \href {https://doi.org/10.1007/978-3-031-08143-9\_8}
  {\path{doi:10.1007/978-3-031-08143-9\_8}}.

\bibitem[Bon18]{boner18}
Jonas Bon\'er.
\newblock {\em Reactive Microsystems - The Evolution Of Microservices At
  Scale}.
\newblock O'Reilly, 2018.

\bibitem[BZ83]{bz83}
Daniel Brand and Pitro Zafiropulo.
\newblock On communicating finite-state machines.
\newblock {\em J. {ACM}}, 30(2):323--342, 1983.
\newblock \href {https://doi.org/10.1145/322374.322380}
  {\path{doi:10.1145/322374.322380}}.

\bibitem[BZ07]{BravettiZ07}
Mario Bravetti and Gianluigi Zavattaro.
\newblock Towards a unifying theory for choreography conformance and contract
  compliance.
\newblock In Markus Lumpe and Wim Vanderperren, editors, {\em Software
  Composition, 6th International Symposium, {SC} 2007, Braga, Portugal, March
  24-25, 2007, Revised Selected Papers}, volume 4829 of {\em Lecture Notes in
  Computer Science}, pages 34--50. Springer, 2007.
\newblock \href {https://doi.org/10.1007/978-3-540-77351-1_4}
  {\path{doi:10.1007/978-3-540-77351-1_4}}.

\bibitem[CDYP16]{cdyp16}
Mario Coppo, Mariangiola Dezani{-}Ciancaglini, Nobuko Yoshida, and Luca
  Padovani.
\newblock Global progress for dynamically interleaved multiparty sessions.
\newblock {\em Mathematical Structures in Computer Science}, 26(2):238--302,
  2016.
\newblock \href {https://doi.org/10.1017/S0960129514000188}
  {\path{doi:10.1017/S0960129514000188}}.

\bibitem[CHY12]{CarboneHY12}
Marco Carbone, Kohei Honda, and Nobuko Yoshida.
\newblock Structured communication-centered programming for web services.
\newblock {\em {ACM} Trans. Program. Lang. Syst.}, 34(2):8:1--8:78, 2012.
\newblock \href {https://doi.org/10.1145/2220365.2220367}
  {\path{doi:10.1145/2220365.2220367}}.

\bibitem[CM13]{cm13}
Marco Carbone and Fabrizio Montesi.
\newblock Deadlock-freedom-by-design: Multiparty asynchronous global
  programming.
\newblock {\em SIGPLAN Not.}, 48(1):263–274, January 2013.
\newblock \href {https://doi.org/10.1145/2480359.2429101}
  {\path{doi:10.1145/2480359.2429101}}.

\bibitem[CMSY17]{cmsy17}
Marco Carbone, Fabrizio Montesi, Carsten Sch{\"{u}}rmann, and Nobuko Yoshida.
\newblock Multiparty session types as coherence proofs.
\newblock {\em Acta Informatica}, 54(3):243--269, 2017.
\newblock \href {https://doi.org/10.1007/s00236-016-0285-y}
  {\path{doi:10.1007/s00236-016-0285-y}}.

\bibitem[Dar14]{Dardha_2014}
Ornela Dardha.
\newblock Recursive session types revisited.
\newblock {\em Electronic Proceedings in Theoretical Computer Science},
  162:27--34, August 2014.
\newblock \href {https://doi.org/10.4204/eptcs.162.4}
  {\path{doi:10.4204/eptcs.162.4}}.

\bibitem[DGJ{\etalchar{+}}15]{Dezani-Ciancaglini16}
Mariangiola Dezani{-}Ciancaglini, Silvia Ghilezan, Svetlana Jaksic, Jovanka
  Pantovic, and Nobuko Yoshida.
\newblock Precise subtyping for synchronous multiparty sessions.
\newblock In Simon Gay and Jade Alglave, editors, {\em Proceedings Eighth
  International Workshop on Programming Language Approaches to Concurrency- and
  Communication-cEntric Software, {PLACES} 2015, London, UK, 18th April 2015},
  volume 203 of {\em {EPTCS}}, pages 29--43, 2015.
\newblock \href {https://doi.org/10.4204/EPTCS.203.3}
  {\path{doi:10.4204/EPTCS.203.3}}.

\bibitem[Dub86]{dub86}
Christine Duboc.
\newblock Mixed product and asynchronous automata.
\newblock {\em TCS}, 48(3):183--199, 1986.
\newblock \href {https://doi.org/10.1016/0304-3975(86)90094-0}
  {\path{doi:10.1016/0304-3975(86)90094-0}}.

\bibitem[Eil76]{Eilenberg76}
Samuel Eilenberg.
\newblock {\em Automata, languages, and machines, {B}}.
\newblock Pure and applied mathematics. Academic Press, 1976.
\newblock URL: \url{https://www.worldcat.org/oclc/310535259}.

\bibitem[FBS04]{fbs04}
Xiang Fu, Tevfik Bultan, and Jianwen Su.
\newblock Conversation protocols: a formalism for specification and
  verification of reactive electronic services.
\newblock {\em TCS}, 328(1-2):19--37, 2004.
\newblock \href {https://doi.org/10.1016/j.tcs.2004.07.004}
  {\path{doi:10.1016/j.tcs.2004.07.004}}.

\bibitem[Gas90]{Gastin90}
Paul Gastin.
\newblock Infinite traces.
\newblock In Ir{\`{e}}ne Guessarian, editor, {\em Semantics of Systems of
  Concurrent Processes, {LITP} Spring School on Theoretical Computer Science,
  La Roche Posay, France, April 23-27, 1990, Proceedings}, volume 469 of {\em
  Lecture Notes in Computer Science}, pages 277--308. Springer, 1990.
\newblock \href {https://doi.org/10.1007/3-540-53479-2_12}
  {\path{doi:10.1007/3-540-53479-2_12}}.

\bibitem[Gas91]{gas91}
Paul Gastin.
\newblock Recognizable and rational languages of finite and infinite traces.
\newblock In Christian Choffrut and Matthias Jantzen, editors, {\em {STACS} 91,
  8th Annual Symposium on Theoretical Aspects of Computer Science, Hamburg,
  Germany, February 14-16, 1991, Proceedings}, volume 480 of {\em Lecture Notes
  in Computer Science}, pages 89--104. Springer, 1991.
\newblock \href {https://doi.org/10.1007/BFb0020790}
  {\path{doi:10.1007/BFb0020790}}.

\bibitem[GJP{\etalchar{+}}19]{GhilezanJPSY19}
Silvia Ghilezan, Svetlana Jaksic, Jovanka Pantovic, Alceste Scalas, and Nobuko
  Yoshida.
\newblock Precise subtyping for synchronous multiparty sessions.
\newblock {\em J. Log. Algebraic Methods Program.}, 104:127--173, 2019.
\newblock \href {https://doi.org/10.1016/j.jlamp.2018.12.002}
  {\path{doi:10.1016/j.jlamp.2018.12.002}}.

\bibitem[GM06]{gm06}
Blaise Genest and Anca Muscholl.
\newblock Constructing exponential-size deterministic zielonka automata.
\newblock In Michele Bugliesi, Bart Preneel, Vladimiro Sassone, and Ingo
  Wegener, editors, {\em Automata, Languages and Programming, 33rd
  International Colloquium, {ICALP} 2006, Venice, Italy, July 10-14, 2006,
  Proceedings, Part {II}}, volume 4052 of {\em Lecture Notes in Computer
  Science}, pages 565--576. Springer, 2006.
\newblock \href {https://doi.org/10.1007/11787006\_48}
  {\path{doi:10.1007/11787006\_48}}.

\bibitem[GPZ91]{gpz91}
Paul Gastin, Antoine Petit, and Wieslaw Zielonka.
\newblock A kleene theorem for infinite trace languages.
\newblock In Javier~Leach Albert, Burkhard Monien, and Mario
  Rodr{\'{\i}}guez{-}Artalejo, editors, {\em Automata, Languages and
  Programming, 18th International Colloquium, ICALP91, Madrid, Spain, July
  8-12, 1991, Proceedings}, volume 510 of {\em Lecture Notes in Computer
  Science}, pages 254--266. Springer, 1991.
\newblock \href {https://doi.org/10.1007/3-540-54233-7_139}
  {\path{doi:10.1007/3-540-54233-7_139}}.

\bibitem[HLV{\etalchar{+}}16]{hlvlcmmprttz16}
Hans H{\"{u}}ttel, Ivan Lanese, Vasco~T. Vasconcelos, Lu{\'{\i}}s Caires, Marco
  Carbone, Pierre{-}Malo Deni{\'{e}}lou, Dimitris Mostrous, Luca Padovani,
  Ant{\'{o}}nio Ravara, Emilio Tuosto, Hugo~Torres Vieira, and Gianluigi
  Zavattaro.
\newblock Foundations of session types and behavioural contracts.
\newblock {\em {ACM} Comput. Surv.}, 49(1):3:1--3:36, 2016.
\newblock \href {https://doi.org/10.1145/2873052} {\path{doi:10.1145/2873052}}.

\bibitem[HYC16]{honda16jacm}
Kohei Honda, Nobuko Yoshida, and Marco Carbone.
\newblock Multiparty asynchronous session types.
\newblock {\em Journal of the ACM}, 63(1):9:1--9:67, 2016.
\newblock Extended version of a paper presented at {POPL'08}.
\newblock \href {https://doi.org/10.1145/2827695} {\path{doi:10.1145/2827695}}.

\bibitem[ITU96]{msc}
ITU Telecommunication Standardization Sector.
\newblock {\em {ITU-T recommendation Z.120. Message Sequence Charts (MSC'96)}},
  1996.

\bibitem[JY20]{jy20}
Sung-Shik Jongmans and Nobuko Yoshida.
\newblock Exploring type-level bisimilarity towards more expressive multiparty
  session types.
\newblock In Peter M{\"u}ller, editor, {\em Programming Languages and Systems},
  pages 251--279, Cham, 2020. Springer.
\newblock \href {https://doi.org/10.1007/978-3-030-44914-8_10}
  {\path{doi:10.1007/978-3-030-44914-8_10}}.

\bibitem[KBR{\etalchar{+}}05]{WS-CDL}
Nickolas Kavantzas, David Burdett, Gregory Ritzinger, Tony Fletcher, Yves
  Lafon, and Charlton Barreto.
\newblock Web services choreography description language version 1.0.
\newblock Technical report, W3C, 2005.
\newblock \url{http://www.w3.org/TR/ws-cdl-10/}.

\bibitem[KMPS15]{kmps15}
Dexter Kozen, Konstantinos Mamouras, Daniela Petri{\c{s}}an, and Alexandra
  Silva.
\newblock Nominal kleene coalgebra.
\newblock In Magn{\'{u}}s~M. Halld{\'{o}}rsson, Kazuo Iwama, Naoki Kobayashi,
  and Bettina Speckmann, editors, {\em International Colloquium on Automata,
  Languages, and Programming}, pages 286--298, Kyoto, Japan, 2015.
\newblock \href {https://doi.org/10.1007/978-3-662-47666-6_23}
  {\path{doi:10.1007/978-3-662-47666-6_23}}.

\bibitem[Kob02]{Kobayashy:InfoComp02}
Naoki Kobayashi.
\newblock {A Type System for Lock-Free Processes}.
\newblock {\em Information and Computation}, 177:122--159, 2002.
\newblock \href {https://doi.org/10.1006/inco.2002.3171}
  {\path{doi:10.1006/inco.2002.3171}}.

\bibitem[KS10]{KobayashiS10}
Naoki Kobayashi and Davide Sangiorgi.
\newblock A hybrid type system for lock-freedom of mobile processes.
\newblock {\em {ACM} Trans. Program. Lang. Syst.}, 32(5):16:1--16:49, 2010.
\newblock \href {https://doi.org/10.1145/1745312.1745313}
  {\path{doi:10.1145/1745312.1745313}}.

\bibitem[KS17]{KozenS17}
Dexter Kozen and Alexandra Silva.
\newblock {Practical Coinduction}.
\newblock {\em Mathematical Structures in Computer Science}, 27(7):1132--1152,
  2017.

\bibitem[KST12]{kst12}
Alexander Kurz, Tomoyuki Suzuki, and Emilio Tuosto.
\newblock On nominal regular languages with binders.
\newblock In Lars Birkedal, editor, {\em Foundations of Software Science and
  Computational Structures - 15th International Conference, {FOSSACS} 2012,
  Held as Part of the European Joint Conferences on Theory and Practice of
  Software, {ETAPS} 2012, Tallinn, Estonia, March 24 - April 1, 2012.
  Proceedings}, volume 7213 of {\em Lecture Notes in Computer Science}, pages
  255--269. Springer, 2012.
\newblock \href {https://doi.org/10.1007/978-3-642-28729-9\_17}
  {\path{doi:10.1007/978-3-642-28729-9\_17}}.

\bibitem[KY14]{KouzapasY14}
Dimitrios Kouzapas and Nobuko Yoshida.
\newblock Globally governed session semantics.
\newblock {\em Log. Methods Comput. Sci.}, 10(4), 2014.
\newblock \href {https://doi.org/10.2168/LMCS-10(4:20)2014}
  {\path{doi:10.2168/LMCS-10(4:20)2014}}.

\bibitem[LGMZ08]{lgmz08}
Ivan Lanese, Claudio Guidi, Fabrizio Montesi, and Gianluigi Zavattaro.
\newblock Bridging the gap between interaction- and process-oriented
  choreographies.
\newblock In Antonio Cerone and Stefan Gruner, editors, {\em Software
  Engineering and Formal Methods, {SEFM} 2008}, pages 323--332, Cape Town,
  South Africa, 2008.
\newblock \href {https://doi.org/10.1109/SEFM.2008.11}
  {\path{doi:10.1109/SEFM.2008.11}}.

\bibitem[LNTY17]{LangeNTY17}
Julien Lange, Nicholas Ng, Bernardo Toninho, and Nobuko Yoshida.
\newblock Fencing off go: liveness and safety for channel-based programming.
\newblock In Giuseppe Castagna and Andrew~D. Gordon, editors, {\em Proceedings
  of the 44th {ACM} {SIGPLAN} Symposium on Principles of Programming Languages,
  {POPL} 2017, Paris, France, January 18-20, 2017}, pages 748--761. {ACM},
  2017.
\newblock \href {https://doi.org/10.1145/3009837.3009847}
  {\path{doi:10.1145/3009837.3009847}}.

\bibitem[Maz86]{maz86}
Antoni~W. Mazurkiewicz.
\newblock Trace theory.
\newblock In Wilfried Brauer, Wolfgang Reisig, and Grzegorz Rozenberg, editors,
  {\em Petri Nets: Central Models and Their Properties, Advances in Petri Nets
  1986, Part II, Proceedings of an Advanced Course, Bad Honnef, Germany, 8-19
  September 1986}, volume 255 of {\em Lecture Notes in Computer Science}, pages
  279--324. Springer, 1986.
\newblock \href {https://doi.org/10.1007/3-540-17906-2_30}
  {\path{doi:10.1007/3-540-17906-2_30}}.

\bibitem[OMG11]{BPMN}
OMG.
\newblock {\em Business Process Model and Notation (BPMN), Version 2.0},
  January 2011.
\newblock \url{https://www.omg.org/spec/BPMN}.

\bibitem[Pad13]{Padovani13}
Luca Padovani.
\newblock From lock freedom to progress using session types.
\newblock In Nobuko Yoshida and Wim Vanderbauwhede, editors, {\em Proceedings
  6th Workshop on Programming Language Approaches to Concurrency and
  Communication-cEntric Software, {PLACES} 2013, Rome, Italy, 23rd March 2013},
  volume 137 of {\em {EPTCS}}, pages 3--19, 2013.
\newblock \href {https://doi.org/10.4204/EPTCS.137.2}
  {\path{doi:10.4204/EPTCS.137.2}}.

\bibitem[Par81]{par80}
David Park.
\newblock {Concurrency and Automata on Infinite Sequences}.
\newblock In {\em Theoretical Computer Science, 5th GI-Conf.}, volume 104 of
  {\em LNCS}, pages 167--183. Springer, Karlsruhe, March 1981.

\bibitem[Red86]{Redziejowski86}
Roman~R. Redziejowski.
\newblock Infinite-word languages and continuous mappings.
\newblock {\em TCS}, 43:59--79, 1986.
\newblock \href {https://doi.org/10.1016/0304-3975(86)90166-0}
  {\path{doi:10.1016/0304-3975(86)90166-0}}.

\bibitem[SD19]{SeveriD19}
Paula Severi and Mariangiola Dezani{-}Ciancaglini.
\newblock Observational equivalence for multiparty sessions.
\newblock {\em Fundam. Informaticae}, 170(1-3):267--305, 2019.
\newblock \href {https://doi.org/10.3233/FI-2019-1863}
  {\path{doi:10.3233/FI-2019-1863}}.

\bibitem[SMD22]{Stolze2022}
Claude Stolze, Marino Miculan, and Pietro Di Gianantonio.
\newblock Composable partial multiparty session types for open systems.
\newblock {\em Software and Systems Modeling}, 2022.
\newblock \href {https://doi.org/10.1007/s10270-022-01040-x}
  {\path{doi:10.1007/s10270-022-01040-x}}.

\bibitem[Sta97]{Staiger97}
Ludwig Staiger.
\newblock \emph{{\(\omega\)}}-languages.
\newblock In Grzegorz Rozenberg and Arto Salomaa, editors, {\em Handbook of
  Formal Languages, Volume 3: Beyond Words}, pages 339--387. Springer, 1997.
\newblock \href {https://doi.org/10.1007/978-3-642-59126-6\_6}
  {\path{doi:10.1007/978-3-642-59126-6\_6}}.

\bibitem[tBCHK17]{10.1007/978-3-319-59746-1_14}
{Maurice H.} ter Beek, Josep Carmona, Rolf Hennicker, and Jetty Kleijn.
\newblock Communication requirements for team automata.
\newblock In {Jean-Marie} Jacquet and Mieke Massink, editors, {\em Coordination
  Models and Languages}, pages 256--277, Cham, 2017. Springer International
  Publishing.
\newblock \href {https://doi.org/10.1007/978-3-319-59746-1_14}
  {\path{doi:10.1007/978-3-319-59746-1_14}}.

\bibitem[tBCHP21]{10.1007/978-3-030-90870-6_26}
{Maurice H.} ter Beek, Guillermina Cledou, Rolf Hennicker, and Jos{\'e}
  Proen{\c{c}}a.
\newblock Featured team automata.
\newblock In Marieke Huisman, Corina P{\u{a}}s{\u{a}}reanu, and Naijun Zhan,
  editors, {\em Formal Methods}, pages 483--502, Cham, 2021. Springer
  International Publishing.
\newblock \href {https://doi.org/10.1007/978-3-030-90870-6_26}
  {\path{doi:10.1007/978-3-030-90870-6_26}}.

\bibitem[tBCHP23]{tchp23}
Maurice~H. ter Beek, Guillermina Cledou, Rolf Hennicker, and Jos{\'{e}}
  Proen{\c{c}}a.
\newblock Can we communicate? using dynamic logic to verify team automata.
\newblock In Marsha Chechik, Joost{-}Pieter Katoen, and Martin Leucker,
  editors, {\em Formal Methods - 25th International Symposium, {FM} 2023,
  L{\"{u}}beck, Germany, March 6-10, 2023, Proceedings}, volume 14000 of {\em
  Lecture Notes in Computer Science}, pages 122--141. Springer, 2023.
\newblock \href {https://doi.org/10.1007/978-3-031-27481-7\_9}
  {\path{doi:10.1007/978-3-031-27481-7\_9}}.

\bibitem[tBHK20]{10.1007/978-3-030-64276-1_11}
{Maurice H.} ter Beek, Rolf Hennicker, and Jetty Kleijn.
\newblock Compositionality of safe communication in systems of team automata.
\newblock In {Violet Ka I.} Pun, Volker Stolz, and Adenilso Simao, editors,
  {\em Theoretical Aspects of Computing -- ICTAC 2020}, pages 200--220, Cham,
  2020. Springer International Publishing.
\newblock \href {https://doi.org/10.1007/978-3-030-64276-1_11}
  {\path{doi:10.1007/978-3-030-64276-1_11}}.

\bibitem[TV16]{tv16}
Peter Thiemann and Vasco~T. Vasconcelos.
\newblock Context-free session types.
\newblock In Jacques Garrigue, Gabriele Keller, and Eijiro Sumii, editors, {\em
  Proceedings of the 21st {ACM} {SIGPLAN} International Conference on
  Functional Programming, {ICFP} 2016, Nara, Japan, September 18-22, 2016},
  pages 462--475. {ACM}, 2016.
\newblock \href {https://doi.org/10.1145/2951913.2951926}
  {\path{doi:10.1145/2951913.2951926}}.

\bibitem[Wad14]{wad14}
Philip Wadler.
\newblock Propositions as sessions.
\newblock {\em J. Funct. Program.}, 24(2-3):384--418, 2014.
\newblock \href {https://doi.org/10.1017/S095679681400001X}
  {\path{doi:10.1017/S095679681400001X}}.

\bibitem[Zie87]{zie87}
Wieslaw Zielonka.
\newblock Notes on finite asynchronous automata.
\newblock {\em {RAIRO} Theor. Informatics Appl.}, 21(2):99--135, 1987.
\newblock \href {https://doi.org/10.1051/ita/1987210200991}
  {\path{doi:10.1051/ita/1987210200991}}.

\end{thebibliography}

\end{document}

%%% Local Variables:
%%% mode: latex
%%% TeX-master: t
%%% TeX-master: t
%%% TeX-master: t
%%% End: